\theoremstyle{plain}
\newtheorem{theorem}{Theorem}[section]
\newtheorem{proposition}[theorem]{Proposition}
\theoremstyle{definition}
\theoremstyle{remark}
\def\x{{\mathbf x}}
\def\y{{\mathbf y}}
\def\f{{\mathbf f}}
\def\R{{\mathbb{R}}}
\newcounter{issue}
\newenvironment{myitem}
{%
    \begin{list}{}{%
        \usecounter{issue}
        \setlength{\labelwidth}{0pt}
        \setlength{\itemindent}{0pt}
        \setlength{\itemsep}{\parsep}
        \setlength{\leftmargin}{0pt}
        \setlength{\labelsep}{0.5em}
    }
}
{\end{list}}
\newcommand{\issueitem}[2][]{%
    \item \textbf{Issue~\#\arabic{issue}\ifx&#1&\else#1\fi~(#2):}
}
\icmltitlerunning{Robust and Conjugate Spatio-Temporal Gaussian Processes}
\begin{document}

\twocolumn[
\icmltitle{Robust and Conjugate Spatio-Temporal Gaussian Processes}



\icmlsetsymbol{equal}{*}

\begin{icmlauthorlist}
\icmlauthor{William Laplante}{ucl2,ucl,turing}
\icmlauthor{Matias Altamirano}{ucl}
\icmlauthor{Andrew Duncan}{turing,icl}
\icmlauthor{Jeremias Knoblauch}{ucl}
\icmlauthor{Fran\c{c}ois-Xavier Briol}{ucl}
\end{icmlauthorlist}

\icmlaffiliation{ucl2}{Department of Physics and Astronomy, University College London, London, United Kingdom}
\icmlaffiliation{ucl}{Department of Statistical Science, University College London, London, United Kingdom}
\icmlaffiliation{icl}{Department of Mathematics, Imperial College London, London, United Kingdom}
\icmlaffiliation{turing}{The Alan Turing Institute, London, United Kingdom}

\icmlcorrespondingauthor{William Laplante}{william.laplante.24@ucl.ac.uk}

\icmlkeywords{Machine Learning, ICML}

\vskip 0.3in
]



\printAffiliationsAndNotice{}  

\begin{abstract}
State-space formulations allow for Gaussian process (GP) regression with linear-in-time computational cost in spatio-temporal settings, but performance typically suffers in the presence of outliers. In this paper, we adapt and specialise the \emph{robust and conjugate GP (RCGP)} framework of \citet{altamirano2024robustconjugategaussianprocess} to the spatio-temporal setting. In doing so, we obtain an outlier-robust spatio-temporal GP with a computational cost comparable to classical spatio-temporal GPs. We also overcome the three main drawbacks of RCGPs: their unreliable performance when the prior mean is chosen poorly, their lack of reliable uncertainty quantification, and the need to carefully select a hyperparameter by hand. We study our method extensively in finance and weather forecasting applications, demonstrating that it provides a reliable approach to spatio-temporal modelling in the presence of outliers.
\end{abstract}

\begin{figure}[t]
    \centering
    \includegraphics[width=\linewidth]{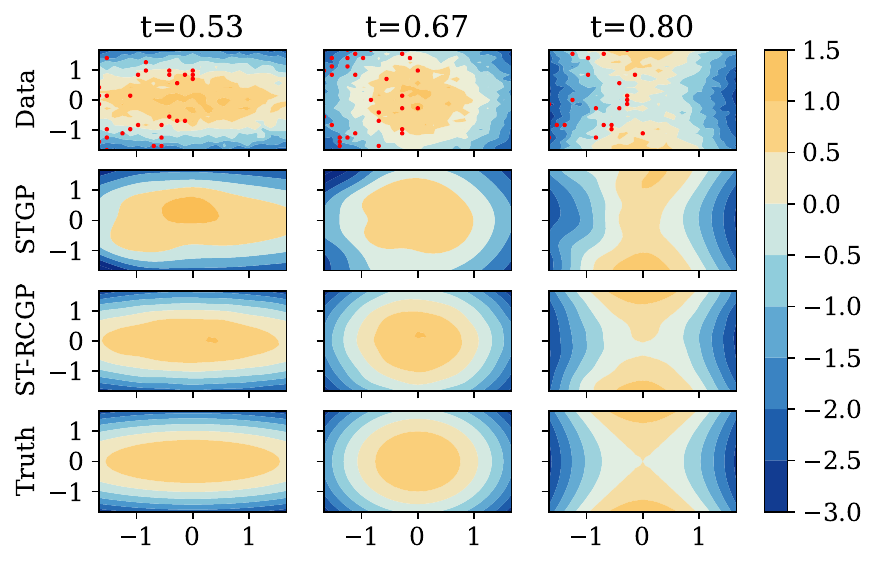}
    \vspace{-7mm}
    \caption{\textit{Spatio-temporal GPs in the presence of outliers}. The top row shows the observed data as a function of spatial covariates $\mathbf{s} \in \mathbb{R}^2$. Outliers are highlighted in red and are uniformly distributed on $[-8, -6]\cup [6, 8]$. The second row gives the fit of a regular STGP, whilst the third row gives our proposed ST-RCGP fit. The last row shows the true latent function $f(s_1,s_2, t)=\sin(2 \pi t) s_1^2 + \cos(2 \pi t)  s_2^2$. Further details are provided in \Cref{appendix:synthetic_spatio_temporal}.
    }
    \label{fig: simulated-spatio-temporal}
\end{figure}

\section{Introduction}\label{sec:introdution}
Gaussian processes (GPs; \citet{williams2006gaussian}) are flexible probabilistic models used in a vast class of problems from regression \cite{williams2006gaussian} to emulation \cite{Santner2018} and optimisation \cite{garnett2023bayesian}. GPs originated in spatial statistics, where their use for regression was known as kriging \citep{krige1951statistical, Stein1990}, but they have more recently been used widely in spatio-temporal settings, including in epidemiology \cite{senanayake2016predicting}, neuroimaging  \cite{hyun2016stgp}, object tracking \cite{aftab2019spatio}, and psychological studies \cite{kupilik2018spatio}. Their popularity arises from their ability to encode spatial and temporal properties such as smoothness, periodicity, and sparsity \cite{duvenaud_2014}, allowing them to model phenomena such as local weather patterns or seasonality. Crucially, GPs also have an exact, closed-form posterior when using a Gaussian likelihood. However, naive implementations have a cubic computational cost in the number of data points $N$, limiting their scalability.
To address this issue, a plethora of approximations have been proposed \cite{drineas2005nystrom,titsias2009variational,hensman2013gaussian,wilson2015kernel}. While effective, these do not typically recover the exact GP, require careful tuning, and can degrade performance for complex datasets \cite{bauer2016understanding, pleiss2018constant}.

In spatio-temporal settings, an alternative strategy is to reformulate the GP as a state-space model (SSM) \cite{reece2010, hartikainen2010kalman, sarkka2012infdimKFspatiotempGPs, solin2016stochastic, nickisch2018TemporalGPsforNonGaussian, hamelijnck2021spatio}. This gives rise to a class of models---\emph{spatio-temporal Gaussian processes (STGPs)}---with linear cost in the number of temporal observations. But, as with standard GPs, STGPs lack robustness to model misspecification, such as with outliers arising from extreme events \cite{heaton2011spatio}, measurement errors that are spatially correlated \cite{tadayon2019non}, and other heterogeneities \cite{fonseca2023dynamical}. \cref{fig: simulated-spatio-temporal} illustrates this in the case of outliers. Clearly, the STGP ($2^{\text{nd}}$ row) fails to align with the ground truth ($4^{\text{th}}$ row). 

To resolve this issue, existing work on STGPs has focused on using likelihoods corresponding to distributions which are more expressive than Gaussians, such as mixtures or heavy-tailed distributions. Doing so breaks conjugacy, and the posterior must typically be approximated  \cite{hartikainen2011sparse,solin2014gaussian,hamelijnck2021spatio}. We refer the reader to \citet{nickisch2018TemporalGPsforNonGaussian} for a comprehensive overview and to \citet{wilkinson2023bayes} for a Python package providing those algorithms. Although these methods have been efficiently implemented, they typically use additional optimisation steps at each time-point and are hence significantly more costly than conjugate STGPs. We also highlight a small body of work that uses outlier-rejection Kalman filters with STGPs to improve robustness in inference tasks \cite{bock2022online, waxman2024gaussian}. While these outlier-rejection methods offer robustness at lower computational cost compared to non-conjugate STGPs, they are generally considered less expressive and not as strongly supported by theoretical foundations.

Recently, \citet{altamirano2024robustconjugategaussianprocess} introduced a method, called \emph{robust and conjugate Gaussian processes (RCGPs)}, that uses generalised Bayesian inference \cite{bissiri2016general, knoblauch2022optimization} to confer robustness to standard GPs. Their approach is highly attractive since it provably provides robustness to outliers whilst maintaining conjugacy, but it shares the cubic cost of GPs. Existing work on RCGPs also has three main limitations: RCGPs underperform when the prior mean is chosen poorly (see \citet{ament2024robustgaussianprocessesrelevance} and the Appendix of \citet{altamirano2024robustconjugategaussianprocess}), their uncertainty quantification properties have not been well-studied, and they have an additional hyperparameter compared to standard GPs.
Further, the existing heuristic for tuning this parameter relates to the  proportion of outliers in the data, which is unknown in practice and generally has to be hand-picked on a case-by-case basis.

In this paper, we show how to refine and specialise the RCGP framework for spatio-temporal data. Our algorithm, denoted \emph{spatio-temporal RCGP (ST-RCGP)}, inherits the computational and memory efficiency of STGPs, as well as the robustness properties of RCGPs (see $3^{\text{rd}}$ row of \Cref{fig: simulated-spatio-temporal}). The sequential aspect of the state-space formulation also allows us to overcome the three main limitations of RCGPs (sensitivity to prior mean, lack of reliable uncertainty quantification, and the additional hyperparameter). 
Overall, we observe that the ST-RCGP provides inferences comparable to state-of-the-art non-Gaussian STGPs with a computational cost similar to classical STGPs.

\section{Background}\label{Sec: Background}

\paragraph{GP Regression} Let $\{\mathbf{x}_k, y_k\}_{k=1}^N$ be observations,  where $\mathbf{x}_k \in \mathcal{X} \subseteq  \mathbb{R}^d$ are covariates and $y_k \in \mathcal{Y} \subseteq \mathbb{R}$ are responses. For observation noise $\epsilon_k$, GP regression considers  
\begin{align}
    y_k = f(\mathbf{x}_k) + \epsilon_k, \quad \text{for  } k=1,\ldots,N  \label{eq: GP regression}
\end{align}
where the latent function $f : \mathcal{X}\rightarrow \mathbb{R}$ is modelled by a GP prior $f \sim \mathcal{GP}(m, \kappa)$ with mean $m: \mathcal{X} \rightarrow \mathbb{R}$ and covariance (or kernel) $\kappa: \mathcal{X} \times \mathcal{X}\rightarrow \mathbb{R}$, such that for any inputs $ \mathbf{X} = (\mathbf{x}_1, ..., \mathbf{x}_N)^\top$, the vector $\mathbf{f} = (f(\mathbf{x}_1), ..., f(\mathbf{x}_N))^\top$ is distributed as a $N$-dimensional Gaussian $\mathcal{N}(\mathbf{m}, \mathbf{K})$ with $\mathbf{m} = (m(\mathbf{x}_1), ..., m(\mathbf{x}_N))^\top$ and $(\mathbf{K})_{ij} = \kappa(\mathbf{x}_i, \mathbf{x}_j)$. The kernel $\kappa$ typically depends on parameters $\bm{\theta} \in \Theta$ that we omit from the notation for brevity.  When the observations $\mathbf{y} = (y_1, ..., y_N)^\top$ have independent Gaussian noise  $(\epsilon_1,\ldots,\epsilon_N)^{\top} \sim \mathcal{N}(0, \sigma^2 \mathbf{I}_N)$, the posterior predictive for $f_\star = f(x_\star)$ at  $x_\star \in \mathcal{X}$ is $f_\star|\mathbf{y}, \mathbf{X} \sim \mathcal{N}(\mu_{\text{GP}}^\star, \Sigma_{\text{GP}}^\star)$ with 
\begin{equation}
\begin{split}
    &\mu_{\text{GP}}^\star = m_\star + \mathbf{k}_\star^\top \left(\mathbf{K} + \sigma^{2}\textcolor{BurntOrange}{\mathbf{I}_N} \right)^{-1} \left( \mathbf{y} - \textcolor{BurntOrange}{\mathbf{m}} \right) \\
    &\Sigma_{\text{GP}}^\star = k_{\star\star} - \mathbf{k}_{\star}^\top(\mathbf{K} + \sigma^{2}\textcolor{BurntOrange}{\mathbf{I}_N})^{-1} \mathbf{k}_\star \label{eq: std_gp_posterior}
\end{split}
\end{equation}
where $\mathbf{I}_N$ is the $N \times N$ identity matrix, $\mathbf{k}_\star = (\kappa (x_\star, x_1), ...,\kappa(x_\star, x_N))^\top$, $k_{\star \star}=\kappa(x_\star, x_\star)$, and $m_\star=m(x_\star)$. 
The orange colouring can be ignored for now, but will be used to highlight differences with RCGPs. 
To obtain this mean and covariance, we must invert an $N \times N$ matrix---an operation with computational complexity $\mathcal{O}(N^3)$.

\paragraph{State-Space Formulation} \label{paragraph: spatio-temporal GPs}
An alternative approach, which has linear-in-time cost, is to use a state-space representation. Consider \Cref{eq: GP regression} with spatio-temporal inputs. At time $t_k \in \mathcal{T}\subseteq \mathbb{R}$, we now have inputs $\mathbf{x}_{k,j} = (\mathbf{s}_j, t_k) \in \mathcal{X} =\mathcal{S} \times \mathcal{T}$ on a spatial grid $\mathbf{S} = (\mathbf{s}_{1},\ldots, \mathbf{s}_{n_s})^\top \in \mathcal{S}^{n_s} \subseteq \mathbb{R}^{n_s \times d_s}$ with $n_s$ points and spatial dimensionality $d_s$ (i.e. $d=d_s+1$). The observations corresponding to $\mathbf{x}_k = (\mathbf{x}_{k,1}, \ldots, \mathbf{x}_{k,n_s})^\top$ are denoted $\mathbf{y}_k = (y_{k,1}, \ldots, y_{k,n_s})^\top \in \mathbb{R}^{n_s}$, leading to a total number of data points $N = n_t n_s$ where $n_t$ is the number of time steps. 
If standard GPs were used here, the cost would be $\mathcal{O}(n_t^3 n_s^3)$, which may become impractical.

A solution to this issue is to reformulate GPs as SSMs. We first collect $\nu$ partial derivatives of $f$ with respect to time in a state vector $\mathbf{z} =
    \mathbf{z}(\mathbf{s}, t) \in \mathbb{R}^{(\nu+1)}$  where $ (\mathbf{z}(\mathbf{s}, t))_{i} = \frac{\partial^{i-1}}{\partial t^{i-1}}f(\mathbf{s}, t)$ for $i=1,...,\nu+1$.
We assume a stationary and separable kernel so that  $\kappa(\mathbf{s}, t; \mathbf{s}', t') =  \kappa_s(\mathbf{s} - \mathbf{s}')\kappa_t(t - t')$, where $\kappa_s$ and $\kappa_t$ are spatial and temporal kernels respectively. For a large class of kernels \citep{solin2016stochastic}, we can represent the GP prior as the solution to a stochastic differential equation (SDE):
\begin{equation}
    \begin{split}
        \frac{\partial \mathbf{z}(\mathbf{s}, t)} {\partial t} = \mathbf{F}_t\mathbf{z}(\mathbf{s}, t) + \mathbf{L}_t \mathbf{w}(\mathbf{s}, t),
    \end{split} \label{eq: cont_spatiotemp_sde}
\end{equation}
where $\mathbf{F}_t \in \mathbb{R}^{(\nu+1)\times (\nu+1)}$, $\mathbf{L}_t \in \mathbb{R}^{(\nu+1) \times 1}$, and $\mathbf{w}(\mathbf{s}, t)$ is a spatio-temporal white noise process corresponding to the derivative of Brownian motion with spectral density $\mathbf{Q}_{c,t} \in \mathbb{R}^{1 \times (\nu+1)}$ \cite{solin2016stochastic}. 
The matrices $\mathbf{F}_t$, $\mathbf{L}_t$, $\mathbf{Q}_{c,t}$ and the constant $\nu$ depend on $\kappa$. 
\Cref{eq: cont_spatiotemp_sde} defines a continuous latent process, but given a finite collection of points, this becomes a SSM with states $\mathbf{z}_k = (\mathbf{z}(\mathbf{s}_1,t_k), ..., \mathbf{z}(\mathbf{s}_{n_s}, t_k))^\top \in \mathbb{R}^{n_s (\nu + 1)}$ and time steps $\Delta t_k = t_{k} - t_{k-1}$  \cite{hartikainen2010kalman, sarkka2012infdimKFspatiotempGPs, sarkka2019appliedsde}:
\begin{equation}
\begin{split}
        & \mathbf{z}_0 \sim \mathcal{N}(\mathbf{0}, \mathbf{\Sigma}_0), \; \text{and for} \; k>0, \\
        &\mathbf{z}_k = \mathbf{A}_{k-1}\mathbf{z}_{k-1} + \mathbf{q}_{k-1}, \quad \mathbf{q}_{k-1} \sim \mathcal{N}(\mathbf{0}, \mathbf{\Sigma}_{k-1}) \\
        &\mathbf{y}_k = \mathbf{H}\mathbf{z}_k + \mathbf{\epsilon}_k \label{eq: TemporalGPdiscreteSSM}
\end{split}
\end{equation}
where $\mathbf{H} \in \mathbb{R}^{n_s \times n_s(\nu+1)}$ is defined such that $ \mathbf{H}\mathbf{z}_k = \mathbf{f}_k := \left (f(\mathbf{s}_{1},t_k), ..., f(\mathbf{s}_{n_s},t_k) \right)^\top$. The matrices $\mathbf{A}_{k-1}$ and $\mathbf{\Sigma}_{k-1}$ depend on $\mathbf{F}_t$, $\mathbf{L}_t$ and $\mathbf{Q}_{c,t}$ in the SDE formulation. We define them in \cref{appendix:implementation_details}, and provide information on how to compute them in practice for a list of common kernels $\kappa$.

\paragraph{Filtering and Smoothing} \label{paragraph: filt-smooth}
Solving the SSM from \Cref{eq: TemporalGPdiscreteSSM} via sequential inference amounts to first retrieving the filtering distribution $p(\mathbf{z}_k | \mathbf{y}_{1:k})$, and then the smoothing distribution $p(\mathbf{z}_k | \mathbf{y}_{1:N})$ \cite{Sarkka2013filtsmooth}. Taking \Cref{eq: TemporalGPdiscreteSSM}, and assuming the observation noise is Gaussian so that $\epsilon_k \sim \mathcal{N}(0, \sigma^2 \mathbf{I}_{n_s})$, the predict and update equations are \emph{conjugate} and given by the Kalman filter \cite{kalman1960} and Rauch-Tung-Striebel smoother \cite{rauch1965maximum}; see Section 8.2 of \citet{Sarkka2013filtsmooth}. The resulting distribution has densities $p(\mathbf{z}_k | \mathbf{y}_{1:k-1}) = \mathcal{N}(\mathbf{z}_k; \mathbf{m}_{k|k-1}, \mathbf{P}_{k|k-1})$ and $p(\mathbf{z}_k | \mathbf{y}_{1:k}) = \mathcal{N}(\mathbf{z}_k ; \mathbf{m}_{k|k}, \mathbf{P}_{k|k})$, with \textit{predict step}:
\begin{equation}
\begin{split}
    &\mathbf{m}_{k | k-1} := \mathbf{A}_{k-1} \mathbf{m}_{k-1|k-1} \\
    &\mathbf{P}_{k | k-1} := \mathbf{A}_{k-1} \mathbf{P}_{k-1|k-1} \mathbf{A}_{k-1}^\top + \mathbf{\Sigma}_{k-1},  \label{eq: kalman_gp_predicted}
\end{split}
\end{equation}
and \textit{(Bayes) update step}:
\begin{equation}
    \begin{split}
        &\mathbf{P}_{k|k} := \left(\mathbf{P}_{k|k-1}^{-1} + \mathbf{H}^\top \sigma^{-2}\textcolor{Green}{\mathbf{I}_{n_s}} \mathbf{H} \right)^{-1} \\
        &\mathbf{K}_k := \mathbf{P}_{k|k} \mathbf{H}^\top \sigma^{-2}\textcolor{Green}{\mathbf{I}_{n_s}} \\
        &\mathbf{m}_{k|k} := \mathbf{m}_{k|k-1} + \mathbf{K}_k (\mathbf{y}_k - \textcolor{Green}{\hat{\mathbf{f}}_k}), \label{eq: kalman_update}
    \end{split}
\end{equation}
where $\hat{\mathbf{f}}_k:= \mathbf{H} \mathbf{m}_{k|k-1}$, $\mathbf{K}_k \in \mathbb{R}^{n_s(\nu+1) \times n_s}$, $\mathbf{m}_{k|k} \in \mathbb{R}^{n_s(\nu + 1)}$ and $\mathbf{P}_{k|k} \in \mathbb{R}^{n_s(\nu + 1) \times n_s(\nu + 1) }$. The posterior $p(\mathbf{z}_k | \mathbf{y}_{1:N})$ is obtained by marginalising the smoothing distribution, and matches exactly the original GP posterior \cite{solin2016stochastic}. For a new input $(\mathbf{s}_\star, t_\star)$, the predictive is obtained by including $(\mathbf{s}_\star, t_\star)$ in the filtering and smoothing algorithms and predicting without updating.
Importantly, this algorithm only inverts matrices whose sizes scale with $\nu$---where typically $\nu<10$ \cite{hartikainen2010kalman}---or $n_s$, but \textit{not} with $n_t$. In contrast to the standard GP's $\mathcal{O}(n_t^3 n_s^3)$ time and $\mathcal{O}(n_t^2 n_s^2)$ memory cost, the STGP only requires $\mathcal{O}(n_t n_s^3)$ time and $\mathcal{O}(n_t n_s^2)$ memory \cite{solin2016stochastic}.

\begin{figure}
    \centering
    \includegraphics[width=1.\linewidth]{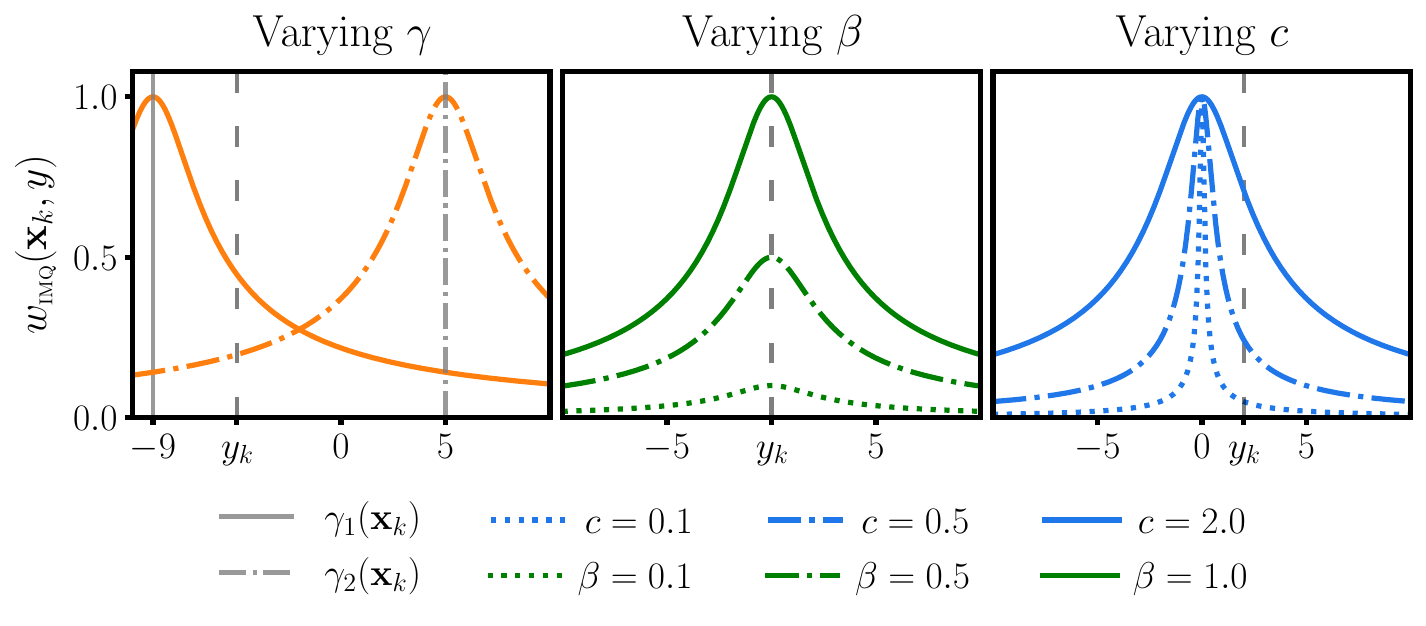}
    \caption{\textit{Behaviour of} $y \mapsto w_\text{IMQ}(\mathbf{x}_k,y)$ \textit{as we vary $c$, $\beta$, and $\gamma$.} We emphasize $| y_k - \gamma(\mathbf{x}_k)|$ and $w_\text{IMQ}(\mathbf{x}_k, y_k)$ with datum $y_k$.}
    \label{fig:IMQ-behaviour}
    \vspace{-5mm}
\end{figure} 

\paragraph{Generalised Bayes for GP Regression} 
Conjugate GPs and STGPs rely on the fragile assumption that $\epsilon_1,\ldots,\epsilon_N$ are Gaussian. 
While non-Gaussian likelihoods can address this, they also break conjugacy---and thus require expensive and potentially inaccurate approximations. 
Fortunately, generalised Bayes (GB) \citep{bissiri2016general, knoblauch2022optimization} can produce robust and conjugate GP posteriors.
GB mitigates model misspecification by constructing posterior distributions through a loss $\mathcal{L}: \mathcal{Y}^N \times \mathcal{Y}^N \times \mathcal{X}^{N} \rightarrow \mathbb{R}$ via
\begin{align}
    p_\mathcal{L} (\mathbf{f}|\mathbf{X}, \mathbf{y}) \propto p(\mathbf{f}|\mathbf{X}) \exp \left (- N \mathcal{L}(\mathbf{y}; \mathbf{f}, \mathbf{X}) \right), \label{eq: gen_bayes_posterior}
\end{align}
 where $\propto$ denotes proportionality and $\mathcal{L}$ is typically an estimator of a statistical divergence between the likelihood model and the data \cite{jewson2018principles}. 
 The loss $\mathcal{L}$ used for RCGPs \cite{altamirano2024robustconjugategaussianprocess} is a modification of the \emph{weighted score-matching divergence}, first proposed by \citet{barp2019minimum}, to regression. It generalises score-matching \cite{hyvarinen2005scorematching}, and is a special case of the framework in \citet{lyu2012interpretation,Yu2019,xu2022generalized}. 
 Importantly, it uses a weight function $w: \mathcal{X} \times \mathcal{Y} \rightarrow \mathbb{R}\setminus\{0\}$ that down-weights unreasonably extreme data points. This choice yields the RCGP  posterior predictive  $f_\star   \sim \mathcal{N}(\mu_{\text{RCGP}}^\star,\Sigma_{\text{RCGP}}^\star)$ with
\begin{equation}
\begin{split}
    &\mu_{\text{RCGP}}^\star = m_\star + \mathbf{k}_\star^\top\left(\mathbf{K} + \sigma^{2} \textcolor{BurntOrange}{\mathbf{J}_\mathbf{w}} \right)^{-1} \left( \mathbf{y} - \textcolor{BurntOrange}{\mathbf{m}_\mathbf{w}} \right) \\
    &\Sigma_{\text{RCGP}}^\star = k_{\star \star} - \mathbf{k}_\star^\top (\mathbf{K} + \sigma^{2}\textcolor{orange}{\mathbf{J}_\mathbf{w}})^{-1} \mathbf{k}_\star \label{eq: RCGP_gp_posterior},
\end{split}
\end{equation}
where $\mathbf{w} = (w(\mathbf{x}_1, y_1), ..., w(\mathbf{x}_N, y_N))^\top$, and 
\begin{align*}
\mathbf{m}_\mathbf{w} & = \mathbf{m} + \sigma^2 \nabla_y \log (\mathbf{w}^2) \\
\mathbf{J}_\mathbf{w} & = \text{diag}\left(\frac{\sigma^2}{2} \mathbf{w}^{-2}\right).
\end{align*}
The expressions look similar to those in \Cref{eq: std_gp_posterior}, with differences highlighted in orange. 
This also reveals that standard GPs are recovered for the constant weight $w(\mathbf{x},y)=\sigma/\sqrt{2}$. The similar forms also suggest that, much like classical GPs, a naive implementation of RCGPs has a complexity of $\mathcal{O}(N^3)$ due to the associated matrix inversion. However, under mild conditions on $w$, RCGPs benefit from provable robustness to outliers  \citep[see Proposition 3.2 of][] {altamirano2024robustconjugategaussianprocess}.

\begin{figure*}[t!]
    \centering
\includegraphics[width=0.9\linewidth]{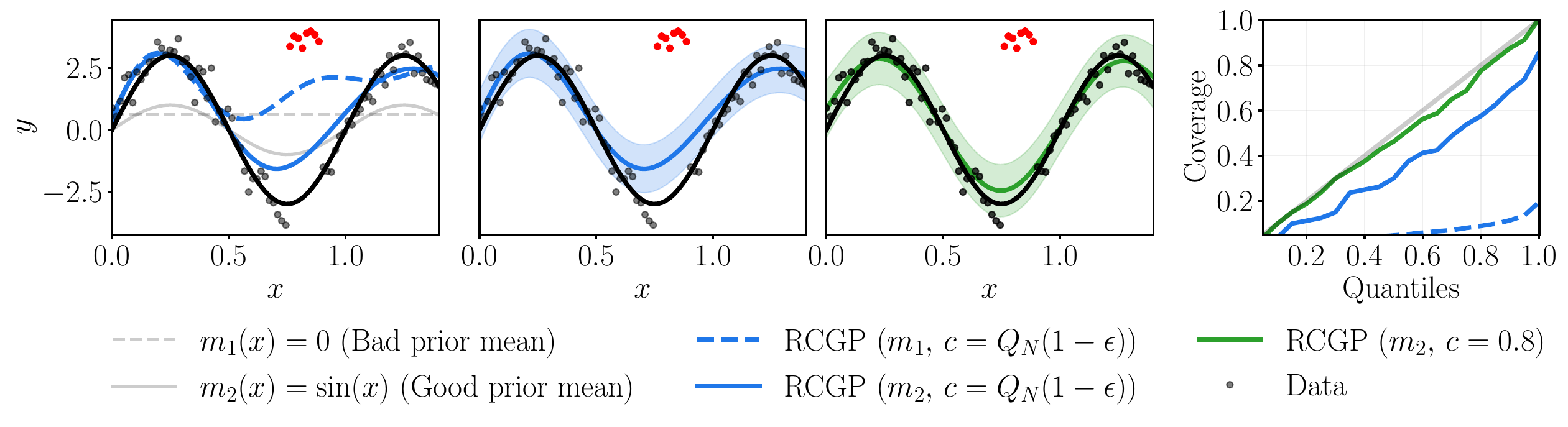}
    \vspace{-5mm}
    \caption{\textit{Failure Modes with Existings RCGPs}. We generated $N=80$ data points from the true function $f(x)=3\sin(2\pi x)$ using $\mathcal{N}(0, 0.5)$ additive noise, then corrupted $\epsilon=10\%$ of observations around $x=0.76$. The leftmost figure shows that RCGP's posterior mean is strongly affected by the choice of prior mean, highlighting \textbf{Issue \#\ref{issue:priormean}}. In the two middle plots, we demonstrate that the recommended $c=Q_N(1-\epsilon)=3.7$ (from \citet{altamirano2024robustconjugategaussianprocess}) is \emph{far} from optimal---emphasizing \textbf{Issue \#\ref{issue:hyperparameter}}---even when fitting simple data and knowing the true proportion of outliers $\epsilon$. These two plots also show RCGP's inaccurate uncertainty quantification when  hyperparameters are chosen poorly, which is further supported by the rightmost coverage plot and illustrates \textbf{Issue \#\ref{issue:uncertainty}
    }. 
    } 
    \label{fig:rcgp-issues}
\end{figure*}

In prior work, \citet{altamirano2024robustconjugategaussianprocess} suggested to take $w$ of the form of an inverse multi-quadric (IMQ) kernel depending on a \emph{centering function} $\gamma:\mathcal{X}\rightarrow \mathbb{R}$, a \emph{shrinking function} $c:\mathcal{X}\rightarrow \mathbb{R}$, and a 
constant $\beta > 0$:
\begin{equation}
    w_\text{IMQ}(\mathbf{x}, y) = \beta \left ( 1 + \frac{(y-\gamma(\mathbf{x}))^2}{c(\mathbf{x})^2}\right)^{-\frac{1}{2}} \label{eq: IMQ}
\end{equation} 
The IMQ has been used frequently to robustify score-based divergences  \cite{barp2019minimum,Key2021,matsubara2022robust,altamirano2023robustscalablebayesianonline,Matsubara2024discrete,Liu2024}. It is a bump function centred at $\gamma(\mathbf{x})$ that grows as $|y - \gamma(\mathbf{x})|$ decreases, and shrinks as $|y - \gamma(\mathbf{x})|$ increases (see \cref{fig:IMQ-behaviour}). The centering function $\gamma(\mathbf{x})$ determines where the bump is maximised, and any $y$ far from $\gamma(\mathbf{x})$ is down-weighted (see \cref{fig:IMQ-behaviour}). 
The shrinking function $c(\mathbf{x})$ determines how quickly we down-weight observations that deviate from $\gamma(\mathbf{x})$, and $\beta$ determines the maximum of the weight function (see \cref{fig:IMQ-behaviour}). 
Note that $\beta$ is a multiplicative constant which is equivalent to a GB learning rate \cite{Wu2023}.

While IMQ-based weights have shown much promise, they rely on three hyperparameters that can be difficult to select: $\gamma$, $\beta$, and $c$. 
\citet{altamirano2024robustconjugategaussianprocess} proposed centering at the prior mean via $\gamma(\mathbf{x})=m(\mathbf{x})$, and setting $\beta = \sigma/\sqrt{2}$ to guarantee that observations are never assigned a larger weight than they would in a standard GP. 
Finally, they suggested a heuristic choice for $c$ based on the assumed proportion of outliers: if $\epsilon \in [0,1]$ is the expected proportion of outliers, they suggest setting the shrinking function to the constant $c(\mathbf{x}) = Q_N(1-\epsilon)$, for $Q_N(1-\epsilon)$ the $(1-\epsilon)^{\text{th}}$ quantile of $\{|y_k - \gamma(\mathbf{x}_k)|\}_{k=1}^N$. 

These choices can provide good performance in many settings, but introduce several failure modes:
\vspace{-1mm}
\begin{myitem}
    \issueitem[]{Sensitivity to the prior mean $m$}\label{issue:priormean}    \citet{altamirano2024robustconjugategaussianprocess, ament2024robustgaussianprocessesrelevance} pointed to the poor performance of RCGPs when $m$ is not chosen carefully. Indeed, setting $\gamma(\mathbf{x}):=m(\mathbf{x})$ can be undesirable when $m$ is not a good approximation of $f$. In that case, observations close to $m(\mathbf{x})$ but far from $f(\mathbf{x})$ will have large weights despite likely being outliers, whereas observations close to $f(\mathbf{x})$ but far from $m(\mathbf{x})$ will be down-weighted despite likely having little noise. We show this in \cref{fig:rcgp-issues}. \citet{altamirano2024robustconjugategaussianprocess} suggest fixing this problem by choosing a prior mean using a simpler regression model, but this solution requires an additional fit of the data, which can be cumbersome.
    \vspace{-1mm}
    \issueitem[]{Poor uncertainty quantification}\label{issue:uncertainty}
    The values of $\beta$ and $c$ have a significant influence on the predictive variance, but it is not clear that the suggested choices are sensible when it comes to uncertainty quantification, and \citet{altamirano2024robustconjugategaussianprocess} did not study this question. This could mean that RCGPs are consistently under- or overconfident---a problem we also illustrate in \cref{fig:rcgp-issues}.  \citet{sinaga2024computationaware} proposed a computation-aware extension of RCGPs to improve uncertainty quantification, but their approach still depends on having good values for $\beta$ and $c$. 
    \vspace{-1mm}
    \issueitem[]{Selection of shrinking function $c$}\label{issue:hyperparameter}
    The proposed heuristic for selecting $c$ requires the user to speculate on the proportion $\epsilon$ of outliers. Not only is this difficult to do reliably if we do not know $\epsilon$, but, as shown in \cref{fig:rcgp-issues}, the approach can also fail to select good values of $c$ even when the correct $\epsilon$ is used. This can lead to unreliable posterior estimates and either under- or overconfidence in the RCGP posterior. In addition, setting $\mathbf{x} \mapsto c(\mathbf{x})$ to be constant can be sub-optimal when outliers are clustered in time or space, in which case adapting the shrinking function according to its input could lead to improved uncertainty quantification.
\end{myitem}

\section{Methodology}\label{Sec: Methods}

\paragraph{Spatio-Temporal RCGPs} We now show that, similarly to GPs, inference for RCGPs can be performed using an SSM representation and filtering/smoothing updates. To achieve this, we use the weighted score-matching loss
 \vspace{-1ex}
\begin{align}
        \mathcal{L}(\mathbf{x}_k, \mathbf{y}_k,\mathbf{z}_k) = \frac{1}{n_s}\sum_{j=1}^{n_s}  \| w(\mathbf{x}_{k,j},y_{k,j}) s_{f_{k,j}}(y_{k,j})\|_2^2  \nonumber\\ +\; 2 \nabla_{y} \cdot \left (w(\mathbf{x}_{k,j},y_{k,j})^2 s_{f_{k,j}}(y_{k,j})  \right),\label{eq:SM_loss}
\end{align}
where $s_{f}(y) := (f - y)/\sigma$ is the score of the Gaussian error, and the dependency on $\mathbf{z}_k$ is through $f_{k,j}:=(\mathbf{H}\mathbf{z}_k)_j$. Then, the generalised filtering posterior equations are
\begin{equation*}
\begin{split}
& p_{\mathcal{L}}(\mathbf{z}_{k} | \mathbf{y}_{1:k}) \propto p_\mathcal{L}(\mathbf{z}_k | \mathbf{y}_{1:k-1}) \exp\left(-n_s \mathcal{L}(\mathbf{x}_k, \mathbf{y}_k, \mathbf{z}_k)\right), \\
& p_\mathcal{L}(\mathbf{z}_k | \mathbf{y}_{1:k-1}) = \int p(\mathbf{z}_k | \mathbf{z}_{k-1}) p_{\mathcal{L}}(\mathbf{z}_{k-1} | \mathbf{y}_{1:k-1})d\mathbf{z}_{k-1},
\end{split}
\end{equation*}
for $p(\mathbf{z}_k | \mathbf{z}_{k-1})$ defined as for \cref{eq: TemporalGPdiscreteSSM}.
 \begin{proposition}
 Suppose we model $\epsilon_1,\ldots,\epsilon_N \overset{\text{iid}}{\sim} \mathcal{N}(0,\sigma^2)$ and choose $w: \mathcal{X} \times \mathcal{Y} \rightarrow \mathbb{R}\setminus\{0\}$ such that $y \mapsto w(y,x)$ is differentiable.  
Then, the generalised posterior $p_\mathcal{L}(\mathbf{z}_k|\mathbf{y}_{1:k})$ associated to the SSM formulation of RCGPs is a $\mathcal{N}(\mathbf{z}_k; \mathbf{m}^\text{GB}_{k|k}, \mathbf{P}_{k|k}^{\text{GB}})$ and inference can be performed through Kalman filtering/smoothing with predict step
\begin{equation*}
    \begin{split}
        &\mathbf{m}_{k | k-1} := \mathbf{A}_{k-1} \mathbf{m}^{\text{GB}}_{k-1|k-1}, \\
        &\mathbf{P}_{k | k-1} := \mathbf{A}_{k-1} \mathbf{P}^{\text{GB}}_{k-1|k-1} \mathbf{A}_{k-1}^\top + \mathbf{\Sigma}_{k-1},\\
    \end{split}
\end{equation*}
and  update step
\begin{equation*}
    \begin{split}
        &\mathbf{P}_{k|k}^{\text{GB}} := \left(\mathbf{P}_{k|k-1}^{-1} + \mathbf{H}^\top \sigma^{-2}\textcolor{Green}{\mathbf{J}_{\mathbf{w}_k}^{-1}} \mathbf{H} \right)^{-1}, \\
        &\mathbf{K}_k^{\text{GB}} := \mathbf{P}_{k|k}^{\text{GB}} \mathbf{H}^\top \sigma^{-2} \textcolor{Green}{\mathbf{J}_{\mathbf{w}_k}^{-1}}, \\
        &\mathbf{m}_{k|k}^{\text{GB}} := \mathbf{m}_{k|k-1} + \mathbf{K}_k^{\text{GB}} (\mathbf{y}_k - \textcolor{Green}{\hat{\mathbf{f}}_{w_k}}),
    \end{split}
\end{equation*}
where $\hat{\mathbf{f}}_{w_k}:= \hat{\mathbf{f}}_k + \sigma^2  \nabla_y \log(\mathbf{w}_k^2)$ and $\mathbf{J}_{\mathbf{w}_k} := \text{diag}(\frac{\sigma^2}{2} \mathbf{w}^{-2}_k)$. Moreover, the associated predictive is
\begin{equation*}
\begin{split}
    p_\mathcal{L}(\mathbf{y}_k|\mathbf{y}_{1:k-1}) &= \int p(\mathbf{y}_k | \mathbf{z}_k)  p_\mathcal{L}(\mathbf{z}_k | \mathbf{y}_{1:k-1}) d\mathbf{z}_k \\
    &=\mathcal{N}(\mathbf{y}_k; \hat{\mathbf{f}}_k, \hat{\mathbf{S}}_k),
\end{split}
\end{equation*}
where $p(\mathbf{y}_k | \mathbf{z}_k) = \mathcal{N}(\mathbf{y}_k ; \mathbf{z}_k, \sigma^2 \mathbf{I}_{n_s})$, $\hat{\mathbf{f}}_k:= \mathbf{H} \mathbf{m}_{k|k-1}$, and $\hat{\mathbf{S}}_k:= \mathbf{H}\mathbf{P}_{k|k-1}\mathbf{H}^\top + \sigma^2 \mathbf{I}_{n_s}$.
\label{prop:ST-RCGP}
 \end{proposition}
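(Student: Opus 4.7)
The plan is to proceed by induction on $k$, showing at each step that the generalised filtering distribution $p_\mathcal{L}(\mathbf{z}_k\mid\mathbf{y}_{1:k})$ is Gaussian with the stated parameters, and then reading off the predictive from the linear-Gaussian observation model. The base case is the Gaussian initial state $\mathbf{z}_0 \sim \mathcal{N}(\mathbf{0}, \mathbf{\Sigma}_0)$. The inductive step splits cleanly into a predict and an update step.

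The predict step is immediate: given the inductive hypothesis $p_\mathcal{L}(\mathbf{z}_{k-1}\mid\mathbf{y}_{1:k-1}) = \mathcal{N}(\mathbf{m}_{k-1|k-1}^{\text{GB}}, \mathbf{P}_{k-1|k-1}^{\text{GB}})$ and the linear-Gaussian transition $\mathbf{z}_k = \mathbf{A}_{k-1}\mathbf{z}_{k-1} + \mathbf{q}_{k-1}$, standard Gaussian marginalisation yields $p_\mathcal{L}(\mathbf{z}_k\mid\mathbf{y}_{1:k-1}) = \mathcal{N}(\mathbf{m}_{k|k-1}, \mathbf{P}_{k|k-1})$ with the formulas stated in the proposition. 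This step does not interact with the weight function at all.

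The core of the argument is the update step, where I would rewrite the generalised likelihood factor $\exp(-n_s \mathcal{L}(\mathbf{x}_k, \mathbf{y}_k, \mathbf{z}_k))$ as a Gaussian in $\mathbf{H}\mathbf{z}_k$. Because $(\mathbf{x}_k, \mathbf{y}_k)$ are fixed data, the only dependence on $\mathbf{z}_k$ enters through $f_{k,j} = (\mathbf{H}\mathbf{z}_k)_j$. Using $s_{f_{k,j}}(y_{k,j}) = (f_{k,j} - y_{k,j})/\sigma$, the first term of $\mathcal{L}$ is quadratic in $f_{k,j}$ with coefficient $w_{k,j}^2/\sigma^2$, while expanding the divergence as $\nabla_y \cdot (w^2 s_{f_{k,j}}) = (\partial_y w_{k,j}^2) s_{f_{k,j}} + w_{k,j}^2 \, \partial_y s_{f_{k,j}}$ produces a term linear in $f_{k,j}$ plus a term independent of $\mathbf{z}_k$. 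Completing the square in $f_{k,j}$ shows that, up to an additive constant independent of $\mathbf{z}_k$, $-n_s\mathcal{L}$ equals the log density of a surrogate Gaussian observation model $\tilde{\mathbf{y}}_k = \mathbf{H}\mathbf{z}_k + \tilde{\boldsymbol{\epsilon}}_k$ with $\tilde{\boldsymbol{\epsilon}}_k \sim \mathcal{N}(\mathbf{0}, \sigma^2 \mathbf{J}_{\mathbf{w}_k})$, evaluated at the effective observation $\tilde{\mathbf{y}}_k = \mathbf{y}_k - \sigma^2 \nabla_y \log \mathbf{w}_k^2$. The innovation for this surrogate model is precisely $\tilde{\mathbf{y}}_k - \mathbf{H}\mathbf{m}_{k|k-1} = \mathbf{y}_k - \hat{\mathbf{f}}_{w_k}$.

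With the surrogate likelihood identified, the update step reduces to standard Gaussian–Gaussian conjugacy between the predicted prior $\mathcal{N}(\mathbf{m}_{k|k-1}, \mathbf{P}_{k|k-1})$ and this surrogate likelihood, which produces the precision-form Kalman update for $\mathbf{m}_{k|k}^{\text{GB}}, \mathbf{P}_{k|k}^{\text{GB}}, \mathbf{K}_k^{\text{GB}}$ exactly as claimed, with $\mathbf{H}^\top \sigma^{-2} \mathbf{J}_{\mathbf{w}_k}^{-1} \mathbf{H}$ arising from the inverse surrogate noise covariance. For the predictive $p_\mathcal{L}(\mathbf{y}_k\mid\mathbf{y}_{1:k-1})$, the integrand uses the \emph{original} Gaussian observation model $p(\mathbf{y}_k \mid \mathbf{z}_k) = \mathcal{N}(\mathbf{H}\mathbf{z}_k, \sigma^2 \mathbf{I}_{n_s})$ (not the surrogate one), so the result follows from direct linear-Gaussian marginalisation against $\mathcal{N}(\mathbf{m}_{k|k-1}, \mathbf{P}_{k|k-1})$, giving $\mathcal{N}(\hat{\mathbf{f}}_k, \hat{\mathbf{S}}_k)$. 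The main obstacle is the completion-of-squares identification: one must carefully verify that all terms not quadratic in $\mathbf{z}_k$ absorb into the normalising constant, and that the signs and powers of $\sigma$ in the mean shift $\sigma^2 \nabla_y \log \mathbf{w}_k^2$ and in $\mathbf{J}_{\mathbf{w}_k} = \text{diag}(\sigma^2 \mathbf{w}_k^{-2}/2)$ reproduce exactly the updates stated in the proposition.
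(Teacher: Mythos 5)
Your proposal is correct and follows essentially the same route as the paper's proof: completing the square in $\mathbf{f}_k = \mathbf{H}\mathbf{z}_k$ to recast $\exp(-n_s\mathcal{L})$ as a surrogate Gaussian likelihood with noise covariance $\sigma^2\mathbf{J}_{\mathbf{w}_k}$ and shifted observation $\mathbf{y}_k - \sigma^2\nabla_y\log(\mathbf{w}_k^2)$ is precisely the paper's identification of the quadratic-loss quantities $\mathbf{R}_k^{-1} = \tfrac{2}{\sigma^4}\mathrm{diag}(\mathbf{w}_k^2)$ and $\mathbf{v}_k$, after which both arguments reduce to the standard information-form Kalman update, with the predict step and the predictive obtained by the same linear-Gaussian marginalisation. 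The only caveat is that for the stated constants to come out (in particular the $\sigma^{-2}\mathbf{J}_{\mathbf{w}_k}^{-1}$ precision), the score must be the actual Gaussian score $s_f(y) = (f-y)/\sigma^2$ as used in the paper's appendix rather than the $(f-y)/\sigma$ normalisation you quote from the main text, so your intermediate quadratic coefficient $w_{k,j}^2/\sigma^2$ should read $w_{k,j}^2/\sigma^4$ to be consistent with your (correct) surrogate covariance $\sigma^2\mathbf{J}_{\mathbf{w}_k}$.
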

See \Cref{appendix:proof_31} for the proof. We refer to the SSM formulation of RCGPs as ST-RCGPs, and note that the above closely relates to existing GB filtering updates \cite{duran2024outlier, reimann2024towards}; but differentiates itself by using the weighted score-matching objective of \citet{barp2019minimum}). 
In contrast to what we propose in this paper, however, these previous filtering methods do \textit{not} deal with hyperparameter optimisation and smoothing, or notions of down-weighting optimally.

\cref{prop:ST-RCGP} offers two key advantages over standard RCGPs: firstly, it operates with \emph{significantly smaller} matrices of size $n_s(\nu+1) \times n_s (\nu +1)$ rather than $n_t n_s \times n_t n_s$ as for vanilla RCGPs; speeding up computations. Secondly, whilst vanilla RCGPs require the weight function to be fixed during inference, we can \textit{adapt} it as filtering is performed, thereby improving inference quality. 
This novelty powers much of our developments, and allows us to address \textbf{Issues \#\ref{issue:priormean}}, \textbf{\#\ref{issue:uncertainty}}, and \textbf{\#\ref{issue:hyperparameter}} mentioned in \Cref{Sec: Background}. 
Our notation highlights this adaptivity by indexing weights with time. 

The standard formulation of RCGPs arises as
a particular case of \cref{prop:ST-RCGP} where $\mathbf{w}_k$ are specified as outlined in \cref{Sec: Background} and are only allowed to depend on the current observation $(\mathbf{x}_k, y_k)$. 
In this setting, smoothing and filtering solutions of the ST-RCGP match the RCGP batch solutions \emph{exactly}. 
This emphasizes the importance of adaptivity: ST-RCGPs can fix prevailing issues of vanilla RCGP \textit{precisely because} they allow adaptive weights, which in turn causes filtering and smoothing distributions to differ.
%
\cref{prop:st-rcgp-equals-rcgp} formalizes this result, and identifies conditions for which ST-RCGP and vanilla RCGP match.
\begin{proposition}
    Consider constructing a generalised posterior using $\mathcal{L}$ as defined in \cref{eq:SM_loss}, and assume that weights are chosen non-adaptively.
    Then, whenever the GP prior is a Gauss-Markov process, it will hold that filtering and smoothing distributions constructed sequentially as in \cref{prop:ST-RCGP} will be equal to those constructed in a single batch from using the vanilla RCGP form defined in Proposition 3.1 of \citet{altamirano2024robustconjugategaussianprocess}.
    %
    \label{prop:st-rcgp-equals-rcgp}
\end{proposition}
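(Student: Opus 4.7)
The plan is to recognize that both the batch RCGP posterior and the sequential ST-RCGP filtering/smoothing distributions coincide with the posterior of an auxiliary \emph{standard} Gaussian regression problem with pseudo-observations and heteroscedastic Gaussian noise. The conclusion then follows from the classical equivalence between batch GP regression and Kalman filtering/RTS smoothing for a Gauss--Markov prior, as cited from \citet{solin2016stochastic}.

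First, I would rewrite the batch RCGP posterior from \cref{eq: RCGP_gp_posterior} as a vanilla conjugate GP posterior for the auxiliary problem with pseudo-observations $\tilde{\mathbf{y}} := \mathbf{y} - \sigma^{2}\nabla_{y}\log(\mathbf{w}^{2})$ and diagonal Gaussian observation-noise covariance $\sigma^{2}\mathbf{J}_{\mathbf{w}}$. By construction $\mathbf{y} - \mathbf{m}_{\mathbf{w}} = \tilde{\mathbf{y}} - \mathbf{m}$, so the RCGP mean and covariance are exactly the mean and covariance of the standard GP posterior for this auxiliary data/noise specification.

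Second, I would show that the ST-RCGP update step in \cref{prop:ST-RCGP} is exactly the Kalman update for the same auxiliary observation model, processed one time slice at a time. A term-by-term comparison is immediate: the innovation $\mathbf{y}_{k} - \hat{\mathbf{f}}_{w_{k}}$ equals $\tilde{\mathbf{y}}_{k} - \hat{\mathbf{f}}_{k}$; the ``measurement precision'' $\sigma^{-2}\mathbf{J}_{\mathbf{w}_{k}}^{-1}$ is the inverse of the diagonal noise covariance $\sigma^{2}\mathbf{J}_{\mathbf{w}_{k}}$; and the forms of $\mathbf{P}_{k|k}^{\text{GB}}$ and $\mathbf{K}_{k}^{\text{GB}}$ match the information-form Kalman update under this noise model. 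The predict step is unchanged from the standard Kalman filter.

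Third, I would observe that the non-adaptivity assumption is exactly what is needed: it guarantees that $\tilde{\mathbf{y}}_{k}$ and $\mathbf{J}_{\mathbf{w}_{k}}$ are the same quantities whether they are assembled up front (batch) or computed on the fly during filtering (sequential), since they depend only on $(\mathbf{x}_{k}, y_{k})$ and fixed user choices, not on intermediate filtered estimates. I would then invoke the known result that, for a Gauss--Markov prior with conditionally independent Gaussian observations, Kalman filtering followed by RTS smoothing reproduces exactly the marginals of the batch GP posterior (Theorem 12.4 and Section 8.2 of \citet{Sarkka2013filtsmooth}; \citet{solin2016stochastic}). Applying this to the auxiliary problem yields equality between the ST-RCGP and vanilla RCGP marginals.

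The main obstacle is the bookkeeping in the second step: verifying that the shift $\sigma^{2}\nabla_{y}\log\mathbf{w}_{k}^{2}$ is correctly interpretable as a shift of pseudo-observations (absorbed into the innovation) rather than of the prior mean, and that the heteroscedastic $\mathbf{J}_{\mathbf{w}_{k}}$ slots in where the identity noise $\mathbf{I}_{n_{s}}$ appears in the standard Kalman update in \cref{eq: kalman_update}. Once this algebraic identification is made, the rest reduces to citing standard Gauss--Markov/Kalman equivalence results.
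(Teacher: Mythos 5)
Your reduction is correct, but it is a genuinely different route from the paper's. The paper proves the result directly by a two-stage induction: for the filtering claim, a base case at $k=1$ that matches the RCGP and ST-RCGP Gaussians term by term (using the Woodbury identity and the assumption of identical priors), followed by an inductive step that exploits the Markov property of the prior and the summability of the loss, $\mathcal{L}_{\text{RCGP}}(\mathbf{x}_{1:k},\mathbf{y}_{1:k},\mathbf{f}_{1:k}) = \tfrac{1}{k}\sum_{i=1}^{k}\mathcal{L}(\mathbf{x}_i,\mathbf{y}_i,\mathbf{f}_i)$, to factor the $k$-batch generalised posterior into a one-step update applied to the $(k-1)$-batch posterior; the smoothing claim is then handled by a backward induction through the Rauch--Tung--Striebel identity (Theorem 8.1 of \citet{Sarkka2013filtsmooth}). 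You instead identify both the batch RCGP posterior and the ST-RCGP recursions, purely algebraically, with the exact Bayes posterior of an auxiliary Gaussian regression problem with pseudo-observations $\tilde{\mathbf{y}} = \mathbf{y} - \sigma^2\nabla_y\log(\mathbf{w}^2)$ and heteroscedastic noise $\sigma^2\mathbf{J}_{\mathbf{w}}$, and then invoke the classical batch-versus-sequential equivalence for Gauss--Markov priors. The identifications you rely on do check out against \cref{eq: RCGP_gp_posterior}, \cref{eq: kalman_update}, and \cref{prop:ST-RCGP} (in particular $\mathbf{y}-\mathbf{m}_{\mathbf{w}}=\tilde{\mathbf{y}}-\mathbf{m}$ and $\mathbf{y}_k-\hat{\mathbf{f}}_{w_k}=\tilde{\mathbf{y}}_k-\hat{\mathbf{f}}_k$), and you correctly locate the role of non-adaptivity: it is what makes $\tilde{\mathbf{y}}_k$ and $\mathbf{J}_{\mathbf{w}_k}$ identical whether assembled up front or during filtering. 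Your approach buys brevity and makes the non-adaptivity assumption transparent; the paper's buys self-containedness, since it never needs to argue that the cited equivalence theorem---stated for genuine Bayesian posteriors of a probabilistic state-space model---applies verbatim to generalised posteriors that merely share the algebraic form of Gaussian conditionals, nor that it extends to per-time-step diagonal noise covariances. If you write your version up, you should make that transfer explicit: state that the generalised posterior coincides \emph{as a distribution} with the exact posterior of the auxiliary model at every conditioning level $\mathbf{y}_{1:k}$, so both the filtering and the smoothing marginals are covered by the cited result.
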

See \Cref{appendix:strcgp-reproduces-rcgp} for more details and the proof. 

\paragraph{Weight Function} \label{sec: choice_weight_and_mean}
Proposition \ref{prop:ST-RCGP} imposes some constraints on the choice of weight function: it should be strictly positive,  differentiable over its domain (this ensures that all quantities in \cref{prop:ST-RCGP} are well-defined). We also need the weight to be bounded from above to ensure robustness, and to avoid attributing weight to arbitrarily large 
$y\in \mathbb{R}$, we require that $\lim_{|y|\rightarrow \infty}w(\mathbf{x}, y) = 0$. 
The IMQ satisfies all these conditions, and has been well-studied and recommended in prior literature \citep{matsubara2022robust, riabiz2022optimal, chen2019stein}---making it an ideal choice of weight function.
This choice is further justified by the fact that the IMQ hyperparameters $\gamma, \beta$ and $c$ are related to concepts of robustness that we will exploit when specifying the IMQ. 
In particular, by selecting $\gamma, \beta$ and $c$, we want to
\vspace{-1ex}
\begin{enumerate}[leftmargin=*,itemsep=0em] 
    \item down-weight observations far from where we expect the data to be, as specified by the \emph{center} of the data (via $\gamma$); \label{principle1}
    \item match the \emph{rate} at which we down-weight observations with how confident we are in our estimate of the center---something that was \emph{not} considered for RCGPs (via $c$);\label{principle2}
    \item be able to recover the STGP when there are no outliers, that is in \emph{well-specified} settings (via $\beta$).\label{principle3}
\end{enumerate}
First, since $y \mapsto w_\text{IMQ}(\mathbf{x}_k, y)$ peaks at the centering function $\gamma(\mathbf{x}_k)$, we let $\gamma(\mathbf{x}_k):= \mathbb{E}_{Y_k \sim p_\mathcal{L}(\cdot|y_{1:k-1})}[Y_k]$, which is the mean of our GB filtering posterior predictive, thus respecting principle \ref{principle1} as observations far from our prediction are down-weighted. This approach removes the vulnerability of the ST-RCGP to a poor prior mean and does not incur additional cost as filtering is part of the inference procedure of the ST-RCGP, thus resolving \textbf{Issue \#\ref{issue:priormean}}. We also note that this estimator of the center is \emph{robust}, since we use the GB posterior predictive, which we will see is itself robust. 

Second, we observe in \cref{fig:IMQ-behaviour} that reducing $c$ narrows $w_\text{IMQ}$, which in turn increases the rate $\nabla_y w_\text{IMQ} \propto 1/c^2$ at which observations are down-weighted. 
To respect principle \ref{principle2}, we 
let $$c^2(\mathbf{x}_k):=\mathbb{E}_{Y_k \sim p_\mathcal{L}(\cdot|y_{1:k-1})} [(Y_k - \gamma(\mathbf{x}_k))^2],$$ which is the variance of the GB filtering posterior predictive and reflects our confidence in $\gamma$. This selection of $c$ is systematic, \emph{adaptive}, and relates to $|y_k - \gamma(\mathbf{x}_k)|$, as it is extracted from the same distribution as $\gamma$ is. These two features are lacking in RCGPs and resolve \textbf{Issue \#\ref{issue:hyperparameter}}. 

Finally, we set $\beta = \sigma / \sqrt{2}$, because as seen in \cref{prop:ST-RCGP}, $\beta$ is a superfluous hyperparameter when $\sigma$ is considered, and we wish to recover the STGP in the event where $\gamma(\mathbf{x}_k) = y_k $ for $ k=1,...,n_t$ or equivalently $\mathbf{w}_k=\beta \mathbf{1}$, accomplishing the purpose of principle \ref{principle3}.

 In the spatio-temporal setting, $\gamma$ and $c$ are vectors $\bm{\gamma}_k := (\gamma (\mathbf{x}_{1,k}), \ldots, \gamma(\mathbf{x}_{n_s, k}))^\top$  and $\mathbf{c}_k := (c(\mathbf{x}_{1, k}), \ldots, c(\mathbf{x}_{n_s, k}))^\top$. 
 Here, using the posterior predictive $p_\mathcal{L}(\mathbf{y}_k | \mathbf{y}_{1:k-1})$ for adaptivity, we take 
$\bm{\gamma}_k := \hat{\mathbf{f}}_k$ and $ \mathbf{c}^2_k:= \text{diag}(\mathbf{\hat{\mathbf{S}}}_{k})$.
In  \cref{fig: varying-c-beta} (presented in the appendix), we illustrate how our selection of hyperparameters improves the posterior. Although we favour the above specification of $\bm{\gamma}_k$, there are other ways to estimate the center of the data that can be appropriate in special cases (see \cref{appendix:centering_function}).

\paragraph{Robustness} 

To study the robustness of ST-RCGP to outliers, we use the classical framework of \citet{huber2011robust}. 
Consider the dataset $D=\{(\mathbf{x}_k, \mathbf{y}_k)\}_{k=1}^{n_t}$, and suppose that for some $m\in\{1,\ldots,n_t\}$ and $j\in\{1,\ldots,n_s\}$, we replace a single observation $y_{m,j}$ in the vector $\mathbf{y}_k$ by an arbitrarily large  contamination $y_{m,j}^c$, resulting in the contaminated dataset $D^{c}_{m,j}=(D \setminus \{(\mathbf{x}_m, \mathbf{y}_m)\}) \cup \{(\mathbf{x}_m,\mathbf{y}_{m,j}^{c})\}$, where $\mathbf{y}_{m,j}^{c} = (y_{m,1},...,y_{m,j}^{c},...,y_{m,n_s})$.
The influence of such contamination is measured using the divergence between a posterior based on $D$ and the posterior based on $D^c_{m,j}$.

As a function of $|y_{m,j}^c - y_{m,j}|$, this divergence is called the \emph{posterior influence function} (PIF) and has been studied  in \citet{Ghosh2016, matsubara2022robust, altamirano2023robustscalablebayesianonline,altamirano2024robustconjugategaussianprocess,duran2024outlier}.
For Gaussian posteriors, \citet{altamirano2024robustconjugategaussianprocess, duran2024outlier} used the Kullback-Leibler (KL) divergence to compute the PIF in closed form as follows:
\begin{align}
    \operatorname{PIF}(y_{m,j}^{c}, D) = \operatorname{KL} \left(
        p_{\mathcal{L}}(\f | D) 
        \| p_{\mathcal{L}}(\f | D^{c}_{m,j})\right). \nonumber
\end{align}
We call a posterior robust if the PIF is bounded in its first input, as this implies that even as  $|y_{m,j}^c-y_{m,j}| \to \infty$, the contamination's effect on the posterior is bounded. 
Our next results formalises the robustness of ST-RCGP.
\begin{proposition}
\label{prop:robustness}
For $w = w_{\text{IMQ}}$ with $\beta=\sigma/\sqrt{2}$, $\gamma(\x) = \hat{\mathbf{f}}_k$, and $c^2(\mathbf{x}_k)=\text{diag}(\mathbf{\hat{\mathbf{S}}}_{k})$ the PIF of ST-RCGP satisfies
\begin{align*}
\sup_{y_{m,j}^{c}\in\R}|\operatorname{PIF}(y^{c}_{m,j}, D)|<\infty,
\end{align*}
making it robust to outliers.
\end{proposition}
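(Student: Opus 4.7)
The plan is to exploit the closed-form Gaussianity of the ST-RCGP posterior established in Proposition \ref{prop:ST-RCGP}: since both $p_\mathcal{L}(\mathbf{f}|D)$ and $p_\mathcal{L}(\mathbf{f}|D^c_{m,j})$ are multivariate Gaussians obtained by pushing the smoothing distribution of $\mathbf{z}$ through $\mathbf{f}=\mathbf{H}\mathbf{z}$, the PIF admits an explicit expression in terms of their means and positive-definite covariances. Because the KL divergence between two Gaussians depends continuously on these parameters (and blows up only when a covariance becomes singular or a mean diverges), it suffices to show that the contaminated posterior's mean and covariance converge to finite, positive-definite limits as $|y^c_{m,j}|\to\infty$.

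I would first analyze the filtering update at time $m$, where the contamination first enters. The key observation is that the adaptive quantities $\gamma(\mathbf{x}_{m,j})=\hat{f}_{m,j}$ and $c(\mathbf{x}_{m,j})^2=(\hat{\mathbf{S}}_m)_{jj}$ depend only on $\mathbf{y}_{1:m-1}$ and are therefore unaffected by the contamination. From the IMQ form one checks that $w^c:=w_{\text{IMQ}}(\mathbf{x}_{m,j},y^c_{m,j})=O(1/|y^c_{m,j}|)$ and $\nabla_y\log((w^c)^2)=O(1/|y^c_{m,j}|)$. Consequently, the $(j,j)$ entry of $\mathbf{J}_{\mathbf{w}_m}^{-1}=\text{diag}(\tfrac{2}{\sigma^2}\mathbf{w}_m^2)$ vanishes while the remaining entries stay fixed, so $\mathbf{P}_{m|m}^{\text{GB}}$ converges to the filtering covariance that one would obtain by simply omitting the $j$-th observation. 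For the mean update, the $j$-th column of the gain $\mathbf{K}_m^{\text{GB}}$ scales like $(w^c)^2=O(1/(y^c_{m,j})^2)$; multiplying by the $j$-th residual component $y^c_{m,j}-\hat{f}_{m,j}-\sigma^2\nabla_y\log((w^c)^2)=O(|y^c_{m,j}|)$ yields an $O(1/|y^c_{m,j}|)$ contribution. Hence $\mathbf{m}_{m|m}^{\text{GB}}$ also admits a finite limit.

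Given convergent $\mathbf{m}_{m|m}^{\text{GB}}$ and $\mathbf{P}_{m|m}^{\text{GB}}$, I would propagate the analysis forward through subsequent filtering steps and backward through the RTS smoother. The predict step is affine in the state estimate; at later times $k>m$ the adaptive $\bm{\gamma}_k,\mathbf{c}_k$ are smooth functions of the preceding filtering outputs while $\mathbf{y}_k$ is unchanged, so each $\mathbf{w}_k$ stays bounded in $(0,\beta]$ and the associated IMQ gradient is bounded. Thus every subsequent update is a smooth, bounded function of the incoming filtering parameters, and the same holds for the backward smoothing recursion. Consequently the full smoothed Gaussian posterior over $\mathbf{f}$ differs from the uncontaminated one by a uniformly bounded amount in both its mean and covariance, and the KL divergence defining the PIF stays finite.

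The main technical obstacle is ruling out degeneracy in the covariances so that the log-determinant and inverse entering the KL remain controlled as $|y^c_{m,j}|\to\infty$. This is handled by observing that the limit of $\mathbf{P}_{m|m}^{\text{GB}}$ is the filtering covariance with the $j$-th observation dropped, which is strictly positive-definite because $\mathbf{P}_{m|m-1}$ is, and positive-definiteness is preserved by all subsequent predict/update and smoothing steps. A secondary subtlety is to exclude feedback blow-up in the adaptive weights at times $k>m$, which is resolved by noting that $\gamma_k$ only enters $w_{\text{IMQ}}$ through a bounded bump function with maximum $\beta$, so no weight or log-gradient can diverge.
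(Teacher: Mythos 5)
Your proposal is correct, but it takes a genuinely different route from the paper. The paper's proof is a reduction: it invokes the batch equivalence between ST-RCGP and RCGP together with the PIF bound from Proposition 3.2 of \citet{altamirano2024robustconjugategaussianprocess}, namely $\operatorname{PIF}(y^c_{m,j},D)\le C_1\bigl(w(\x_{m,j},y^c_{m,j})^2\,y^c_{m,j}\bigr)^2+C_2$, and then only has to verify two conditions on the weight: $\sup_{\x,y}w(\x,y)\le\beta<\infty$ and $\sup_y |y|\,w(\x,y)^2<\infty$, the latter via a two-case analysis ($|y|\le|\gamma(\x)|+|c(\x)|$ versus $|y|>|\gamma(\x)|+|c(\x)|$) that ultimately rests on $|\gamma(\x)|,|c(\x)|<\infty$ because they are the mean and variance of a Gaussian predictive. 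You instead re-derive the robustness directly on the state-space recursions, showing that the gain column scales as $(w^c)^2=O(1/(y^c_{m,j})^2)$ while the residual is $O(|y^c_{m,j}|)$, so their product vanishes and the contaminated filtering moments converge to finite positive-definite limits; the decisive analytic fact in both arguments is therefore identical (boundedness of $y\,w_{\text{IMQ}}(\x,y)^2$, i.e.\ the weight decays strictly faster than $|y|^{-1/2}$). What your approach buys is a self-contained treatment that explicitly tracks the adaptivity of $\bm{\gamma}_k$ and $\mathbf{c}_k$ at times $k>m$ and through the backward smoothing pass --- something the paper's proof glosses over by asserting that ST-RCGP and RCGP ``share the same distribution,'' even though \cref{prop:st-rcgp-equals-rcgp} is stated for non-adaptive weights; what the paper's approach buys is brevity and a quantitative bound on the PIF rather than mere finiteness. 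One point you should tighten: your reduction of $\sup_{y^c}|\operatorname{PIF}|<\infty$ to ``finite limits as $|y^c|\to\infty$'' implicitly also needs continuity and nondegeneracy of the contaminated posterior for every finite $y^c$ (which holds since $w_{\text{IMQ}}>0$ everywhere and $c^2(\mathbf{x}_k)\ge\sigma^2$), so that the PIF is a continuous function on the extended real line and hence attains a finite supremum; stating this explicitly would close the argument.
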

See \Cref{appendix:robustness} for the proof. 

\paragraph{Robust Hyperparameter Optimisation}
The parameters $\bm{\theta}$ we need to optimize consist of the noise level $\sigma^2$ and kernel parameters such as lengthscale, amplitude and smoothness. 
Based on  $p(\mathbf{y}_k | \mathbf{z}_k, \bm{\theta}) = \mathcal{N}(\mathbf{y}_k; \hat{\mathbf{f}}_k, \sigma^2 \mathbf{I}_{n_s})$ from \cref{paragraph: filt-smooth} and the GB posterior $p_\mathcal{L}(\mathbf{z}_k|\mathbf{y}_{1:k-1}, \bm{\theta})  = \mathcal{N}(\mathbf{m}_{k|k-1}(\bm{\theta}), \mathbf{P}_{k|k-1}(\bm{\theta}))$, a standard approach would be to minimise the objective
\begin{equation*}
   \varphi(\bm{\theta}) := -\sum_{k=1}^{n_t}\log p_{\mathcal{L}}(\mathbf{y}_k | \mathbf{y}_{1:k-1}, \bm{\theta}),
\end{equation*}
which has a closed form provided in \cref{appendix:hyperparm-optim}. 
This approach is a version of leave-one-out cross-validation, which has been proposed in \citet{altamirano2024robustconjugategaussianprocess} to fit RCGP's hyperparameters (see \cref{appendix:hyperparam-optim-issue-rcgp}). 
%
%
Although objectives like $\varphi$ are common, they are problematic for robustness.
In particular, $\varphi$ uses a form of log-likelihood loss on the posterior predictive---a loss that is well-known to be susceptible to outliers. 
Unsurprisingly, this makes $\varphi$ a \textit{poor} criterion for hyperparameter optimisation---it will overfit the hyperparameters to outliers, and undo most of the robustness gains we achieved through \cref{prop:robustness} (see \cref{appendix:hyperparm-optim}).  
Fortunately, there is a simple and elegant way around this: weighted likelihood approaches \citep[see e.g.][]{field1994robust,windham1995robustifying,dupuis2002robust,dewaskar2023robustifying}. 
Specifically, we define 
\begin{equation*}
    \varphi_\text{GB}(\bm{\theta}) := -\sum_{k=1}^{n_t} \tilde{w}_k \log p_\mathcal{L}(\mathbf{y}_k | \mathbf{y}_{1:k-1}, \bm{\theta})
\end{equation*}
where $\tilde{w}_k:=\tilde{w}(\mathbf{w}_k) \in \mathbb{R}$ is a function that maps the weight vector $\mathbf{w}_k$ at time $t_k$ to a type of summary indicating whether or not the observations at time $t_k$ jointly constitute an outlier along the temporal dimension of the observation process.
In the absence of a spatial dimension, this value can be chosen as the weight itself so that $\tilde{w}_k := w_k$, as it reflects our belief that the corresponding point is an outlier. 
In the spatio-temporal setting, however, we might instead define it as the mean, a lower quantile, or the minimum of weights across spatial locations. 
We refer the reader to \cref{appendix:hyperparm-optim} for more details and experiments showing how $\varphi_{\text{GB}}$ improves on $\varphi$ in temporal and spatio-temporal settings. 
%
%
On top of the better parameter estimates, we also find that the weighted approach reduces the variance in gradient computation \cite{Wang2013variancereduction}, and seems to yield faster convergence.
The computational cost of $\varphi_\text{GB}$ is $\mathcal{O}(n_t n_s^3)$---the same as for inference with STGPs.

\section{Experiments}\label{Sec: Experiments}

\begin{figure}[t!]
    \centering
    \includegraphics[width=1.\linewidth]{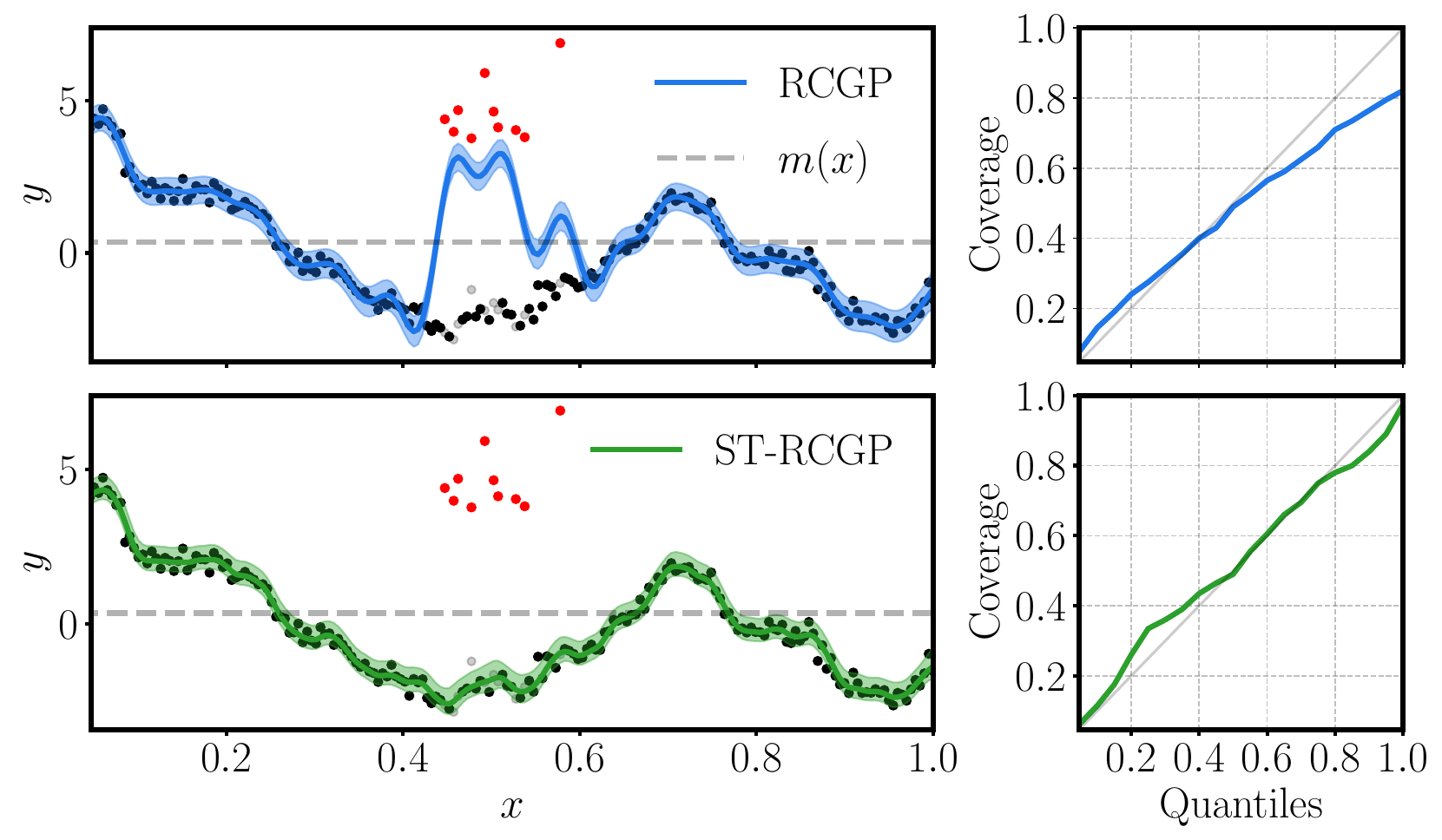}
    \caption{\textit{Simulated Temporal Data With Focussed Outliers}.
    \textbf{Left:} $N=200$ points are generated from a GP with Mat\'ern kernel ($\nu = 3/2$), and contaminated with 5\% outliers.
    RCGP uses  $\gamma(x) = m(x) =   \frac{1}{N}\sum_{i=1}^N y_i$ and $c=Q_N(0.95)$. 
    While predictions are made on data with outliers,  we fit the hyperparameters of RCGP on \emph{decontaminated} data to make it as strong a competitor as possible. 
    In contrast, ST-RCGP uses the robust hyperparameter optimisation and selection from \cref{Sec: Methods} to fit and predict from data \emph{with} outliers. 
    Despite this, ST-RCGP performs much better than RCGP.
    \textbf{Right:} The coverage plots measure how often the uncontaminated data falls within the predictive distribution's $q$-th quantile. }
    \vspace{-2mm}
    \label{fig: efficiency-coverage-fit-plot}
\end{figure}
\vspace{-2mm}
\begin{figure*}[t!]
    \centering
    \includegraphics[width=\linewidth]{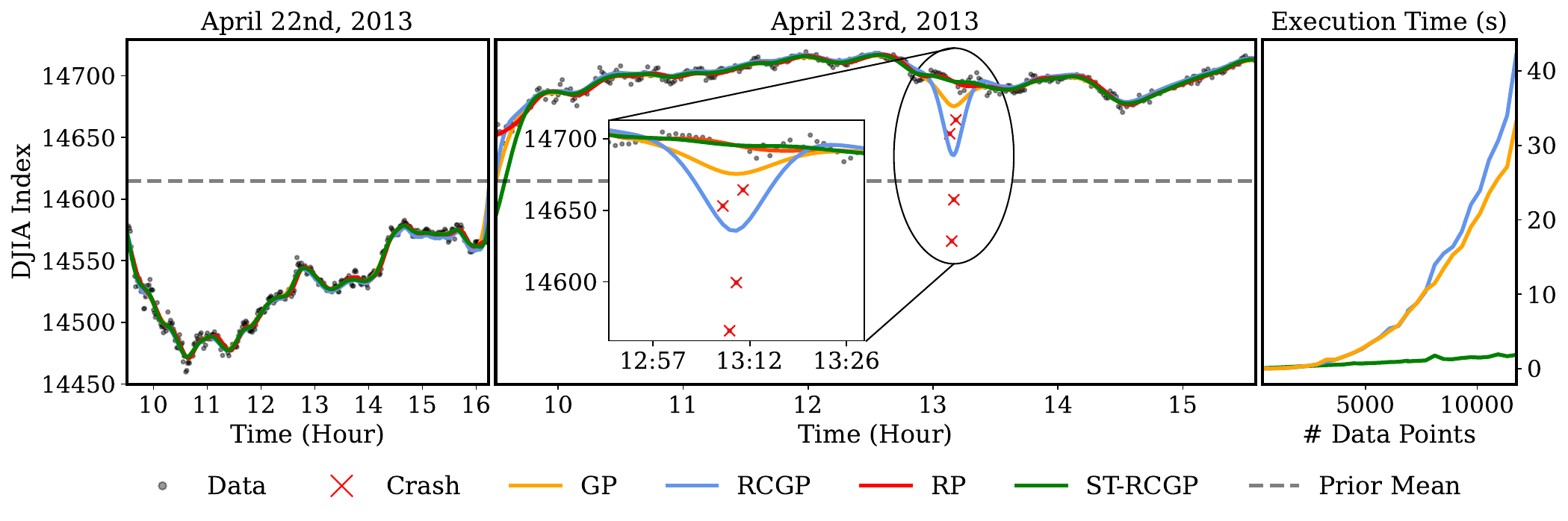}
    \caption{\textit{Comparing Estimates of DJIA Index During Twitter Flash Crash Incident}. RCGP uses a prior mean of the data over the two days. We notice a dip in the posterior predictive of RCGP when the prior mean aligns with the crash event (red), whereas ST-RCGP and relevance pursuit algorithm (RP) from \citet{ament2024robustgaussianprocessesrelevance} remain unaffected. In the right plot, we show computational time vs the number of data points. Note that RP has cost $K$ times the standard GP, where typically $K>1$.}
    \label{fig: rtgp_vs_rcgp}
    \vspace{-5mm}
\end{figure*}
In the remainder, we study the advantages of ST-RCGP on numerical examples.
First, we investigate how ST-RCGP improves upon RCGP.
Second, we explore ST-RCGP in well-specified settings without outliers. 
Third, we showcase its virtues on financial time series with severe outliers and its superior performance relative to competitors.
Our last numerical experiment studies the robustness ST-RCGP exhibits towards spatio-temporal temperature anomalies.
The code to reproduce all experiments is available at \url{https://github.com/williamlaplante/ST-RCGP}.

Throughout \cref{Sec: Experiments}, we evaluate experiments based on root mean squared error $\mathrm{RMSE}(\mathbf{X}, \hat{y}) := \sqrt{\mathbb{E}_{Y\sim p_0(\cdot|\mathbf{X})}\left[(Y - \hat{y})^2 \right]}
$ and evaluations of the negative log predictive distribution $\mathrm{NLPD}(\mathbf{X}, \hat{y}, \hat{\sigma}) := \mathbb{E}_{Y\sim p_0(\cdot|\mathbf{X})}\left[-\log p_\phi \left(Y | \hat{y}, \hat{\sigma}^2 \right) \right]$ on the test data, where $p_\phi$ is the Gaussian density with mean $\hat{y}$ and variance $\hat{\sigma}^2$ specified by the model.
To capture the tradeoff between robustness to outliers and statistical efficiency---defined in this paper as a model’s ability to recover the standard GP in well-specified settings---we report the expected weighted ratio $\text{EWR}(\mathbf{X}) := \mathbb{E}_{Y\sim p_0(\cdot | \mathbf{X})} \left [ w(\mathbf{x},Y) / w_\text{STGP}(\mathbf{x},Y)  \right]$. We expand on these metrics in \cref{appendix:perform-metrics}.

Even when investigating temporal tasks without spatial dimensions, we keep the name ST-RCGP to clarify that inference proceeds (i) via the state-space representation of \cref{prop:ST-RCGP}, and (ii) via hyperparameters chosen using the methods of \cref{Sec: Methods}---rather than those used for vanilla RCGPs in \citet{altamirano2024robustconjugategaussianprocess}.

\paragraph{Fixing vanilla RCGP}\label{sec:fix_vanilla_RCGP}
In previous sections and \cref{fig:rcgp-issues}, we found that vanilla RCGP is vulnerable to prior mean misspecification (\textbf{Issue \#\ref{issue:priormean}}), can produce poor uncertainty estimates (\textbf{Issue \#\ref{issue:uncertainty}}) and cannot correctly select $c$ (\textbf{Issue \#\ref{issue:hyperparameter}}). 
In our first experiment, we show how ST-RCGP  improves on those issues. 
To this end, we simulate data from a GP and compare the fit produced by both algorithms in \cref{fig: efficiency-coverage-fit-plot}. 
The plotted predictives show that RCGP is compromised by outliers due to \textbf{Issue \#\ref{issue:priormean}} and \textbf{Issue \#\ref{issue:hyperparameter}}, while the coverage plots point to unreliable uncertainty estimates induced by \textbf{Issue \#\ref{issue:uncertainty}}. 
In contrast, using an adaptive centering and shrinking function allows ST-RCGP to produce reliable uncertainty estimates and predictions.

\paragraph{ST-RCGP in Well-Specified Settings}\label{sec:experiments_wellspecified}
While robust methods offer protection from contaminated data, some can only do so at the expense of statistical efficiency.
Here, we show that ST-RCGP remains statistically efficient in well-specified settings and robust to outliers when required.
The setup we use is described in \cref{appendix:exp_wellspecified}, and results are reported in \cref{tab: exp-sim-well-spec-settings}.
While RMSE and NLPD are comparable across methods in well-specified settings, STGP suffers from a clear drop in NLPD and RMSE when outliers are introduced.  
In contrast, compared to other methods, the ST-RCGP maintains the lowest RMSE and NLPD in both cases. 
Also, its EWR is high in well-specified settings and drops in the presence of outliers to obtain robustness. 
The ST-RCGP thus exhibits the properties we seek out of robust methods in well-specified settings. 

\begin{table}[h]
\centering
\caption{EWR, RMSE, and NLPD for different methods in well-specified settings and with data containing outliers. 
}
\label{tab: exp-sim-well-spec-settings}
\resizebox{\textwidth}{!}{%
\begin{tabular}{llccc}
\toprule
\textbf{Outliers} & \textbf{Method} & \textbf{EWR} & \textbf{NLPD}  & \textbf{RMSE} \\
\midrule
\midrule
\multirow{4}{*}{Yes} & STGP & $1.0 \pm 0.0$ & $30 \pm 2$ & $0.38 \pm 0.02$  \\
\cmidrule(lr){2-5}
& RCGP  & $0.886 \pm 0.001$ & $6.6 \pm 0.3$ & $0.21\pm0.01$ \\
 & \textbf{ST-RCGP} & $0.863\pm0.002$ & $5.5 \pm 0.2 $ & $0.19 \pm 0.01$ \\
 \midrule[\heavyrulewidth]
\multirow{4}{*}{No} & STGP & $1.0 \pm 0.0$ & $6.5 \pm 0.2$ & $0.19 \pm 0.01$ \\
\cmidrule(lr){2-5}
& RCGP  & $0.894 \pm 0.001$ & $5.6\pm0.1$ & $0.19\pm0.01$ \\
 & \textbf{ST-RCGP} & $0.903 \pm 0.001$ & $5.7\pm0.1$ & $0.19\pm0.01$ \\
\bottomrule
\end{tabular}
}
\vspace{-3mm}
\end{table}

\paragraph{Robustness During Financial Crashes}\label{sec:experiments_financial_crashes}

On April 23rd 2013, the Associated Press's Twitter account was hacked and posted false tweets about explosions at the White House. 
This led to a brief but significant market drop, including in the Dow Jones Industrial Average (DJIA), which quickly rebounded after the tweet was proven false.
This data set was previously studied by \citet{altamirano2024robustconjugategaussianprocess} and \citet{ament2024robustgaussianprocessesrelevance}. 
We plot the GP, RCGP, and ST-RCGP fits in \cref{fig: rtgp_vs_rcgp}, as well as the robust GP via relevance pursuit (RP) fit from \citet{ament2024robustgaussianprocessesrelevance}.
While RP behaves as desired, the plot reveals that the GP is not robust to the crash. 
Interestingly, the RCGP  performs even \emph{worse} since $\gamma$, which is chosen as the constant prior mean obtained by averaging two days of data, happens to be close to the outliers during the crash, implying that RCGP is centered around the outliers.
This is another instance of \textbf{Issue \#\ref{issue:priormean}}, and it is addressed by ST-RCGP. 
By using an adaptive centering function introduced in \cref{Sec: Methods}, ST-RCGP is centered around more reasonable values.
The right-hand panel of \cref{fig: rtgp_vs_rcgp} also highlights that ST-RCGP leads to substantive computational gains relative to RCGP and GP due to its state-space representation from \cref{prop:ST-RCGP}. We do not plot RP because it is implemented using a different package, and computational time is thus not comparable. But the cost of RPs is a multiple of that for GPs \cite{ament2024robustgaussianprocessesrelevance},  and hence also significantly more than that of RCGPs and ST-RGCPs. 
It is also unclear how easily RPs could use the spatio-temporal structure to get a linear-in-time cost. For this reason, we do not explore RP beyond this experiment.
\begin{table}[t]
    \centering
    \caption{\textit{Performance comparison on $N=46800$ data points with outliers.} 
    Below,
    \textit{Total (s)} denotes clock time for full inference, \textit{1-Step (ms)} the estimated slope of a linear model fitted to execution time data for different data set sizes.
    Further, \textit{RMSE} and \textit{NLPD} use 1000 test points around the induced crash that are not outliers.
    }
    \label{tab: speed_comparison_HFT} 
    
\resizebox{\linewidth}{!}{%
    \begin{tabular}{l|cccc}
        \toprule
               & \textbf{Total (s)} & \textbf{1-Step (ms)} & \textbf{RMSE} & \textbf{NLPD}  \\ \midrule \midrule
        STGP & $8.0\pm0.7$ & $0.17\pm0.01$  &  $0.54\pm0.01$ & $13.9\pm0.3$  \\
        \midrule
        \textbf{ST-RCGP} & $9.4\pm0.4$ &  $0.20\pm0.02$  & $0.145\pm0.002$ & $-0.62\pm0.01$ \\
        MEP & $29.1\pm0.5$  &  $0.60\pm0.01$ & $0.15\pm0.01$ & $-0.57\pm0.03$   \\ 
        MVI & $28.3\pm0.6$  &  $0.61\pm0.01$ & $0.15\pm0.01$ & $-0.53\pm0.04$   \\
        MLa & $29\pm2$ &  $0.62\pm0.03$  & $0.17\pm0.01$ & $-0.39\pm0.1$  \\ \bottomrule
    \end{tabular}
} \label{tbl: HFT-rmse-and-speed-comp}
\end{table}

\begin{figure*}[t!]
\CenterFloatBoxes
\begin{floatrow}
\ffigbox[.61\textwidth]
  {\includegraphics[width=\linewidth]{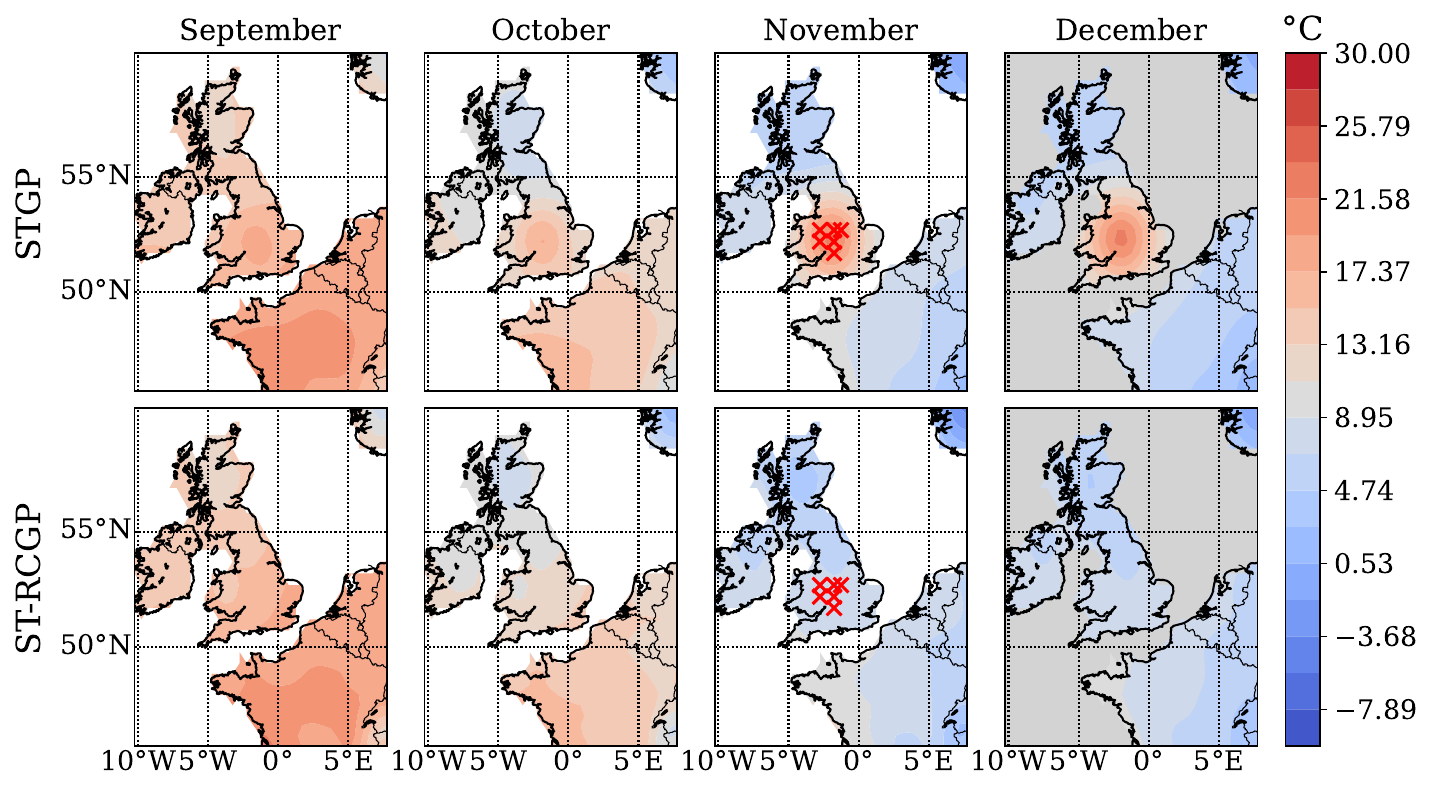}}
  {
  \caption{\textit{Temperature fit across the UK between September and December 2023}. Focussed outliers appear in November and are shown as red crosses. We forecast the month of December. September is the last month included in parameter optimisation. We include September to show performance on previously seen data.
  }
 \vspace{-3mm}\label{fig: weather-forecast-exp}}
\ttabbox[.35\textwidth]{%
\resizebox{\linewidth}{!}{%
\begin{tabular}{lcc|cc}
    \toprule
    \multirow{2}{*}{} & \multicolumn{2}{c|}{\textbf{STGP}} & \multicolumn{2}{c}{\textbf{ST-RCGP}} \\
    & \textbf{RMSE}  & \textbf{NLPD}  & \textbf{RMSE}  & \textbf{NLPD}  \\
    \midrule
    \midrule
    May   & 0.79  & 1.63 & 0.59 & 0.76 \\
    Jun   & 0.77  & 1.53 & 0.54 & 0.67 \\
    Jul   & 0.76  & 1.49 & 0.55 & 0.75 \\
    Aug   & 0.77  & 1.61 & 0.50 & 0.66 \\
    Sep   & 0.82  & 2.13 & 0.49 & 0.71 \\
    Oct   & 1.23  & 7.50  & 0.50 & 0.63 \\
    \textcolor{red}{Nov}  & 3.30  & 41.62 & 1.19 & 2.44 \\
    \midrule
    Dec  & 3.84 & 3.30 & 1.38 & 2.13 \\
    \bottomrule
    \end{tabular}
    }
  }
  {\caption{\textit{Performance on temperature data}. Models perform inference on contaminated data, but \textit{RMSE} and \textit{NLPD} use outlier-free data. November introduces focussed outliers, and December is a one-step forecast.
  }\label{tab: weather-forecast-exp}}
\end{floatrow}
\end{figure*}
Since the flash crash dataset only contains $N=810$ data points, we further explore the computational properties of ST-RCGP by considering a trading day of index futures data with $N=46,800$ data points. 
We then synthetically induce a crash similar to that in the previous example (see \cref{appendix:experiment_financial_crashes}).
Naive GP implementations struggle with data of this size, and so we focus on inherently sequential methods.
\cref{tab: speed_comparison_HFT} summarises the results, and compares ST-RCGP against STGP, as well as several off-the-shelf inference methods for sequential GPs with Student's $t$ errors from the \textit{BayesNewton} package \cite{wilkinson2023bayes} that include Markov expectation propagation (MEP), Markov variational inference (MVI), and Markov Laplace (MLa).
While STGP and ST-RCGP have similar computational cost, the robustness of ST-RCGP leads to superior performance.
Conversely, ST-RCGP's performance is comparable to models using Student's $t$ errors for this problem, but its computational cost is substantively lower.  
This is true despite the fact that ST-RCGP is exact, while the other robust methods in \cref{tab: speed_comparison_HFT} only produce approximate inferences. 

\paragraph{Forecasting Temperature Across the UK} \label{sec:experiments_temperature}
Having established the performance of ST-RCGP for simpler problems, our last experiment studies its behaviour on spatio-temporal temperature data with synthetically induced outliers.
In particular, we induce what are often referred to as \textit{focussed outliers}, and which simulate the impact of rare natural phenomena that affect neighbouring weather stations.
The data we use is collected by the Climate Research Unit \citep[see][]{harris2020version} and measures temperature from 16/01/2022 to 16/12/2023 at $n_s=571$ locations, containing $N=11,991$ data points in total.
Hyperparameter optimisation is performed from 16/01/2022 to 30/09/2023, with later dates serving as test data.
In October and November 2023, we perform in-sample predictions, and December 2023 is used for a one-month temperature forecast.
\cref{fig: weather-forecast-exp} and \Cref{tab: weather-forecast-exp} display the results, and \cref{appendix:temp-forecast} provides a coverage analysis.
The two models perform similarly when there are no outliers. However, the ST-RCGP outperforms the STGP slightly because the outliers impact the STGP posterior during the months before they are introduced due to smoothing.
With outliers, the STGP loses prediction accuracy (NLPD and RMSE), whereas ST-RCGP maintains consistent RMSE and NLPD over time, offering more reliable predictions at comparable computational cost.

\section{Conclusion}\label{Sec: Conclusion}

We proposed ST-RCGPs based on an overhaul of the RCGP framework of \citet{altamirano2024robustconjugategaussianprocess} that addressed some major drawbacks of vanilla RCGPs, further improved their computational efficiency and paved the way for their use in spatio-temporal problems.
ST-RCGPs have the computational complexity of STGPs, but additionally provide robustness to outliers. 
Further, we empirically demonstrated that ST-RCGPs match the performance of the relevance pursuit algorithm from \citet{ament2024robustgaussianprocessesrelevance} and of
various robust GP methods from the \textit{BayesNewton} package \cite{wilkinson2023bayes} at a fraction of their computational cost and without approximation error. 

Our method builds on classical STGPs, which remain vulnerable to scaling in the spatial dimension and could be addressed similarly to \citet{hamelijnck2021spatio} through variational approximations. Furthermore, our method may be computationally suboptimal if parallel computing is available, in which case parallel-scan algorithms (see also \citet{sarkka2020temporal}) could be interesting to adapt to ST-RCGPs if possible. However, we consider these investigations to be outside the scope of this paper.
An entirely different but equally relevant endeavour for future research is to explore the use of non-Gaussian likelihoods within the RCGP framework.
This would extend its use beyond the standard regression setting, and allow its application to classification problems, count data, and various bounded data domains.

 \section*{Acknowledgements}
  WL was supported by UCL's Center for Doctoral Training in Data-Intensive Science and by the Alan Turing Institute. MA was supported by a Bloomberg Data Science PhD fellowship. FXB and JK were supported by the EPSRC grant EP/Y011805/1, and JK was additionally supported by EP/W005859/1.

\section*{Impact Statement}
This paper presents work whose goal is to advance the field of Machine Learning. There are many potential societal consequences of our work, none which we feel must be specifically highlighted here.

\bibliography{bibliography}
\bibliographystyle{icml2025}

\newpage
\appendix
\onecolumn

{\hrule height 1mm}

\section*{\LARGE\bf \centering Supplementary Material
}
\vspace{8pt}
{\hrule height 0.1mm}
\vspace{24pt}

The Appendix is structured as follows: We first introduce notation. Then, in \cref{appendix:SMLF}, we derive the loss function used in the main paper. Next, in \cref{appendix:proofs}, we provide the proofs of all our theoretical results. Finally, in \cref{appendix:experiments}, we provide additional details on our numerical experiments, as well as further experiments that complement those in the main text.

\section*{Notation}

Suppose we have a vector $\mathbf{v} = (v_1, v_2, ..., v_N)^\top \in \mathbb{R}^N$, where $v_i : \mathbb{R}\rightarrow \mathbb{R}$, and a matrix $\mathbf{M}\in \mathbb{R}^{N \times N}$. Then,
\begin{itemize}
    \item $\text{diag}(\mathbf{v}) := \begin{pmatrix}
        v_1 & \cdots & 0  & 0 \\
        \vdots & \ddots & \vdots & \vdots \\
         0& \cdots & v_{N-1}& 0 \\
        0 & \cdots & 0 & v_N
    \end{pmatrix}$ and $\text{diag}(\mathbf{M}):= (M_{11}, M_{22}, \dots, M_{NN})^\top$ where $M_{ij}:=(\mathbf{M})_{ij}$.
    \item $\mathbf{v}^2:= (v_1^2, ..., v_N^2)^\top$
    \item $\log(\mathbf{v}) := \left(\log(v_1), \dots, \log(v_N) \right)^\top$ 
    \item $\nabla_y \mathbf{v} = \left(\frac{\partial}{\partial y} v_1, \dots, \frac{\partial}{\partial y} v_N \right)^\top$, where $\frac{\partial}{\partial y} v_i$ is the partial derivative of $v_i$ with respect to $y$.
    \item $\mathbf{v} \odot \mathbf{w}:= (v_1w_1, ..., v_N w_N)^\top$ for $\mathbf{w} \in \mathbb{R}^N$.
\end{itemize}

We now remind the reader of the dimensionality of matrices and vectors crucial to the main algorithm (\cref{prop:ST-RCGP}) of this paper:
\begin{itemize}
    \item $\mathbf{A}_{k-1}, \mathbf{\Sigma}_{k-1} \in \mathbb{R}^{n_s(\nu+1) \times n_s(\nu+1)}$ are the transition matrices from \cref{eq: TemporalGPdiscreteSSM}
    \item $\mathbf{H} \in \mathbb{R}^{n_s \times n_s(\nu + 1)}$ is the measurement matrix from \cref{eq: TemporalGPdiscreteSSM}
    \item $\mathbf{y}_k,\; \mathbf{w}_k,\; \hat{\mathbf{f}}_k^\text{GB} \in \mathbb{R}^{n_s}$ are the observations, weights and filtering predictives from \cref{prop:ST-RCGP}.
    \item $\mathbf{P}_{k|k}^{\text{GB}} \in \mathbb{R}^{n_s(\nu + 1) \times n_s(\nu + 1) }$ is the covariance matrix from \cref{prop:ST-RCGP}
    \item $\mathbf{m}_{k|k}^{\text{GB}} \in \mathbb{R}^{n_s(\nu + 1)}$ is the mean from \cref{prop:ST-RCGP}
    \item $\mathbf{K}_{k}^{\text{GB}} \in \mathbb{R}^{n_s(\nu + 1) \times n_s}$ is the Kalman gain matrix from \cref{prop:ST-RCGP}
    \item $\mathbf{J}_{\mathbf{w}_k} \in \mathbb{R}^{n_s \times n_s}$ is the weight-based matrix from \cref{prop:ST-RCGP}.
    
\end{itemize}
Finally, when we use $p(\cdot) = \mathcal{N}(\cdot \;; \mu, \Sigma)$, we are referring to the density of the normal distribution with mean $\mu$ and covariance $\Sigma$, and when we use $\mathcal{N}(\mu, \Sigma)$, we are referring to the distribution.

\section{Score-Matching \& Loss Function} \label{appendix:SMLF}
This section provides the derivation of the loss function used in the main paper based on the weighted Fisher divergence.

We define the weighted Fisher divergence at a fixed time $t_k$. Let $f_k(.) = f(.,t_k)$ the spatial model at time $t_k$.  Let $p_k (y|s) = p ( y|s,t_k)$ be the density of the true data-generating process at time $t_k$, and $p_{f_k} (y | s) = p(y|f_k(s))$ be the density of our model at time $t_k$. 

The \emph{weighted Fisher divergence} for regression between the model and data-generating process at time $t_k$ depends on the corresponding scores $s_{f_k}(y|s) = \nabla_y \log p_{f_k}(y |s)$ and  $s_k(y|s) = \nabla_y \log p_k(y | s)$ and is given by \cite{barp2019minimum, xu2022generalized, altamirano2024robustconjugategaussianprocess}: 
\begin{align}
     \mathcal{D}_k & := \mathbb{E}_{S\sim p_{k,s}}\left[\mathbb{E}_{Y \sim p_k(.|S)} \left [\left\|w_k(S,Y) \left(s_k(Y|S) - s_{f_k}(Y|S)\right)\right\|_2^2 \right]\right] \nonumber  \\
    & \overset{+C}{=}  \mathbb{E}_{S\sim p_{k,s}}\left[\mathbb{E}_{Y \sim p_k(.|S)} \left [\left\| w_k(S,Y) s_{f_k}(Y|S) \right\|_2^2 +2  \nabla_{y} \cdot \left ( w_k(S,Y)^2 s_{f_k}(Y|S) \right ) \right]\right], \nonumber \label{eq: basic_sm_loss} 
\end{align}

where $p_{k,s} = p(s|t_k)$ is the marginal, $w_k(s,y)  = w((s,t_k),y)$, and the equality---which crucially does not depend on $s_k$ anymore---holds up to an additive constant $C$ not depending on $f_k$.

Now, consider a dataset such that $\mathbf{x}_k = (\x_{k,1},...,\x_{k,n_s})^{\top}$, for $\mathbf{x}_{k,j} = (\mathbf{s}_j, t_k) \in \mathcal{X} =\mathcal{S} \times \mathcal{T}$, $\y_k = (y_{k,1},...,y_{k,n_s})^{\top}$, and let $\f_k := \f(\x_k)=(f(\x_{k,1}),...,f(\x_{k,n_s}))^{\top}$. Moreover, let $\mathbf{z}_k$ be related to $\f_k$ as $\f_k := \mathbf{H}\mathbf{z}_k$. The empirical loss we obtain for filtering is then
\begin{equation}
    \mathcal{L}(\mathbf{x}_k, \mathbf{y}_k,\mathbf{z}_k) = \frac{1}{n_s}\sum_{j=1}^{n_s}  \| w_{k,j} s_{f,k,j}\|_2^2 + 2 \nabla_{y} \cdot \left ((w_{k,j})^2 s_{f,k,j}  \right),
\end{equation}
where $w_{k,j} =w(\mathbf{x}_{k,j},y_{k,j})$, and $s_{f,k,j} = s_{f_k}(y_{k,j})$.

\section{Proofs of Theoretical Results}
\label{appendix:proofs}
\subsection{Proof of \cref{prop:ST-RCGP}} 
\label{appendix:proof_31}
In the following, we derive the generalised Bayes filtering posterior distribution when the loss function is quadratic. We then show that the weighted score matching loss with a Gaussian model yields a quadratic loss. Finally, we provide the derivation of the GB predictive.
\label{appendix: gen_bayes_filt_posterior}

Let us assume a quadratic loss in $\mathbf{z}_k$ or equivalently in $\mathbf{f}_k := \mathbf{H}\mathbf{z}_k$ of the form:
\begin{equation*}
    \mathcal{L}(\mathbf{x}_k, \mathbf{y}_k, \mathbf{z}_k) = \frac{1}{2n_s}\left(\mathbf{f}_k^\top \mathbf{R}^{-1}_{k}\mathbf{f}_k -  \mathbf{f}_k^\top \mathbf{v}_{k}+C_{k}\right),
\end{equation*}
where $R_k \in \R^{n_s\times n_s}$ is an invertible matrix, $v_k\in\R^{n_s}$, and $C_k \in \R$. The GB filtering update equations are then
\begin{equation}
\begin{split}
    p_{\mathcal{L}}(\mathbf{z}_k | \mathbf{y}_{1:k}) &\propto p(\mathbf{z}_k | \mathbf{y}_{1:k-1}) \exp(-n_s \mathcal{L}(\mathbf{x}_k, \mathbf{y}_k, \mathbf{z}_k))   \\
    &\propto  \exp\left( -\frac{1}{2}(\mathbf{z}_k - \mathbf{m}_{k|k-1})^\top \mathbf{P}_{k | k-1}^{-1} (\mathbf{z}_k - \mathbf{m}_{k|k-1})\right) \exp \left (-\frac{1}{2} \left( \mathbf{f}_k^\top \mathbf{R}_k^{-1} \mathbf{f}_k - 2 \mathbf{f}_k^\top  \mathbf{v}_k \right) \right)   \\
    &\propto  \exp \left (-\frac{1}{2} \left( \mathbf{z}_k^\top (\mathbf{P}_{k|k-1}^{-1} + \mathbf{H}^\top\mathbf{R}_k^{-1} \mathbf{H}) \mathbf{z}_k - 2\mathbf{z}_k^\top (\mathbf{P}_{k|k-1}^{-1} \mathbf{m}_{k|k-1} + \mathbf{H}^\top\mathbf{v}_k)\right)\right) ,
\end{split}
\end{equation}
which implies that the mean $\mathbf{m}_{k|k}^\text{GB}$ and covariance $\mathbf{P}_{k|k}^\text{GB}$ of the GB posterior $p_{\mathcal{L}}$ are:
\begin{equation}
\begin{split}
        &\mathbf{P}_{k|k}^\text{GB} = \left (\mathbf{P}_{k|k-1}^{-1} + \mathbf{H}^\top \mathbf{R}_k^{-1} \mathbf{H}  \right)^{-1}\\
        &\mathbf{m}_{k|k}^\text{GB} = \mathbf{P}_{k|k}^\text{GB} \left(\mathbf{P}_{k|k-1}^{-1} \mathbf{m}_{k|k-1} + \mathbf{H}^\top \mathbf{v}_k \right).
\end{split}
\end{equation}
As with the typical Kalman filter, those equations can be written in the form
\begin{equation}
    \begin{split}
        &\mathbf{P}_{k|k}^\text{GB} = \left(\mathbf{P}_{k|k-1}^{-1} + \mathbf{H}^\top \mathbf{R}_k^{-1} \mathbf{H}\right)^{-1} \\
        &\mathbf{K}_k^\text{GB} = \mathbf{P}_{k|k}^\text{GB} \mathbf{H}^\top \mathbf{R}_k^{-1} \\
        &\mathbf{m}_{k|k}^\text{GB} = \mathbf{m}_{k|k-1} + \mathbf{K}_k^\text{GB} (\mathbf{R}_k \mathbf{v}_k - \mathbf{H} \mathbf{m}_{k|k-1}),
    \end{split}
\end{equation}
where $\mathbf{K}_k^\text{GB}$ is the Kalman gain matrix. The typical Kalman filter equations---those used for STGPs---are recovered when $\mathbf{v}_k := \mathbf{R}^{-1}_k \mathbf{y}_k$ and $\mathbf{R}_k^{-1}:=\sigma^{-2}\mathbf{I}_{n_s}$.

\paragraph{Weighted score matching and Gaussian likelihood.} 
We now show that the loss function defined in \cref{appendix:SMLF} and Gaussian likelihood lead to a quadratic loss in $\mathbf{f}_k$. We have a Gaussian likelihood, which gives a score function of the form:
\begin{equation}
    p(\mathbf{y}_k|\mathbf{z}_k,\x_k) \propto \exp\left(-\frac{1}{2}(\mathbf{y}_k-\mathbf{f}_k)^{\top} \sigma^{-2}\mathbf{I}_{n_s}(\mathbf{y}_k-\mathbf{f}_k) \right) \implies \mathbf{s}_{f,k} = \nabla_y\log p(\y_k|\f_k)=(\mathbf{f}_k - \mathbf{y}_k)^{\top}\sigma^{-2}\mathbf{I}_{n_s},
\end{equation}
 where $\mathbf{f}_k := \mathbf{H}\mathbf{z}_k$, $\mathbf{z}_k = (z_{k,1},...,z_{k,n_s})^{\top}$, $\mathbf{y}_k=(y_{k,1},...,y_{k,n_s})^{\top}$ and $\mathbf{s}_{f,k} = (s_{f,k,1},...,s_{f,k,n_s})^{\top}$. Then, the loss is given by
\begin{align}
\mathcal{L}(\mathbf{x}_k, \mathbf{y}_k, \mathbf{z}_k) 
    =& \frac{1}{n_s}\sum_{j=1}^{n_s}  \left( w_{k,j} s_{f,k,j}\right)^2 + 2 \nabla_{y}\left (w_{k,j}^2 s_{f,k,j}  \right) \nonumber\\
    =& \frac{1}{n_s}\sum_{j=1}^{n_s} \left(w_{k,j} \frac{(f_{k,j}-y_{k,j})}{\sigma^2}\right)^2 +2 \nabla_{y}\left (w_{k,j}^2 \frac{(f_{k,j}-y_{k,j})}{\sigma^{2}}  \right) \nonumber\\
    =& \frac{1}{n_s}\sum_{j=1}^{n_s} \frac{1}{\sigma^4} w_{k,j}^{2}f_{k,j}^{2} -\frac{2}{\sigma^4}f_{k,j}y_{k,j}w_{k,j}^{2} + \frac{1}{\sigma^4} w_{k,j}^{2}y_{k,j}^{2}  + \frac{2}{\sigma^2}\nabla_{y}(w_{k,j}^2 f_{k,j})-\frac{2}{\sigma^2}\nabla_{y}(w_{k,j}^2y_{k,j})  \nonumber
\end{align}
Now let us group all the elements that do not depend on $f$ and call it $C(y_{k,j})$
\begin{align}
    \mathcal{L}(\mathbf{x}_k, \mathbf{y}_k, \mathbf{z}_k) =& \frac{1}{n_s}\sum_{j=1}^{n_s} \frac{1}{\sigma^4} w_{k,j}^{2}f_{k,j}^{2} -\frac{2}{\sigma^4}f_{k,j}y_{k,j}w_{k,j}^{2}  + \frac{2}{\sigma^2}\nabla_{y}(w_{k,j}^2 f_{k,j}) + C(y_{k,j}) \nonumber\\
    =& \frac{1}{n_s}\sum_{j=1}^{n_s} \frac{1}{\sigma^4} w_{k,j}^{2}f_{k,j}^{2} -\frac{2}{\sigma^4}f_{k,j}y_{k,j}w_{k,j}^{2}  + \frac{4}{\sigma^2}w_{k,j}f_{k,j}\nabla_{y} w_{k,j} + C(y_{k,j}) \nonumber\\
    = & \frac{1}{n_s}\sum_{j=1}^{n_s} \frac{1}{\sigma^4}w_{k,j}^2 f_{k,j}^2 -  \frac{2}{\sigma^4} w_{k,j}^2 \left (y_{k,j} -  2\sigma^2 (w_{k,j})^{-1}\nabla_y w_{k,j}\right) f_{k,j} + C(y_{k,j})  \nonumber
\end{align}
Next, we rewrite the loss in terms of vectors and matrices as follows:
\begin{align}
    \mathcal{L}(\mathbf{x}_k, \mathbf{y}_k, \mathbf{z}_k) =&\frac{1}{2 n_s} \left(\frac{2}{\sigma^4}  \mathbf{f}_k^\top \text{diag}(\mathbf{w}_k^2) \mathbf{f}_k -  \frac{4}{\sigma^4} \mathbf{f}_k^\top \left(\text{diag}(\mathbf{w}_k^2)\mathbf{y}_k -  2 \sigma^2 \text{diag}(\mathbf{w}_k) \nabla_y \mathbf{w}_k + C(\y_k)\right)\right) , \nonumber
\end{align}
where $C(\y_k) = \sum_{j=1}^{n_s}C(y_{k,j})$, $\text{diag}(\mathbf{w}_k^2) \in \mathbb{R}^{n_s \times n_s}$ is the  diagonal matrix of the vector $\mathbf{w}_k^2 = \mathbf{w}_k \odot \mathbf{w}_k$ for $\odot$ the element-wise multiplication operator, and $\nabla_y \mathbf{w}_k = (\nabla_y w_{k,1}, ..., \nabla_y w_{k,n_s})^\top$.  This leads to
\begin{equation}
\begin{split}
    &\mathbf{R}^{-1}(\mathbf{y}_k;\mathbf{w}_k):= \frac{2}{\sigma^4} \text{diag}(\mathbf{w}_k^2) \\
    &\mathbf{v}_k(\mathbf{y}_k;\mathbf{w}_k) := \frac{2}{\sigma^4} \left(\text{diag}(\mathbf{w}_k^2) \mathbf{y}_k - 2\sigma^2 \text{diag}(\mathbf{w}_k) \nabla_y \mathbf{w}_k \right).
\end{split}
\end{equation}

\paragraph{GP predictive}
The GB predictive can be written as:
\begin{align*}
    p_\mathcal{L}(\y_k|\y_{1:k-1}) = \int p(\y_k|\mathbf{z}_k)p_{\mathcal{L}}(\mathbf{z}_k|\mathbf{y}_{1:k-1})d\mathbf{z}_k
\end{align*}
where $p(\y_k|\mathbf{z}_k) = \mathcal{N}(\mathbf{y}_k;\mathbf{H}\mathbf{z}_k, \sigma^2\mathbf{I}_{n_s})$ is the likelihood and $p_{\mathcal{L}}(\mathbf{z}_k|\mathbf{y}_{1:k-1}) = \mathcal{N}(\mathbf{z}_k;\mathbf{m}_{k|k-1}, \mathbf{P}_{k|k-1})$ is the predict step defined in \cref{prop:ST-RCGP}. Since both densities are Gaussian, this integral is known and the solution is also a Gaussian of the form:
\begin{align*}
    p_\mathcal{L}(\y_k|\y_{1:k-1}) = \int \mathcal{N}(\mathbf{y}_k;\mathbf{H}\mathbf{z}_k, \sigma^2\mathbf{I}_{n_s})\mathcal{N}(\mathbf{z}_k;\mathbf{m}_{k|k-1}, \mathbf{P}_{k|k-1})d\mathbf{z}_k = \mathcal{N}(\mathbf{y}_k;\mathbf{H}\mathbf{m}_{k|k-1}, \mathbf{H}\mathbf{P}_{k|k-1}\mathbf{H}^{\top}+\sigma^2\mathbf{I}_{n_s})
\end{align*}
\subsection{Proof that ST-RCGP reproduces RCGP for \cref{prop:st-rcgp-equals-rcgp}} 
\label{appendix:strcgp-reproduces-rcgp}
We first define the necessary quantities. Let the posterior density of RCGP be
\begin{align*}
    p_\text{RCGP}(\mathbf{f}_{1:k} | \mathbf{x}_{1:k}, \mathbf{y}_{1:k}) & \propto p(\mathbf{f}_{1:k}|  \mathbf{x}_{1:k}) \exp(-k n_s\mathcal{L}_\text{RCGP}( \mathbf{x}_{1:k}, \mathbf{y}_{1:k}, \mathbf{f}_{1:k})) ,
\end{align*}
 where $p(\mathbf{f}_{1:k} |  \mathbf{x}_{1:k})$ is the prior density, and 
\begin{align}
    \mathcal{L}_\text{RCGP}(\mathbf{x}_{1:k}, \mathbf{y}_{1:k}, \mathbf{f}_{1:k}):= \frac{1}{k n_s}\sum_{i=1}^k \sum_{j=1}^{n_s} (w(\mathbf{x}_{i,j}, y_{i,j}) s_{f_i}(y_{i,j}))^2 + 2 \nabla_y (w^2(\mathbf{x}_{i,j}, y_{i,j}) s_{f_i}(y_{i,j})), \label{eq:appendix-rcgp-loss}
\end{align}
where $i$ indexes time, $j$ indexes spatial coordinates, and $k \in \{1,...,n_t\}$. Note that we now have dependence on $\mathbf{f}_k$ instead of $\mathbf{z}_k$ for simplicity of notation; however the two definitions are equivalent  since $\mathbf{f}_k = \mathbf{H} \mathbf{z}_k$. 

For fixed hyperparameters of $w$ (for example, $c, \beta$ for $w_\text{IMQ})$, the loss $\mathcal{L}$ from \cref{prop:ST-RCGP} relates to $\mathcal{L}_\text{RCGP}$ as
\begin{equation}
    \mathcal{L}_\text{RCGP}(\mathbf{x}_{1:k}, \mathbf{y}_{1:k}, \mathbf{f}_{1:k}) = \frac{1}{k}\sum_{i=1}^k \mathcal{L}(\mathbf{x}_{i}, \mathbf{y}_{i}, \mathbf{f}_{i}), \label{eq:summable-loss}
\end{equation}
which implies that the loss $\mathcal{L}_\text{RCGP}$ is \emph{summable} (can be broken down into a summation over $i=1,\dots,k$). Moreover, let the posterior filtering distribution of ST-RCGP on $\mathbf{f}_k$ be
    \begin{equation*}
    \begin{split}
        &p_{\text{ST-RCGP}}(\mathbf{f}_k | \mathbf{x}_{1:k}, \mathbf{y}_{1:k}) \propto p_\text{ST-RCGP}(\mathbf{f}_k | \mathbf{x}_{1:k}, \mathbf{y}_{1:k-1}) \exp(-n_s \mathcal{L}( \mathbf{x}_{k}, \mathbf{y}_k, \mathbf{f}_k)) \\[0.5em]
        &p_\text{ST-RCGP}(\mathbf{f}_k | \mathbf{x}_{1:k}, \mathbf{y}_{1:k-1})=\int p(\mathbf{f}_k | \mathbf{x}_{1:k}, \mathbf{f}_{k-1}) p_\text{ST-RCGP}(\mathbf{f}_{k-1} |\mathbf{x}_{1:k-1}, \mathbf{y}_{1:k-1}) d\mathbf{f}_{k-1},
    \end{split}
    \end{equation*}
where the loss $\mathcal{L}$ is specified as in \cref{prop:ST-RCGP}. 

The above definitions are such that the prior density is the same for $p_\text{RCGP}$ and $p_\text{ST-RCGP}$, and the loss $\mathcal{L}$, which is summable, is specified identically. If we further impose on the Gaussian process to be Markovian (the so-called Gauss-Markov process), which is a requirement for the Gaussian process to be expressed as a state-space model, these assumptions will allow us to postulate the following \cref{prop:st-rcgp-equals-rcgp}:

\begin{proposition}
Suppose that 
\begin{enumerate}[label=(\roman*)]
    \item the prior distribution $p(\mathbf{f}_{1:k} |  \mathbf{x}_{1:k})$ is identical for $p_\text{RCGP}$ and $p_\text{ST-RCGP}$;
    \item the loss function $\mathcal{L}$ is defined as in \cref{prop:ST-RCGP} and $\mathcal{L}_\text{RCGP}(\mathbf{x}_{1:k}, \mathbf{y}_{1:k}, \mathbf{f}_{1:k})$ as in  \cref{eq:appendix-rcgp-loss}, with weights that do not depend on past observations;
    \item the GP prior $f \sim \mathcal{N}(m, \kappa)$ is a Gauss-Markov process and can be expressed as a state-space model.
\end{enumerate}
Then,
\begin{enumerate}
    \item $p_{\operatorname{RCGP}}(\mathbf{f}_k | \mathbf{x}_{1:k}, \mathbf{y}_{1:k}) = p_{\operatorname{ST-RCGP}}(\mathbf{f}_k | \mathbf{x}_{1:k}, \mathbf{y}_{1:k})$ (filtering posteriors are equal)
    \item $p_{\operatorname{RCGP}}(\mathbf{f}_k |\mathbf{x}_{1:n_t}, \mathbf{y}_{1:n_t}) = p_{\operatorname{ST-RCGP}}(\mathbf{f}_k | \mathbf{x}_{1:n_t}, \mathbf{y}_{1:n_t})$ (smoothing solutions are equal).
\end{enumerate}
\end{proposition}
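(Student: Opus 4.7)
The plan is to show that, under non-adaptive weights and a Gauss--Markov prior, the batch RCGP joint posterior over $\mathbf{f}_{1:n_t}$ has precisely the structure of a hidden Markov model whose transitions match the ST-RCGP state-space model and whose emission factors are the time-local potentials $\phi_i(\mathbf{f}_i) := \exp(-n_s \mathcal{L}(\mathbf{x}_i, \mathbf{y}_i, \mathbf{f}_i))$. Once this is in place, the ST-RCGP filter of \cref{prop:ST-RCGP} becomes the forward recursion on this HMM and the ST-RCGP smoother becomes the backward pass, so both marginals must coincide with those of the batch joint.

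First I would apply the summable decomposition from \cref{eq:summable-loss} to rewrite the RCGP posterior weight $\exp(-k n_s \mathcal{L}_\text{RCGP})$ as the product $\prod_{i=1}^{k} \phi_i(\mathbf{f}_i)$. Next, I would invoke the Gauss--Markov property to factorize the GP prior as $p(\mathbf{f}_{1:k} \mid \mathbf{x}_{1:k}) = p(\mathbf{f}_1 \mid \mathbf{x}_1) \prod_{i=2}^{k} p(\mathbf{f}_i \mid \mathbf{f}_{i-1})$, where the transitions are exactly those induced by the SSM in \cref{eq: TemporalGPdiscreteSSM}. Combining these two factorizations yields
\begin{equation*}
p_{\text{RCGP}}(\mathbf{f}_{1:k} \mid \mathbf{x}_{1:k}, \mathbf{y}_{1:k}) \propto p(\mathbf{f}_1 \mid \mathbf{x}_1)\, \phi_1(\mathbf{f}_1) \prod_{i=2}^{k} p(\mathbf{f}_i \mid \mathbf{f}_{i-1})\, \phi_i(\mathbf{f}_i).
\end{equation*}
The non-adaptivity assumption is what keeps each $\phi_i$ depending on $(\mathbf{x}_i, \mathbf{y}_i, \mathbf{f}_i)$ only; if weights were allowed to depend on past $\mathbf{y}_{1:i-1}$, the chain structure would break.

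For the filtering claim, I would induct on $k$. The base case $k=1$ is immediate, since both $p_{\text{RCGP}}(\mathbf{f}_1 \mid \mathbf{y}_1)$ and $p_{\text{ST-RCGP}}(\mathbf{f}_1 \mid \mathbf{y}_1)$ are proportional to $p(\mathbf{f}_1)\phi_1(\mathbf{f}_1)$. In the inductive step, I would marginalize $\mathbf{f}_{1:k-2}$ out of the chain above, apply the inductive hypothesis, and integrate out $\mathbf{f}_{k-1}$ to obtain $p_{\text{RCGP}}(\mathbf{f}_k \mid \mathbf{y}_{1:k}) \propto \phi_k(\mathbf{f}_k) \int p(\mathbf{f}_k \mid \mathbf{f}_{k-1}) p_{\text{RCGP}}(\mathbf{f}_{k-1} \mid \mathbf{y}_{1:k-1})\, d\mathbf{f}_{k-1}$, which is exactly the predict--update cycle of \cref{prop:ST-RCGP}. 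The smoothing claim follows because both the batch and the sequential marginals are obtained by integrating the \emph{same} joint $p_{\text{RCGP}}(\mathbf{f}_{1:n_t} \mid \mathbf{y}_{1:n_t})$ over the complementary coordinates, and the ST-RCGP smoother is the RTS backward pass applied to the chain above.

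The main obstacle I anticipate is conceptual rather than technical: the factors $\phi_i$ are unnormalized potentials in $\mathbf{f}_i$ rather than bona fide conditional likelihoods of $\mathbf{y}_i$, so one cannot quote a textbook HMM filtering result verbatim. The remedy is to verify explicitly that the normalizing constants absorbed into the proportionality sign depend only on $\mathbf{y}_{1:k}$ (and not on $\mathbf{f}_{1:k}$), so the standard forward and backward recursions transport through unchanged. Beyond this bookkeeping, the argument is mechanical given the summability of $\mathcal{L}_\text{RCGP}$ and the Markov assumption on the prior.
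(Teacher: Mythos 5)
Your proposal is correct and follows the same overall strategy as the paper: exploit the summability of the loss (\cref{eq:summable-loss}) together with the Gauss--Markov factorisation of the prior to write the batch posterior as a chain of transition kernels and time-local potentials, then prove the filtering claim by forward induction and the smoothing claim by a backward pass. The one place where you genuinely diverge is the base case. The paper verifies $k=1$ by explicitly matching the closed-form batch RCGP Gaussian (mean and covariance from Proposition 3.1 of \citet{altamirano2024robustconjugategaussianprocess}) against the ST-RCGP filter update, via the Woodbury identity and the identification $\mathbf{K}_1 = \mathbf{H}\mathbf{P}_{1|0}^{\text{GB}}\mathbf{H}^\top$, $\mathbf{m}_1 = \mathbf{H}\mathbf{m}_{1|0}$. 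You instead observe that both posteriors are, by definition, proportional to $p(\mathbf{f}_1)\phi_1(\mathbf{f}_1)$, which is cleaner and logically sufficient provided you note that \cref{prop:ST-RCGP} already establishes that the Kalman-style update computes exactly this generalised-Bayes density (the paper's Woodbury computation is, in effect, an independent consistency check of that fact). Your handling of the smoothing step --- both marginals come from the same joint, and the RTS backward pass computes the marginals of a Markov chain --- is the same argument the paper makes by invoking Theorem 8.1 of \citet{Sarkka2013filtsmooth}, just stated at a higher level; to be fully rigorous you would need to cite or reprove that result for the generalised-Bayes chain, which the Markov assumption licenses. One small imprecision: with adaptive weights the potentials $\phi_i$ would still be local in $\mathbf{f}_i$, so the chain structure itself would not break; what fails is the identification of the sequential potentials with the batch RCGP potentials, since the latter depend only on the current observation. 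Your remark about verifying that the absorbed normalising constants are free of $\mathbf{f}_{1:k}$ is a sensible piece of bookkeeping that the paper handles implicitly by noting that proportionality between two densities implies equality.
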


\begin{proof} The proof is divided into two steps: the first to tackle claim 1.,  which is that the filtering posteriors are equal, and the second to claim that the smoothing distributions are equal (claim 2.). In both cases, we proceed inductively, with a base case and an inductive step. 

We first want to show that $p_{\operatorname{RCGP}}(\mathbf{f}_k |\mathbf{x}_{1:k}, \mathbf{y}_{1:k}) = p_{\operatorname{ST-RCGP}}(\mathbf{f}_k |\mathbf{x}_{1:k}, \mathbf{y}_{1:k})$.

\textbf{Step 1:}  $p_{\operatorname{RCGP}}(\mathbf{f}_k |\mathbf{x}_{1:k}, \mathbf{y}_{1:k}) = p_{\operatorname{ST-RCGP}}(\mathbf{f}_k |\mathbf{x}_{1:k}, \mathbf{y}_{1:k})$ (claim 1.). 

\underline{Base Case}:

Consider $k=1$. Then from the RCGP equations in \citet{altamirano2024robustconjugategaussianprocess} (Proposition 3.1) and in our own \cref{prop:ST-RCGP}, the RCGP posterior and the ST-RCGP posterior are:
\begin{align*}
    p_{\operatorname{RCGP}}(\mathbf{f}_1|\mathbf{x}_1, \mathbf{y}_{1}) &= \mathcal{N}(\mathbf{f}_1;\mu_{\operatorname{RCGP}},\Sigma_{\operatorname{RCGP}})\\[0.5em]
    p_{\operatorname{ST-RCGP}}(\mathbf{f}_1|\mathbf{x}_1, \mathbf{y}_{1})&= \mathcal{N}(\mathbf{f}_1;\mu_{\operatorname{ST-RCGP}},\Sigma_{\operatorname{ST-RCGP}}),
\end{align*}
where 
\begin{equation}
    \begin{split}
        \mu_{\operatorname{RCGP}} & = \mathbf{m}_1+\mathbf{K}_{1}\left(\mathbf{K}_1+\sigma^2 \mathbf{J}_{w_1}\right)^{-1}\left(\mathbf{y}_1-\mathbf{m}_{w_1}\right)  \\
        \Sigma_{\operatorname{RCGP}} &= \mathbf{K}_{1}\left(\mathbf{K}_1+\sigma^2 \mathbf{J}_{w_1}\right)^{-1}\sigma^2 \mathbf{J}_{w_1}
        \label{eq:mean-cov-rcgp}
    \end{split}
\end{equation}
and
\begin{equation}
    \begin{split}
        \mu_{\operatorname{ST-RCGP}} & = \mathbf{H}\mathbf{m}_{1|0}+\mathbf{H}\mathbf{P}_{1|1}^{\text{GB}}\mathbf{H}^{\top}\left(\sigma^{2} \mathbf{J}_{w_1}\right)^{-1}\left(\mathbf{y}_1-\hat{\mathbf{f}}_{w_1}\right)\\
        \Sigma_{\operatorname{ST-RCGP}} &= \mathbf{H}\left(\left(\mathbf{P}_{1|0}^{\text{GB}}\right)^{-1}+\mathbf{H}^{\top}\sigma^{-2}\mathbf{J}_{w_1}^{-1}\mathbf{H}\right)^{-1}\mathbf{H}^{\top}  
    \end{split}
\end{equation}
Then, we expand $\Sigma_{\operatorname{RCGP}}$ as follows:
\begin{equation*}
    \begin{split}
        \Sigma_{\operatorname{RCGP}} &=\mathbf{K}_{1}\left(\mathbf{K}_1+\sigma^2 \mathbf{J}_{w_1}\right)^{-1}\sigma^2 \mathbf{J}_{w_1} \\
        &=  \mathbf{K}_{1} - \mathbf{K}_{1} \left(\sigma^2 \mathbf{J}_{w_1}\right)^{-1} \left(\mathbf{K}_{1}^{-1} + \left(\sigma^2 \mathbf{J}_{w_1}\right)^{-1}\right)^{-1} \\
        &= \mathbf{K}_{1} - \mathbf{K}_{1} \left( \mathbf{K}_{1} + \sigma^2\mathbf{J}_{w_1}\right)^{-1} \mathbf{K}_1,
    \end{split}
\end{equation*}
where in the first line, we apply the Woodbury identity \citep[Chapter 2.1.3][]{golub2013matrix}, and in the second,  we use the fact that for two invertible matrices $A, B$, we have $(A^{-1}+B^{-1})^{-1} =A(A+B)^{-1}B$. Moreover, since $\mathbf{f}_k = \mathbf{H}\mathbf{z}_k$, $\Sigma_\text{ST-RCGP} = \mathbf{H} \mathbf{P}^\text{GB}_{1|1} \mathbf{H}^\top$, so that 
\begin{equation*}
    \begin{split}
    \Sigma_{\text{ST-RCGP}} &= \mathbf{H}\left(\left(\mathbf{P}_{1|0}^{\text{GB}}\right)^{-1}+\mathbf{H}^{\top}\sigma^{-2}\mathbf{J}_{w_1}^{-1}\mathbf{H}\right)^{-1}\mathbf{H}^{\top} \\
    &= \mathbf{H} \mathbf{P}_{1|0}^\text{GB}\mathbf{H}^\top - \mathbf{H} \mathbf{P}_{1|0}^\text{GB}\mathbf{H}^\top \left(\mathbf{H} \mathbf{P}_{1|0}^\text{GB}\mathbf{H}^\top  + \sigma^2 \mathbf{J}_{w_1}\right)^{-1} \mathbf{H} \mathbf{P}_{1|0}^\text{GB}\mathbf{H}^\top,
    \end{split}
\end{equation*}
where we again use the Woodbury identity. But, both matrices $\mathbf{H}\mathbf{P}_{1|0}^{\text{GB}}\mathbf{H}^{\top}$ and $ \mathbf{K}_1$ represent the covariance matrix of $p(\mathbf{f}_1 | \mathbf{x}_1)$, and thus must be equal by the assumption of identical prior (assumption (i)). Therefore, substituting $\mathbf{K}_1=\mathbf{H}\mathbf{P}_{1|0}^{\text{GB}}\mathbf{H}^{\top}$  in either of the expression for $\Sigma_\text{RCGP}$ or $\Sigma_\text{ST-RCGP}$ yields $\Sigma_\text{RCGP} = \Sigma_\text{ST-RCGP}$.

Now, we have
\begin{equation*}
    \mu_{\operatorname{RCGP}} =\mathbf{m}_1+\mathbf{K}_{1}\left(\mathbf{K}_1+\sigma^2 \mathbf{J}_{w_1}\right)^{-1}\left(\mathbf{y}_1-\mathbf{m}_{w_1}\right) =\mathbf{m}_1+\Sigma_{\text{RCGP}}\left(\sigma^{2}\mathbf{J}_{w_1}\right)^{-1}\left(\mathbf{y}_1-\mathbf{m}_{w_1}\right)
\end{equation*}
where in the second equality, we substitute $\Sigma_{\operatorname{RCGP}}$ from \cref{eq:mean-cov-rcgp}. Since $\mathbf{H}\mathbf{P}_{1|1}\mathbf{H}^{\top}$ is the covariance of $p_{\operatorname{ST-RCGP}}(\mathbf{f}_1|\mathbf{x}_{1}, \mathbf{y}_{1})$, then, $\mathbf{H}\mathbf{P}_{1|1}\mathbf{H}^{\top} = \Sigma_{\operatorname{ST-RCGP}} = \Sigma_{\operatorname{RCGP}}$ (last equality holds from previous step). Moreover, $\mathbf{H}\mathbf{m}_{1|0} = \hat{\mathbf{f}}_1$ and $\mathbf{m}_1$ are both the mean of $p(\mathbf{f}_1)$. Since the priors are identical, we then have $\mathbf{H}\mathbf{m}_{1|0} = \hat{\mathbf{f}}_1=\mathbf{m}_1$, and by extension $\mathbf{m}_{w_1} = \hat{\mathbf{f}}_{w_1}$ since the two distributions are defined with the same loss function (and thus have identical weights). Then, as before, substituting these quantities yields $\mu_\text{RCGP} = \mu_\text{ST-RCGP}$, concluding the base case.

\underline{Inductive Step}:

Suppose that $p_\text{ST-RCGP}(\mathbf{f}_{k-1} |\mathbf{x}_{1:k-1}, \mathbf{y}_{1:k-1}) = p_\text{RCGP}(\mathbf{f}_{k-1} |\mathbf{x}_{1:k-1}, \mathbf{y}_{1:k-1})$ for some $k >2$. Then,
\begin{equation}
\begin{split}
        p_\text{RCGP}(\mathbf{f}_k |\mathbf{x}_{1:k}, \mathbf{y}_{1:k}) &= \int p_\text{RCGP}(\mathbf{f}_{1:k} |\mathbf{x}_{1:k}, \mathbf{y}_{1:k}) d\mathbf{f}_{1:k-1} \\
        & \propto \int \exp(-\mathcal{L}(\mathbf{x}_{1:k}, \mathbf{y}_{1:k}, \mathbf{f}_{1:k})) p(\mathbf{f}_{1:k}|\mathbf{x}_{1:k})d\mathbf{f}_{1:k-1}
\end{split}
\end{equation}
But $p(\mathbf{f}_{1:k}|\mathbf{x}_{1:k}) = p(\mathbf{f}_k | \mathbf{f}_{1:k-1},\mathbf{x}_{1:k}) p(\mathbf{f}_{1:k-1}|\mathbf{x}_{1:k})=p(\mathbf{f}_k | \mathbf{f}_{k-1},\mathbf{x}_{1:k})p(\mathbf{f}_{1:k-1}|\mathbf{x}_{1:k})$ by the Markov assumption on $\mathbf{f}$ (assumption (iii)). That is, $\mathbf{f}_k$ is conditionally independent of $\mathbf{f}_{1:k-2}$ given $\mathbf{f}_{k-1}$. 

Also, $\exp(-\mathcal{L}(\mathbf{x}_{1:k}, \mathbf{y}_{1:k}, \mathbf{f}_{1:k})) = \exp(-\mathcal{L}(\mathbf{x}_{k}, \mathbf{y}_{k}, \mathbf{f}_{k})) \exp(-\mathcal{L}(\mathbf{x}_{1:k-1}, \mathbf{y}_{1:k-1}, \mathbf{f}_{1:k-1}))$ since the loss is summable. Therefore,
\begin{equation*}
    p_\text{RCGP}(\mathbf{f}_k |\mathbf{x}_{1:k}, \mathbf{y}_{1:k}) \propto \exp(-\mathcal{L}(\mathbf{x}_{k}, \mathbf{y}_{k}, \mathbf{f}_{k}))  \int p(\mathbf{f}_k|\mathbf{f}_{k-1}, \mathbf{x}_{1:k}) p(\mathbf{f}_{1:k-1}|\mathbf{x}_{1:k}) \exp(-\mathcal{L}(\mathbf{x}_{1:k-1}, \mathbf{y}_{1:k-1}, \mathbf{f}_{1:k-1})) d\mathbf{f}_{1:k-1},
\end{equation*}
where we substitute $p(\mathbf{f}_{1:k}|\mathbf{x}_{1:k}) =p(\mathbf{f}_k | \mathbf{f}_{k-1},\mathbf{x}_{1:k})p(\mathbf{f}_{1:k-1}|\mathbf{x}_{1:k})$ and $\exp(-\mathcal{L}(\mathbf{x}_{1:k}, \mathbf{y}_{1:k}, \mathbf{f}_{1:k})) = \exp(-\mathcal{L}(\mathbf{x}_{k}, \mathbf{y}_{k}; \mathbf{f}_{k})) \exp(-\mathcal{L}(\mathbf{x}_{1:k-1}, \mathbf{y}_{1:k-1}, \mathbf{f}_{1:k-1}))$. Now, using the definition of the $p_\text{RCGP}$ for $k-1$:
\begin{equation*}
    p_\text{RCGP}(\mathbf{f}_k |\mathbf{x}_{1:k}, \mathbf{y}_{1:k}) \propto \exp(-\mathcal{L}(\mathbf{x}_{k}, \mathbf{y}_{k}, \mathbf{f}_{k})) \int p(\mathbf{f}_k|\mathbf{f}_{k-1},\mathbf{x}_{1:k}) p_\text{RCGP}(\mathbf{f}_{1:k-1}|\mathbf{x}_{1:k-1},  \mathbf{y}_{1:k-1}) d\mathbf{f}_{1:k-1}.
    \end{equation*}
Splitting the integral into $\mathbf{f}_{1:k-2}$ and $\mathbf{f}_{k-1}$ we obtain:
\begin{align*}
    p_\text{RCGP}(\mathbf{f}_k |\mathbf{x}_{1:k}, \mathbf{y}_{1:k}) &\propto \exp(-\mathcal{L}(\mathbf{x}_{k}, \mathbf{y}_{k}, \mathbf{f}_{k})) \int p(\mathbf{f}_k | \mathbf{f}_{k-1},\mathbf{x}_{1:k}) \int p_\text{RCGP}(\mathbf{f}_{k-1}, \mathbf{f}_{1:k-2} |\mathbf{x}_{1:k-1}, \mathbf{y}_{1:k-1}) d\mathbf{f}_{1:k-2} d\mathbf{f}_{k-1}\\
    &= \exp(-\mathcal{L}(\mathbf{x}_{k}, \mathbf{y}_{k}, \mathbf{f}_{k})) \int p(\mathbf{f}_k | \mathbf{f}_{k-1},\mathbf{x}_{1:k}) p_\text{RCGP}(\mathbf{f}_{k-1} |\mathbf{x}_{1:k-1}, \mathbf{y}_{1:k-1}) d\mathbf{f}_{k-1} ,
    \end{align*}
where in the last equality we integrate out $\mathbf{f}_{1:k-2}$ (which integrates to 1 since $p_\text{RCGP}$ is a density). Finally, using the inductive step assumption and the definition of the density $p_\text{ST-RCGP}$, we have:
\begin{align*}
    p_\text{RCGP}(\mathbf{f}_k |\mathbf{x}_{1:k}, \mathbf{y}_{1:k}) &\propto \exp(-\mathcal{L}(\mathbf{x}_{k}, \mathbf{y}_{k}, \mathbf{f}_{k})) \int p(\mathbf{f}_k | \mathbf{f}_{k-1},\mathbf{x}_{1:k}) p_\text{ST-RCGP}(\mathbf{f}_{k-1} |\mathbf{x}_{1:k-1}, \mathbf{y}_{1:k-1}) d\mathbf{f}_{k-1}  \\
    &\propto p_\text{ST-RCGP} (\mathbf{f}_k |\mathbf{x}_{1:k}, \mathbf{y}_{1:k}), 
    \end{align*}

Therefore,  $p_\text{RCGP}(\mathbf{f}_k | \mathbf{x}_{1:k}, \mathbf{y}_{1:k}) \propto p_\text{ST-RCGP}(\mathbf{f}_k | \mathbf{x}_{1:k}, \mathbf{y}_{1:k})$. However, since both sides are densities, the proportionality implies equality; that is, $p_\text{RCGP}(\mathbf{f}_k | \mathbf{x}_{1:k}, \mathbf{y}_{1:k}) = p_\text{ST-RCGP}(\mathbf{f}_k | \mathbf{x}_{1:k}, \mathbf{y}_{1:k})$, which concludes the first step of the proof.
Now, we want to show that the smoothing solutions are equal (claim 2.). We use a similar approach; however, the proof starts from the largest possible value of $k$ and then goes down in values.

\textbf{Step 2:} We want to show that $p_{\text{RCGP}}(\mathbf{f}_k |\mathbf{x}_{1:n_t}, \mathbf{y}_{1:n_t}) = p_{\mathcal{L}}(\mathbf{f}_k |\mathbf{x}_{1:n_t}, \mathbf{y}_{1:n_t})$.

\underline{Base Case:}

Consider $k = n_t$. Then, the smoothing and filtering distributions are identically defined. Therefore, by \textbf{Step 1}, $p_\text{RCGP}(\mathbf{f}_{n_t} |\mathbf{x}_{1:n_t}, \mathbf{y}_{1:n_t}) = p_\text{ST-RCGP}(\mathbf{f}_{n_t} |\mathbf{x}_{1:n_t}, \mathbf{y}_{1:n_t})$. 

\underline{Inductive Step:}

Suppose that $p_\text{ST-RCGP}(\mathbf{f}_{k+1} |\mathbf{x}_{1:n_t}, \mathbf{y}_{1:n_t}) = p_\text{RCGP}(\mathbf{f}_{k+1} |\mathbf{x}_{1:n_t}, \mathbf{y}_{1:n_t})$ for $k \leq n_t - 1$. We want to show that $p_\text{ST-RCGP}(\mathbf{f}_{k} |\mathbf{x}_{1:n_t}, \mathbf{y}_{1:n_t}) = p_\text{RCGP}(\mathbf{f}_{k} |\mathbf{x}_{1:n_t}, \mathbf{y}_{1:n_t})$. Then, by Theorem 8.1 of \citet{Sarkka2013filtsmooth}, which requires that $\mathbf{f}_k$ be independent of $\mathbf{y}_{k+1:n_t}$ given $\mathbf{f}_{k+1}$ (satisfied by Markov assumption (iii)), 
\begin{equation}
    \begin{split}
         p_\text{RCGP}(\mathbf{f}_k |\mathbf{x}_{1:n_t}, \mathbf{y}_{1:n_t}) = p_\text{RCGP}(\mathbf{f}_k |\mathbf{x}_{1:k}, \mathbf{y}_{1:k}) \int \left [\frac{p(\mathbf{f}_{k+1} | \mathbf{f}_k) p_\text{RCGP}(\mathbf{f}_{k+1}|\mathbf{x}_{1:n_t},\mathbf{y}_{1:n_t})}{p_\text{RCGP}(\mathbf{f}_{k+1} |\mathbf{x}_{1:k+1}, \mathbf{y}_{1:k})} \right] d\mathbf{f}_{k+1}. \label{eq:smoothing-rcgp-strcgp}
    \end{split}
\end{equation}
However, by \textbf{Step 1}, $p_\text{RCGP}(\mathbf{f}_k |\mathbf{x}_{1:k}, \mathbf{y}_{1:k}) = p_\text{ST-RCGP}(\mathbf{f}_k |\mathbf{x}_{1:k}, \mathbf{y}_{1:k})$, and the inductive step implies $p_\text{RCGP}(\mathbf{f}_{k+1}|\mathbf{x}_{1:n_t}, \mathbf{y}_{1:n_t}) = p_\text{ST-RCGP}(\mathbf{f}_{k+1}|\mathbf{x}_{1:n_t}, \mathbf{y}_{1:n_t})$. Therefore, there remains to show that $p_\text{RCGP}(\mathbf{f}_{k+1}|\mathbf{x}_{1:k+1}, \mathbf{y}_{1:k}) = p_\text{ST-RCGP}(\mathbf{f}_{k+1}|\mathbf{x}_{1:k+1}, \mathbf{y}_{1:k})$. But
\begin{subequations}
\begin{align}
    p_\text{RCGP}(\mathbf{f}_{k+1}|\mathbf{x}_{1:k+1}, \mathbf{y}_{1:k}) &= \int p(\mathbf{f}_{k+1} | \mathbf{f}_k) p_\text{RCGP}(\mathbf{f}_k |\mathbf{x}_{1:k}, \mathbf{y}_{1:k}) d\mathbf{f}_k \label{subeq:smoothing-line-1} \\
    &= \int p(\mathbf{f}_{k+1} | \mathbf{f}_k) p_\text{ST-RCGP}(\mathbf{f}_k |\mathbf{x}_{1:k}, \mathbf{y}_{1:k}) d\mathbf{f}_k \label{subeq:smoothing-line-2}\\ &= p_\text{ST-RCGP}(\mathbf{f}_{k+1} |\mathbf{x}_{1:k+1}, \mathbf{y}_{1:k}),\label{subeq:smoothing-line-3}
\end{align}
\end{subequations}
where in \cref{subeq:smoothing-line-1} we integrate and expand $p_\text{RCGP}(\mathbf{f}_k, \mathbf{f}_{k+1}|\mathbf{x}_{1:k+1}, \mathbf{y}_{1:k})$, in \cref{subeq:smoothing-line-2}, we apply the result from Step 1, and in \cref{subeq:smoothing-line-3}, we use the definition of $p_\text{ST-RCGP}$. We conclude that

Therefore, substituting the above quantities, 
\begin{equation*}
    \begin{split}
        p_\text{RCGP}(\mathbf{f}_k |\mathbf{x}_{1:n_t}, \mathbf{y}_{1:n_t}) &= p_\text{ST-RCGP}(\mathbf{f}_k |\mathbf{x}_{1:k}, \mathbf{y}_{1:k}) \int \left [\frac{p(\mathbf{f}_{k+1} | \mathbf{f}_k) p_\text{ST-RCGP}(\mathbf{f}_{k+1}|\mathbf{x}_{1:n_t}, \mathbf{y}_{1:n_t})}{p_\text{ST-RCGP}(\mathbf{f}_{k+1} |\mathbf{x}_{1:k+1}, \mathbf{y}_{1:k})} \right] d\mathbf{f}_{k+1} \\
        &= p_\text{ST-RCGP}(\mathbf{f}_k |\mathbf{x}_{1:n_t}, \mathbf{y}_{1:n_t}).
    \end{split}
\end{equation*}
This completes our inductive step and the proof.
\end{proof}
\subsection{Proof of \cref{prop:robustness}} 
\label{appendix:robustness}
To prove the robustness of ST-RCGP, we rely on the fact that ST-RCGP and RCGP share the same distribution for spatio-temporal data; therefore, we could use the following result from \citet{altamirano2024robustconjugategaussianprocess} adapted to the spatio-temporal setting:

\begin{proposition}[\citet{altamirano2024robustconjugategaussianprocess}]
    \label{appendix:rcgp_robustness}
Suppose $\f \sim \mathcal{GP}(m,k)$,  $(\epsilon_1,...,\epsilon_N)^{\top} \sim \mathcal{N}(0, \mathbf{I}_N \sigma^2)$, and let $C_k \in \R; k=1,2$ be constants independent of $y_{m,j}^c$. 
Then, for RCGP regression with $\sup_{\x\in\mathcal{X},y\in\mathcal{Y}} w(\x,y) < \infty$ has the PIF  
    \begin{equation}
        \operatorname{PIF}(y_{m,j}^{c}, D) \leq  C_1 (w(\x_{m,j},y^{c}_{m,j})^2 y^{c}_{m,j})^2 + C_2.
    \end{equation} 
    Thus, if $\forall \x \in \mathcal{X}$, $\sup_{y\in\mathcal{Y}}\left|y \cdot w(\x,y)^2\right| < \infty$ ,  RCGP is robust since $\sup_{y^{c}_{m,j}\in \mathcal{Y}} |\operatorname{PIF}_{\operatorname{RCGP}}(y^{c}_{m,j},  D)| < \infty$.
\end{proposition}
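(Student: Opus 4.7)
The plan is to recognise that both $p_\mathcal{L}(\f\mid D)$ and $p_\mathcal{L}(\f\mid D^c_{m,j})$ are multivariate Gaussians whose means and covariances are the RCGP formulas in \cref{eq: RCGP_gp_posterior} (evaluated jointly over all training inputs). The KL divergence between two $N$-dimensional Gaussians has a standard closed form, so writing $\mu,\Sigma$ for the posterior moments based on $D$ and $\mu^c,\Sigma^c$ for those based on $D^c_{m,j}$ gives
\begin{align*}
\operatorname{PIF}(y^c_{m,j},D) = \tfrac{1}{2}\Big[\log\tfrac{|\Sigma^c|}{|\Sigma|}-N+\operatorname{tr}\big((\Sigma^c)^{-1}\Sigma\big)+(\mu^c-\mu)^\top(\Sigma^c)^{-1}(\mu^c-\mu)\Big].
\end{align*}
The entire proof then reduces to bounding each of the four terms in the bracket in $y^c_{m,j}$.

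The crucial structural observation is that $\mathbf{J}_{\mathbf{w}^c}$ and $\mathbf{J}_{\mathbf{w}}$ differ only in a single diagonal entry, namely the $(m,j)$-th one, since only $w(\x_{m,j},y^c_{m,j})$ changes. Hence $\Sigma^c$ is a rank-one perturbation of $\Sigma$, and I will unpack this via the Sherman--Morrison identity applied to $(\mathbf{K}+\sigma^2\mathbf{J}_\mathbf{w})^{-1}$ with rank-one update direction $\mathbf{e}_{i^*}$. Because $\sup w<\infty$ implies $\mathbf{J}_\mathbf{w}=\operatorname{diag}(\sigma^2 \mathbf{w}^{-2}/2)$ is bounded below entrywise by a positive constant, the Sherman--Morrison denominator is bounded away from $0$ uniformly in $y^c_{m,j}$. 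From this, the log-det, trace, and dimension contributions collapse into a term that is bounded uniformly in $y^c_{m,j}$; all of that is absorbed into the constant $C_2$.

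For the quadratic form $(\mu^c-\mu)^\top(\Sigma^c)^{-1}(\mu^c-\mu)$, I will decompose $\mu^c-\mu$ into the contribution from replacing $y_{m,j}$ by $y^c_{m,j}$ (which perturbs $\mathbf{y}-\mathbf{m}_\mathbf{w}$ only at index $i^*$) and the contribution from the change of $\mathbf{J}_\mathbf{w}$ (handled again by Sherman--Morrison). To make the scaling in $y^c_{m,j}$ visible, I would apply the Woodbury identity to rewrite $(\mathbf{K}+\sigma^2\mathbf{J}_\mathbf{w})^{-1}=\sigma^{-2}\mathbf{J}_\mathbf{w}^{-1}-\sigma^{-4}\mathbf{J}_\mathbf{w}^{-1}(\mathbf{K}^{-1}+\sigma^{-2}\mathbf{J}_\mathbf{w}^{-1})^{-1}\mathbf{J}_\mathbf{w}^{-1}$ so that the $i^*$-th coordinate of the relevant vector carries an explicit factor of $2w(\x_{m,j},y^c_{m,j})^2/\sigma^2$ multiplying the residual $y^c_{m,j}-m(\x_{m,j})-\sigma^2\nabla_y\log w^2$. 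The identity $w^2\nabla_y\log w^2=\nabla_y w^2$ together with $\sup w<\infty$ keeps the centering-gradient part bounded, leaving the dominant dependence on $y^c_{m,j}$ as $w(\x_{m,j},y^c_{m,j})^2 y^c_{m,j}$. Squaring and folding constants (norms of $\mathbf{K}$, $\|(\mathbf{K}+\sigma^2\mathbf{J}_\mathbf{w})^{-1}\|$, $m(\x_{m,j})$, $y_{m,j}$) into $C_1$ and $C_2$ then delivers the stated bound $C_1(w(\x_{m,j},y^c_{m,j})^2 y^c_{m,j})^2+C_2$. The final robustness claim is then immediate: the hypothesis $\sup_{y\in\mathcal{Y}} |y\cdot w(\x,y)^2|<\infty$ bounds $(w^2 y)^2$ uniformly, so the PIF is uniformly bounded in $y^c_{m,j}$.

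The main obstacle, as always in this sort of influence-function argument, is bookkeeping: tracking how the Sherman--Morrison denominator interacts with the Woodbury expansion so that both the covariance terms and the quadratic mean term can be cleanly separated into an ``always bounded'' part (which becomes $C_2$) and a part that grows exactly like $(w^2 y)^2$ (which becomes $C_1(w^2 y)^2$). The remaining algebra is routine; the only subtlety worth flagging is that spatio-temporal indexing does not interfere with this argument, since the proof only relies on the perturbation being rank one in the joint $N$-vector of observations.
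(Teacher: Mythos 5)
You should first be aware that the paper does not actually prove this statement: \cref{appendix:rcgp_robustness} is imported verbatim as Proposition 3.2 of \citet{altamirano2024robustconjugategaussianprocess}, and the surrounding appendix only \emph{verifies its hypotheses} (boundedness of $w_{\mathrm{IMQ}}$ and of $y\,w_{\mathrm{IMQ}}^2$ with the adaptive $\gamma$ and $c$) in order to deduce \cref{prop:robustness}. So the comparison is really against the proof in that reference, and your route is essentially the same one: write both posteriors as Gaussians via \cref{eq: RCGP_gp_posterior}, use the closed-form Gaussian KL, observe that the contamination perturbs $\mathbf{J}_{\mathbf{w}}$ in a single diagonal entry and $\mathbf{y}-\mathbf{m}_{\mathbf{w}}$ in a single coordinate, control the log-determinant and trace terms uniformly (your observation that $\mathbf{J}_{\mathbf{w}^c}\succeq \tfrac{\sigma^2}{2(\sup w)^2}\mathbf{I}$ keeps the Sherman--Morrison denominator, equivalently the determinant ratio, bounded away from zero is the right mechanism), and extract the factor $w(\x_{m,j},y^c_{m,j})^2\,y^c_{m,j}$ from the mean difference via the Woodbury rewriting $\mathbf{K}(\mathbf{K}+\sigma^2\mathbf{J}_{\mathbf{w}})^{-1} = \Sigma\,(\sigma^2\mathbf{J}_{\mathbf{w}})^{-1}$.

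One step in your sketch does not follow from the stated hypotheses as you claim. At the contaminated index, the vector $\mathrm{diag}(\mathbf{w}^2)(\mathbf{y}-\mathbf{m}_{\mathbf{w}})$ contributes both $w(\x_{m,j},y^c_{m,j})^2\bigl(y^c_{m,j}-m(\x_{m,j})\bigr)$ \emph{and} the term $-\sigma^2\,\nabla_y w^2(\x_{m,j},y^c_{m,j})$, and you assert that ``$w^2\nabla_y\log w^2 = \nabla_y w^2$ together with $\sup w<\infty$ keeps the centering-gradient part bounded.'' Boundedness of $w$ does not bound $\nabla_y w^2 = 2w\,\nabla_y w$: one additionally needs $\sup_{y}\lvert w(\x,y)\,\nabla_y w(\x,y)\rvert<\infty$ (or an equivalent decay/Lipschitz condition on $y\mapsto w(\x,y)^2$), which is an explicit assumption in the original reference and is satisfied by the IMQ but is not a consequence of $\sup w<\infty$ alone. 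As stated, a bounded $w$ with unbounded derivative would break your absorption of this term into $C_2$. Everything else --- the rank-one perturbation bookkeeping and the final deduction of robustness from $\sup_y\lvert y\,w(\x,y)^2\rvert<\infty$ --- is sound.
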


Hence, it suffices to verify that the proposed weight function satisfies the necessary conditions for robustness presented in \cref{appendix:rcgp_robustness}.

    The weight function and the hyperparameter recommended are:
    \begin{equation}
    w_\text{IMQ}(\mathbf{x}, y) = \beta \left ( 1 + \frac{(y-\gamma(\mathbf{x}))^2}{c(\mathbf{x})^2}\right)^{-\frac{1}{2}},
\end{equation} 
with $\beta = \frac{\sigma^2}{2}$, $\gamma(\mathbf{x}):= \mathbb{E}_{p_\mathcal{L}}[y]$, and $c^2(\mathbf{x}):=\mathbb{E}_{p_\mathcal{L}}[(y - \gamma(\mathbf{x}))^2]$. 

It is straightforward to verify that $w_\text{IMQ}(\mathbf{x}, y) \leq \beta$ for all $\mathbf{x} \in \mathcal{X}$ and $y \in \mathcal{Y} $. Since $\beta = \frac{\sigma^2}{2} < +\infty $, it follows that $\sup_{\mathbf{x}\in\mathcal{X}, y\in\mathcal{Y}} w(\mathbf{x}, y) < +\infty$. Thus, with the recommended hyperparameters, $w$ satisfies the first condition. 

Now, we need to check the second condition, which is that $\forall \x\in\mathcal{X},\, \sup_{ y\in\mathcal{Y}}\left|y\right| \cdot w(\mathbf{x},y)^2 < +\infty$. To show this, let us consider an arbitrary $\x\in\mathcal{X}$ and two cases for $y$:
\paragraph{Case 1: $|y|\leq|\gamma(\x)|+|c(\x)|$}
Since $w_\text{IMQ}(\mathbf{x}, y) \leq \beta$ for all $\mathbf{x} \in \mathcal{X}$ and $y \in \mathcal{Y}$, it follows:
\begin{align*}
    \left|y\right| \cdot w(\mathbf{x},y)^2 &\leq |y| \beta^2 \leq \beta^2 (|\gamma(\x)|+|c(\x)|)
\end{align*}
Which implies that, as long as $|\gamma(\x)|<\infty$ and $|c(\x)|<\infty$ we have that:
\begin{align*}
    \sup_{\substack{y\in\mathcal{Y} \\ s.t. |y|\leq|\gamma(\x)|+|c(\x)|\}}}\left|y\right| \cdot w(\mathbf{x},y)^2 <\infty
\end{align*}
\paragraph{Case 2: $|y|>|\gamma(\x)|+|c(\x)|$}
\begin{align*}
    \left|y\right| \cdot w(\mathbf{x},y)^2 &\leq |y| \beta^2 \frac{1}{ 1 + \frac{(y-\gamma(\mathbf{x}))^2}{c(\mathbf{x})^2}}\leq|y| \beta^2 \frac{1}{\frac{(y-\gamma(\mathbf{x}))^2}{c(\mathbf{x})^2}} = |y| \beta^2 \frac{c(\mathbf{x})^2}{(y-\gamma(\mathbf{x}))^2} = \beta^2 \frac{c(\mathbf{x})^2}{|y|\left(1-\frac{\gamma(\mathbf{x})}{y}\right)^2}
\end{align*}
Now, we observe that this function is decreasing for $|y|>|\gamma(\x)|+|c(\x)|$, and particularly:
\begin{align*}
    \lim_{|y|\to+\infty}\beta^2 \frac{c(\mathbf{x})^2}{|y|\left(1-\frac{\gamma(\mathbf{x})}{y}\right)^2} = 0,
\end{align*}
since $|y|>|\gamma(\x)|$. Therefore, attains its maximum at $|y|=|\gamma(\x)|+|c(\x)|$, which leads to the following bound:
\begin{align*}
    \left|y\right| \cdot w(\mathbf{x},y)^2 &\leq \beta^2 \frac{c(\mathbf{x})^2}{|y|\left(1-\frac{\gamma(\mathbf{x})}{y}\right)^2}\leq \beta^2 (|\gamma(\x)|+|c(\x)|).
\end{align*}
Which implies that, as long as $|\gamma(\x)|<\infty$ and $|c(\x)|<\infty$ we have that:
\begin{align*}
    \sup_{\substack{y\in\mathcal{Y} \\ s.t. |y|>|\gamma(\x)|+|c(\x)|\}}}\left|y\right| \cdot w(\mathbf{x},y)^2 <\infty.
\end{align*}
Finally, let us check that  $|\gamma(\x)|<\infty$ and $|c(\x)|<\infty$. Since $\gamma(\x) = \mathbb{E}_{p_\mathcal{L}}[y]$ and $c^2(\mathbf{x}):=\mathbb{E}_{p_\mathcal{L}}[(y - \gamma(\mathbf{x}))^2] $ are the mean and the variance of $p_\mathcal{L}(y|\x)$ respectively, and $p_\mathcal{L}(y|\x)$ is a Gaussian, we know that the $|\gamma(\x)|<\infty$ and $|c(\x)|<\infty$.

Putting it all together, we have that:
\begin{align*}
   \forall x\in\mathcal{X},\, \sup_{y\in\mathcal{Y}}\left|y\right| \cdot w(\mathbf{x},y)^2 <\infty.
\end{align*}

\section{Additional Numerical Experiments}
\label{appendix:experiments}
\subsection{Performance Metrics} \label{appendix:perform-metrics}
Let $p_0(\cdot | \mathbf{x})$ be the density of the true data-generating process on the spatio-temporal grid $\mathbf{X}=(\mathbf{x}_1, \dots, \mathbf{x}_k)^\top \in \mathcal{X}^{n_t}$. The first performance metric we use is the root mean squared error (RMSE):
\begin{equation*}
    \mathrm{RMSE}(\mathbf{X}, \hat{y}) := \sqrt{\mathbb{E}_{Y\sim p_0(\cdot|\mathbf{X})}\left[(Y - \hat{y})^2 \right]} \approx \sqrt{\frac{1}{n_t n_s} \sum_{k=1}^{n_t} \sum_{j=1}^{n_s} \left( y_{k,j} - \hat{y}_{k,j} \right)^2}, 
\end{equation*} 
where $N=n_t n_s$ is  the number of data points, $y_i$ is the data, and $\hat{y}_i$ is the model's prediction. The second performance metric we use is the negative log predictive distribution (NLPD): 
\begin{equation*}
  \mathrm{NLPD}(\mathbf{X}, \hat{y}, \hat{\sigma}) := \mathbb{E}_{Y\sim p_0(\cdot|\mathbf{X})}\left[-\log p_\phi \left(Y | \hat{y}, \hat{\sigma}^2 \right) \right] \approx -\frac{1}{N} \sum_{i=1}^N \log p_{\phi}(y_i \mid \hat{y}_i, \hat{\sigma}_i^2), 
\end{equation*} 
where $\hat{\sigma}_i^2$ is the model's variance on the prediction $\hat{y}_i$, and $p_\phi$ is the Gaussian density. 
Finally, for the methods with weights such as RCGP and ST-RCGP, we introduce the expected weight ratio (EWR): 
\begin{equation*}
    \text{EWR}(\mathbf{X}) := \mathbb{E}_{Y\sim p_0(\cdot | \mathbf{X})} \left [ \frac{w(\mathbf{x},Y)}{w_\text{STGP}(\mathbf{x},Y)}  \right] \approx \frac{1}{n_t n_s}\sum_{k=1}^{n_t}\sum_{j=1}^{n_s} \left (1 + \frac{(y_{k,j} - \gamma(\mathbf{x}_{k,j}))^2}{c(\mathbf{x}_{k,j})^2} \right)^{-\frac{1}{2}},
\end{equation*}
where $w_\text{STGP}:=\sigma/\sqrt{2}$ is the constant weight for the standard spatio-temporal GP. Note that by construction, $\textrm{EWR} \leq 1$. 
In particular, if $w_k=w_\text{STGP}$ for all $k=1,...,n_t$, then $\text{EWR}=1$ and we recover the STGP posterior exactly. 
%
Therefore, since the STGP is not robust to outliers, EWRs that are near one are not necessarily optimal, since an EWR of 1 implies a solution---the vanilla STGP---which is not robust.
This metric thus conveys a tradeoff between statistical efficiency---a model's ability to recover the STGP in well-specified settings---and robustness to outliers. 
In well-specified settings, we then want $\text{EWR}$ to be larger and closer to one. When there are outliers, we wish the opposite so that we benefit from robustness. 

\subsection{Implementation Details}\label{appendix:implementation_details}
All experiments are run on the CPU of a 2020 13-inch MacBook Pro with M1 chip and 8GB of memory.

\paragraph{Definition of Matrices $\mathbf{A}_{k-1}$ and $\mathbf{\Sigma}_{k-1}$} In the following, we provide a few examples of the SDE matrices needed to compute $\mathbf{A}_{k-1}$ and $\mathbf{\Sigma}_{k-1}$, and explain how they are obtained in practice. We define $\mathbf{A}_{k-1}:= \exp(\mathbf{F}\Delta t_k)$ and
\begin{equation}
    \begin{split}
        &\mathbf{\Sigma}_{k-1}  := \int_{0}^{\Delta t_k} e^{\mathbf{F}(\Delta t_k - \tau)} \mathbf{L} \mathbf{Q}_c \mathbf{L}^\top e^{{\mathbf{F}(\Delta t_k - \tau)}^\top} d\tau,\label{eq: trans_and_cov_mat}
    \end{split}
\end{equation}
where $\mathbf{F} = \mathbf{I}_{n_s} \otimes \mathbf{F}_t \in \mathbb{R}^{n_s(\nu + 1) \times n_s(\nu + 1)}$, $\mathbf{L} = \mathbf{I}_{n_s} \otimes \mathbf{L}_t \in \mathbb{R}^{n_s (\nu + 1) \times n_s}$, and $\mathbf{Q}_c = \mathbf{K}_{s} \otimes \mathbf{Q}_{c,t} \in \mathbb{R}^{n_s  \times n_s (\nu + 1)}$ for $(\mathbf{K}_s)_{ij} = \kappa_s(\mathbf{s}_i, \mathbf{s}_j)$. The notation $\otimes$ refers to the Kronecker product of matrices, and the term $\exp(\mathbf{F} \Delta t_k)$ corresponds to the matrix exponential. 
From \cref{eq: trans_and_cov_mat}, we see that the matrices $\mathbf{A}_{k-1}$ and $\mathbf{\Sigma}_{k-1}$for $k=1,...,n_t$ are derived from $\mathbf{F}_t$, $\mathbf{L}_t$ and $\mathbf{Q}_{c,t}$ in the SDE formulation. 

\paragraph{SDE Matrices for Common Kernels} We provide in \cref{tab:common_kernels} common kernels and their SDE matrices. That is, for each kernel we select, we specify the matrix form of $\mathbf{F}_t, \mathbf{L}_t$, along with $\mathbf{Q}_{c,t}$. We later explain how these can be used to compute $\mathbf{A}_{k-1}$ and $\mathbf{Q}_{k-1}$. The parameters in \cref{tab:common_kernels} are the lengthscale $\ell$, the amplitude $\sigma_\kappa$, and the period length $\omega_0$; the inputs of the kernels are $t,t' \in \mathbb{R}$ and $\tau:=t - t'$; the functions used are the Gamma function $\Gamma$ and $K_\nu$ is the modified Bessel function of the second kind. 

\begin{table}[h!]
\centering
\caption{Table of kernels and their corresponding SDE matrices}
\label{tab:common_kernels}
\resizebox{\linewidth}{!}{%
\begin{tabular}{llllll}
\toprule
\textbf{Kernel} & \textbf{Formula} & \textbf{Parameters} & $\mathbf{F}_t$ & $\mathbf{L}_t$ & $\mathbf{Q}_{c,t}$ \\
\midrule
Wiener Process & $\kappa_\text{WP}(t,t'):=\sigma_\kappa^2\min(t,t')$ & $\sigma_\kappa$ & 0 & 1 & $\sigma_\kappa^2$ \\[4pt]
Exponential & $\kappa_\text{exp}(\tau):=\sigma_\kappa^2\exp(-\frac{\tau}{\ell})$& $\sigma_\kappa, \ell$ & $-1/\ell$ & 1 & $2\sigma_\kappa^2 / \ell$ \\[4pt]
Mat\'{e}rn $\nu=1/2$ kernel & $\kappa_\text{Mat.}(\tau):=\sigma_\kappa^2 \frac{2^{1-\nu}}{\Gamma(\nu)} \left (\frac{\sqrt{2\nu} \tau}{\ell} \right)^\nu K_\nu \left (\frac{\sqrt{2\nu}\tau}{\ell} \right)$ & $\sigma_\kappa, \ell$; $\lambda:=\sqrt{3}/\ell$ & $\begin{pmatrix} 0 & 1 \\ -\lambda^2 & -2\lambda \end{pmatrix}$ & $\begin{pmatrix}
    0 \\ 1
\end{pmatrix}$ & $4\lambda^3 \sigma_\kappa^2$ \\[4pt]
Mat\'{e}rn $\nu=3/2$ kernel & (same as above) & $\sigma_\kappa, \ell$; $\lambda:=\sqrt{5}/\ell$ & $\begin{pmatrix} 0 & 1 & 0 \\ 0 & 0 & 1 \\ -\lambda^3 & -3\lambda^2 & -3\lambda \end{pmatrix}$ & $\begin{pmatrix}
    0 \\ 0 \\ 1
\end{pmatrix}$ & $16\sigma_\kappa^2 \lambda^5 /3 $ \\[4pt]
Periodic & $\kappa_{\text{periodic}}(\tau):=\sigma_\kappa^2 \exp \left(- \frac{2\sin^2(\omega_0 \tau/2)}{\ell^2} \right)$ & $\sigma_\kappa, \ell, \omega_0$ & $\sum_{j=1}^n\begin{pmatrix} 0 & -\omega_0j \\ \omega_0 j & 0 \end{pmatrix}$ & $n\mathbf{I}_2$ & $\mathbf{0}$\\
\bottomrule
\end{tabular}
}
\end{table}

The matrix $\mathbf{A}_{k-1}$ can be straightforwardly computed in most programming languages that offer linear algebra computations. 

The covariance matrix $\mathbf{\Sigma}_{t, k-1}$ is rarely computed directly; instead, it is obtained by Matrix Fraction Decomposition (MFD) (see Chapter 6 of \citet{sarkka2019appliedsde} for an overview of the method). Provided that the matrix $\mathbf{F}_t$ is Hurwitz, that is, all its eigenvalues have strictly negative real parts, the procedure to compute $\mathbf{\Sigma}_{t, k-1}$  can be further simplified \cite{sarkka2019appliedsde} (i.e., no need for MFD) to 
\begin{equation*}
\begin{split}
    &\mathbf{\Sigma}_{t, 0} = \mathbf{\Sigma}_{t, \infty} \\
    &\mathbf{\Sigma}_{t, k-1} = \mathbf{\Sigma}_{t,\infty} - \mathbf{A}_{t, k-1} \mathbf{\Sigma}_{t, \infty} \mathbf{A}_{t, k-1}^\top,
    \end{split}
\end{equation*}
where the initial covariance $\mathbf{\Sigma}_{t,0}$ can be found via the steady-state solution by solving for $\mathbf{\Sigma}_{t,\infty}$ in the following continuous Lyapunov equation \citep{sarkka2019appliedsde,hartikainen2010kalman} 
\begin{equation*}
\mathbf{F}_t\mathbf{\Sigma}_{t, \infty} + \mathbf{\Sigma}_{t, \infty} \mathbf{F}_t^\top + \mathbf{L}_t \mathbf{Q}_{c,t} \mathbf{L}_t^\top = \mathbf{0}.
\end{equation*}

Note that we provide the full implementation for the ST-RCGP and the STGP in our code. 
\subsection{Robust Hyperparameter Optimisation} \label{appendix:hyperparm-optim}
\paragraph{Issue With RCGP}\label{appendix:hyperparam-optim-issue-rcgp}
Although the RCGP method is robust in inference, it still has well-known problems with hyperparameter optimisation when there are outliers. The method used is outlined in \citet{altamirano2024robustconjugategaussianprocess}, which corresponds to the leave-one-out cross-validation (LOO-CV). The optimisation objective is posed as follows:
\begin{equation*}
    \hat{\sigma}^2, \hat{\theta}:=\arg \max_{\sigma^2, \theta} \left \{\sum_{i=1}^n \log p^w(y_i | \mathbf{x}, \mathbf{y}_{-i}, \theta, \sigma^2) \right \},
\end{equation*}
where $\mathbf{y}_{-i} = (y_1, \dots, y_{i-1}, y_{i+1}, \dots, y_n)$. Moreover, the pseudo marginal likelihood is given by $p^w(y_i|\mathbf{x}, \mathbf{y}_{-i}, \theta, \sigma^2)=\mathcal{N}(\mu_i^R, \sigma_i^R + \sigma^2)$, where
\begin{equation*}
\begin{split}
    \mu_i^R &:= z_i + \mathbf{m}_i - \left[\left(\mathbf{K} + \sigma^2 \mathbf{J}_{\mathbf{w}} \right)^{-1} (\mathbf{y} - \mathbf{m}_{\mathbf{w}}) \right]_i \; \left [ \left(\mathbf{K} + \sigma^2 \mathbf{J}_{\mathbf{w}} \right)^{-1}  \right]_{ii}^{-1} \\
    \sigma_i^R &:= \left[\left(\mathbf{K} + \sigma^2 \mathbf{J}_{\mathbf{w}} \right)^{-1} \right]_{ii}^{-1} - \frac{\sigma^4}{2} w(x_i, y_i)^{-2},
\end{split}
\end{equation*}
where the notation follows that of \cref{eq: RCGP_gp_posterior}.

In \cref{fig:hyper-optim-issue-rcgp}, we simulate data $y$ with outliers to highlight the issue. We generate 80 data points from a GP with  Squared Exponential kernel of amplitude $\sigma_\kappa=0.2$, lengthscale $\ell=1$, and variance $\sigma^2=0.01$ in normally distributed data. We position four outliers around the $y=2$ mark. To fit the data, we follow the suggested approach from \citet{altamirano2024robustconjugategaussianprocess} and use a constant mean function equal to the arithmetic average of the data. In \cref{fig:hyper-optim-issue-rcgp}, we show the results. Clearly, the RCGP fitting the data with outliers fails to produce adequate hyperparameters, whereas the fit on outlier-free data does well. 
\begin{figure}[h]
    \centering
    \includegraphics[width=0.7\linewidth]{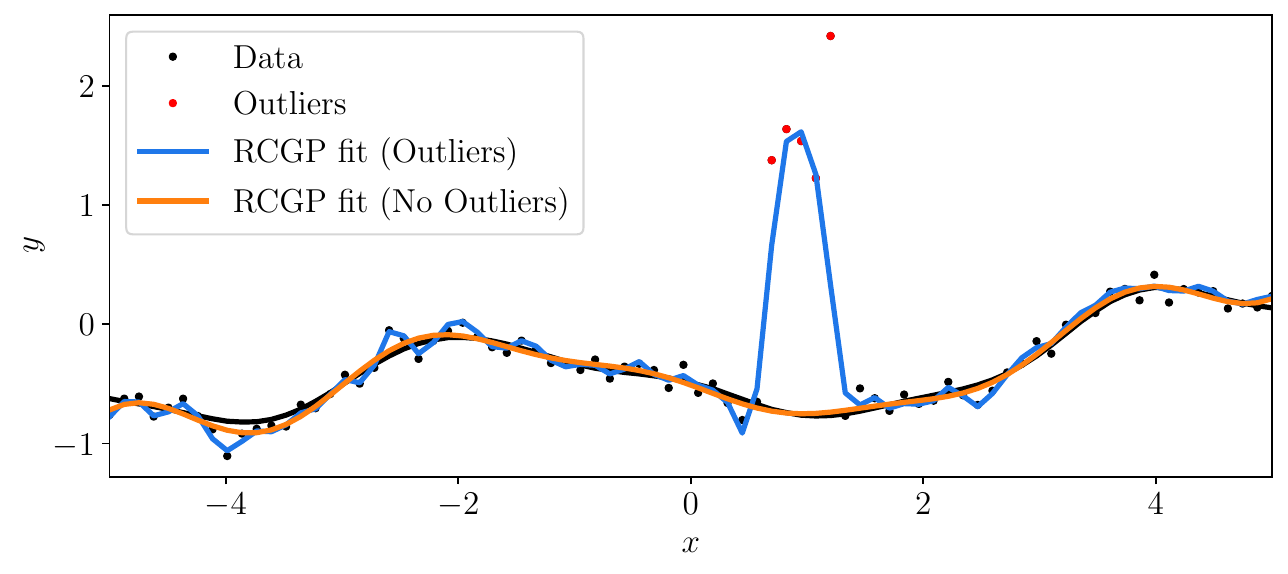}
    \caption{\textit{RCGP Parameter Optimisation With and Without Outliers.} The fit with outliers yields a kernel lengthscale of 0.19, a kernel amplitude of 0.99, and variance of 0.026. The fit without outliers yields a kernel lengthscale of 1.46, a kernel amplitude of 36.9, and variance of 0.0066.}
    \label{fig:hyper-optim-issue-rcgp}
\end{figure}

\paragraph{Issue with STGPs Hyperparameter Optimisation}
the optimisation objective for STGPs is typically
\begin{equation*}
   \varphi(\bm{\theta}) := -\sum_{k=1}^{n_t}\log p(\mathbf{y}_k | \mathbf{y}_{1:k-1}, \bm{\theta}) 
   = -\frac{1}{2} \sum_{k=1}^{n_t} \log |2\pi \mathbf{S}_k(\bm{\theta})| + \bm{\varepsilon}_k^\top(\bm{\theta}) \mathbf{S}^{-1}_k(\bm{\theta})\bm{\varepsilon}_k(\bm{\theta}), 
\end{equation*}
where $\mathbf{S}_k(\bm{\theta}) := \sigma^2 \mathbf{I}_{n_s} + \mathbf{H}\mathbf{P}_{k|k-1}(\bm{\theta})\mathbf{H}^\top$ and $\bm{\varepsilon}_k:=\mathbf{y}_k - \hat{\mathbf{f}}_k(\bm{\theta})$.

However, with outliers, this objective does not perform well since reasonable estimates of the latent function will have large $\bm{\varepsilon}_k$ and thus small $p(\mathbf{y}_k |\mathbf{y}_{1:k-1}, \bm{\theta})$---effectively fitting the outliers. This makes it so $\varphi$ may have a global minimum $\bm{\theta}^\star$ vastly different from the one in well-specified settings, in which case we could not recover the true latent function. We show in \cref{fig:STGP_fit_outliers} what happens when we fit STGP with a $\varphi$ objective. On the left side of the plot, we fit and make predictions with contaminated data. On the right side of the plot, we fit decontaminated data and make predictions on data with outliers. Both approaches yield undesired results, but most importantly, they are vastly different---a result that stems from introducing outliers in the data used for hyperparameter optimisation. Overall, \cref{fig:STGP_fit_outliers} shows that $\varphi$ is strongly affected by the presence of outliers and is not a robust objective for finding hyperparameters. 

\begin{figure}[h!]
    \centering
    \begin{subfigure}{0.45\textwidth}
        \centering
        \includegraphics[width=\textwidth]{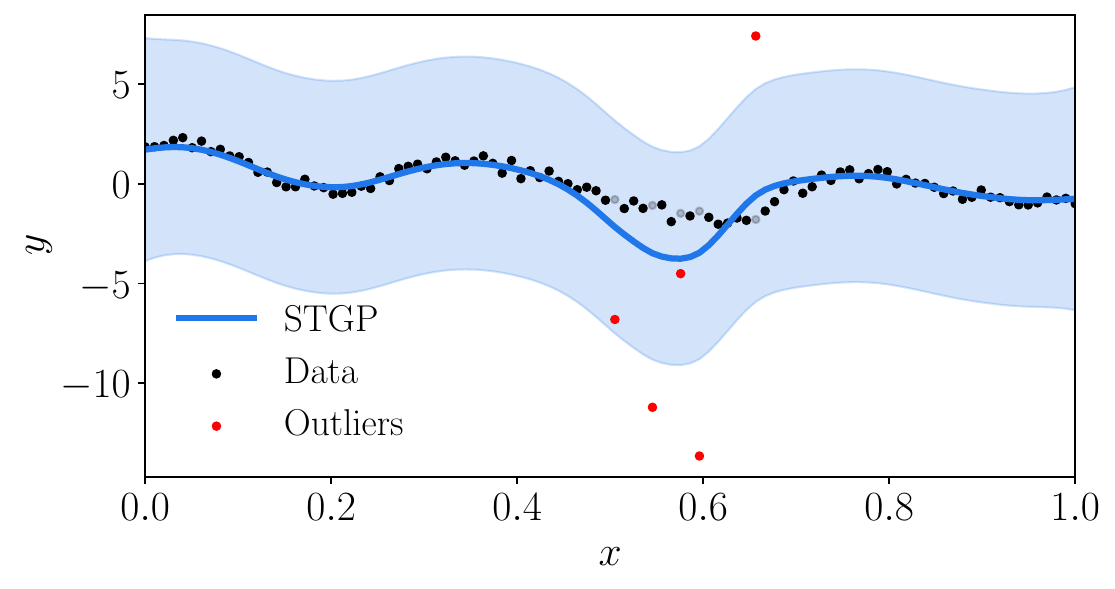}
        \caption{\textit{Fitting STGP on Outlier Data.} The hyperparameters obtained are: Lengthscale $\ell=0.118$, amplitude $\sigma_\kappa=3.438$, observation noise $\sigma^2=2.842$.}
    \end{subfigure}\hfill
    \begin{subfigure}{0.45\textwidth}
        \centering
        \includegraphics[width=\textwidth]{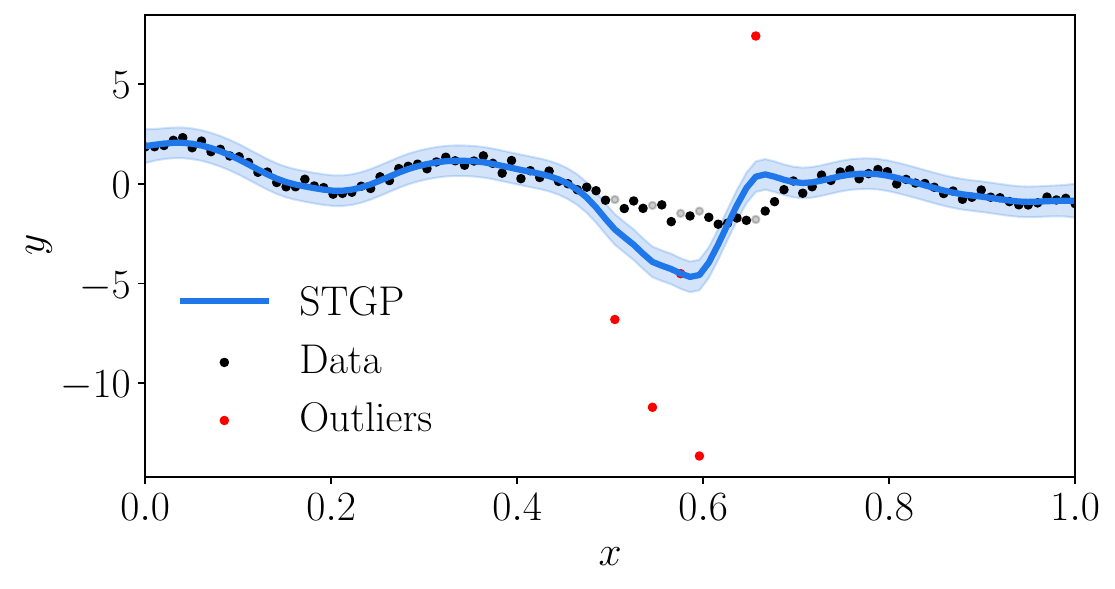}
        \caption{\textit{Fitting STGP on Outlier-Free Data.} The hyperparameters obtained are: Lengthscale $\ell=0.236$, amplitude $\sigma_\kappa=3.187$, observation noise $\sigma^2=0.055$.}
    \end{subfigure}
    \vspace{-4mm}
    \caption{\textit{Impact of outliers on hyperparameter optimisation with $\varphi$.} The data is generated in the same way as for \cref{fig: efficiency-coverage-fit-plot} but with 100 data points instead and five outliers from a $\mathcal{N}(0,\sigma=10)$. Both models use a Mat\'{e}rn 3/2 kernel. }
    \label{fig:STGP_fit_outliers}
\end{figure}

\paragraph{Improving ST-RCGP's Hyperparameter Optimisation: Temporal Setting} For temporal data, we now show that when there are outliers in the training data and we are using the ST-RCGP algorithm, choosing $\varphi_\text{GB}$ from \cref{Sec: Methods} as an objective function for hyperparameter optimisation is more reliable than using the regular $\varphi$. 
It is given by:
\begin{equation*}
    \varphi_\text{GB}(\bm{\theta}) := \sum_{k=1}^{n_t} w_k \left (\log|2\pi\mathbf{S}_k(\bm{\theta})| + \bm{\varepsilon}_k^\top(\bm{\theta})\mathbf{S}_k^{-1}(\bm{\theta})\bm{\varepsilon}_k(\bm{\theta}) \right),
\end{equation*}
where definitions are as above with $n_s=1$, and the weights $w_k$ are the ones from the ST-RCGP. In \cref{fig:ST-RCGP_fits_both_obj}, we fit the ST-RCGP on contaminated data with both objective functions to the best of our ability (that is, we adapt the learning rate and the number of optimisation steps as best as we can to get optimal results). 
We observe that using $\varphi_\text{GB}$ improves the optimisation process drastically compared to using $\varphi$, since way fewer steps were necessary, and doing so allows obtaining reasonable hyperparameter values for inference.
\begin{figure}[h!]
    \centering
    \begin{subfigure}{0.45\textwidth}
        \centering
        \includegraphics[width=\textwidth]{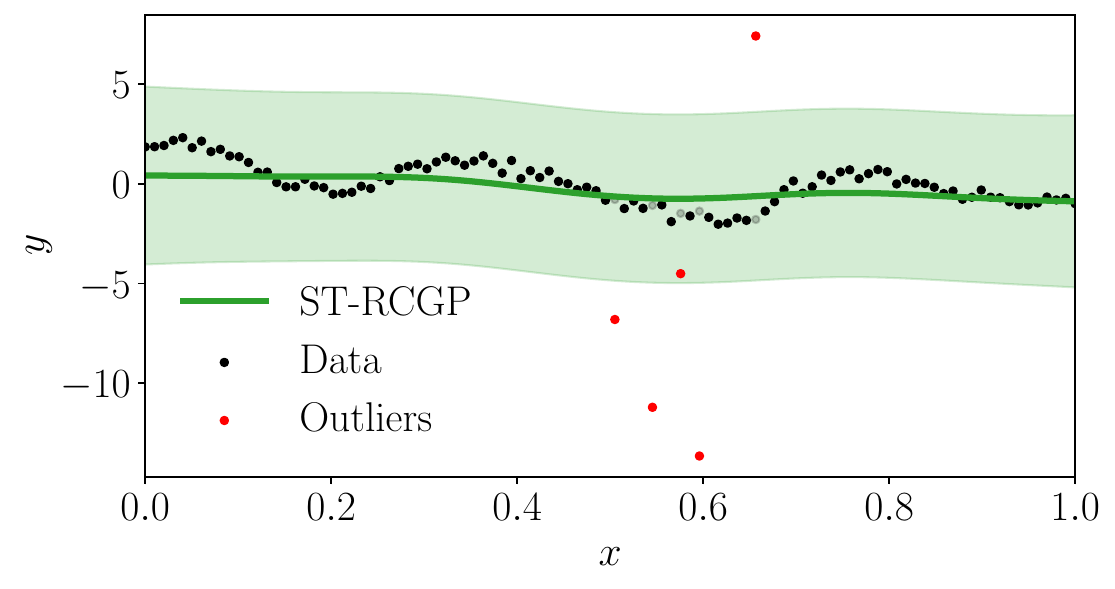}
        \caption{\textit{Fitting ST-RCGP on Outlier Data With $\varphi$.} The hyperparameters obtained are: Lengthscale $\ell=0.683$, amplitude $\sigma_\kappa=2.078$, observation noise $\sigma^2=1.909$. The optimisation process takes roughly 100 steps at a rate of 0.009 and is unstable (does not converge and produces infinite values if learning rate is increased). }
    \end{subfigure}\hfill
    \begin{subfigure}{0.45\textwidth}
        \centering
        \includegraphics[width=\textwidth]{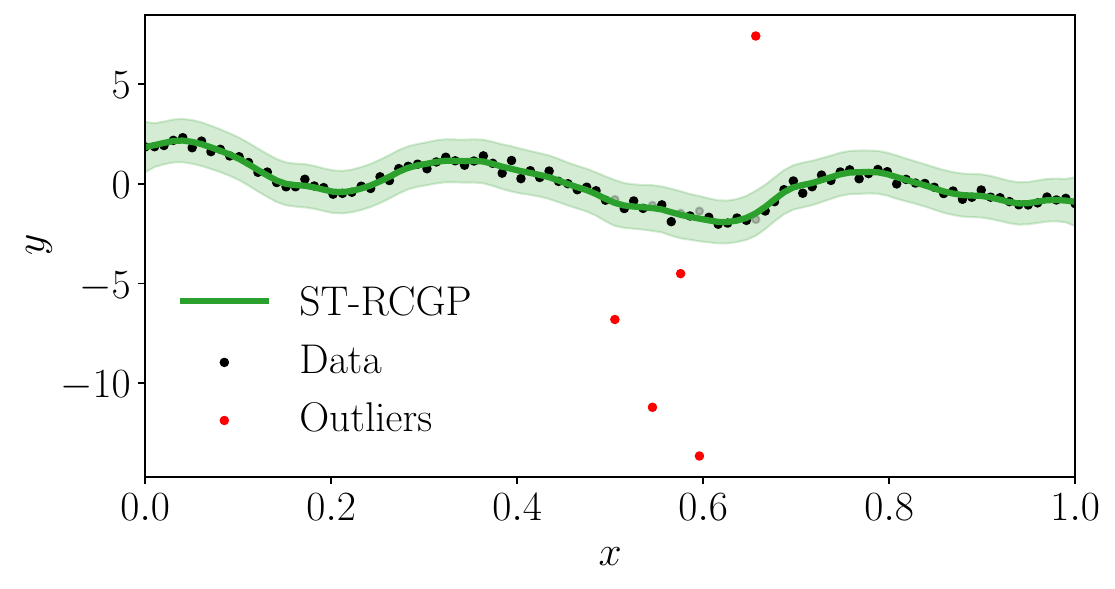}
        \caption{\textit{Fitting ST-RCGP on Outlier Data With $\varphi_{\text{GB}}$.} The hyperparameters obtained are: Lengthscale $\ell=0.112$, amplitude $\sigma_\kappa=4.162$, observation noise $\sigma^2=0.094$. The optimisation process takes roughly 25 steps at a rate of 0.2 and is considerably more stable.}
    \end{subfigure}
    \vspace{-4mm}
    \caption{\textit{Impact of objective function on ST-RCGP's hyperparameter optimisation.} The data is generated in the same way as for \cref{fig:STGP_fit_outliers}. Both models use a Mat\'{e}rn 3/2 kernel and centering and shrinking function as specified in \cref{Sec: Methods}. }
    \label{fig:ST-RCGP_fits_both_obj}
\end{figure}

\paragraph{Improving ST-RCGP's Hyperparameter Optimisation: Spatio-Temporal Setting} We turn our attention to spatio-temporal data and make the same point as in the previous paragraph, which is that choosing $\varphi_\text{GB}$ is more reliable than using the standard $\varphi$. We use the following objective:
\begin{equation*}
    \varphi_\text{GB}(\bm{\theta}) := \sum_{k=1}^{n_t} \tilde{w}_k \left (\log|2\pi\mathbf{S}_k(\bm{\theta})| + \bm{\varepsilon}_k^\top(\bm{\theta})\mathbf{S}_k^{-1}(\bm{\theta})\bm{\varepsilon}_k(\bm{\theta}) \right),
\end{equation*}
where now, the representation $\tilde{w}_k(\mathbf{w}_k)$ of the weights $\mathbf{w}_k$ at time step $t_k$ is given by:
\begin{equation*}
    \tilde{w}_k := \frac{Q_{k, n_s}(\delta)}{\sum_{i=1}^{n_t}Q_{i,n_s}(\delta)},
\end{equation*}
where $Q_{k, n_s}(\delta)$ is the $\delta$-quantile of the weights $\mathbf{w}_k$. In this experiment, we choose $\delta=0.05$. The data we use is identical to that for \cref{appendix:synthetic_spatio_temporal}, except that the outliers are only introduced at steps $k=2\; \text{and} \;6$, and the rest remains uncontaminated. As a performance metric, we use the cumulative mean absolute difference (CMAD) between our estimate of the latent function and the true latent function (not the observations), which is given by $\text{CMAD} = \sum_{k=1}^{n_t}\frac{1}{n_s}\sum_{j=1}^{n_s}|\mathbf{f}_{k,j} - \mathbf{\hat{f}}_{k,j}|$. 

We fit the data with an ST-RCGP that has a Mat\'{e}rn 3/2 kernel, and a centering and shrinking function as specified in \cref{Sec: Methods}.  We compare using $\varphi$ and $\varphi_\text{GB}$ and optimise for 30 training steps with an Adam optimiser with a learning rate of 0.1. We run the process five times with newly sampled data and outliers, keeping the latent function and the distributions from which the data and outliers are sampled the same. We find that $\varphi$ yields $\text{CMAD}$ values of $[1.9974, 1.6703, 1.9517, 1.6100, 1.7321]$, whereas $\varphi_\text{GB}$ produces $\text{CMAD}$ values of $[0.4633, 0.5385, 0.5058, 0.4479, 0.5321]$. The $\varphi_\text{GB}$ objective yields substantially smaller $\text{CMAD}$ values, and thus provides a more reliable hyperparameter objective function. We illustrate this in \cref{fig:ST-RCGP-fitting-outliers-spatio-temporal}, where we show three time steps of the first run (out of the five) for both objective functions. 

\begin{figure}[h!]
    \centering
    \begin{subfigure}{0.45\textwidth}
        \centering
        \includegraphics[width=\textwidth]{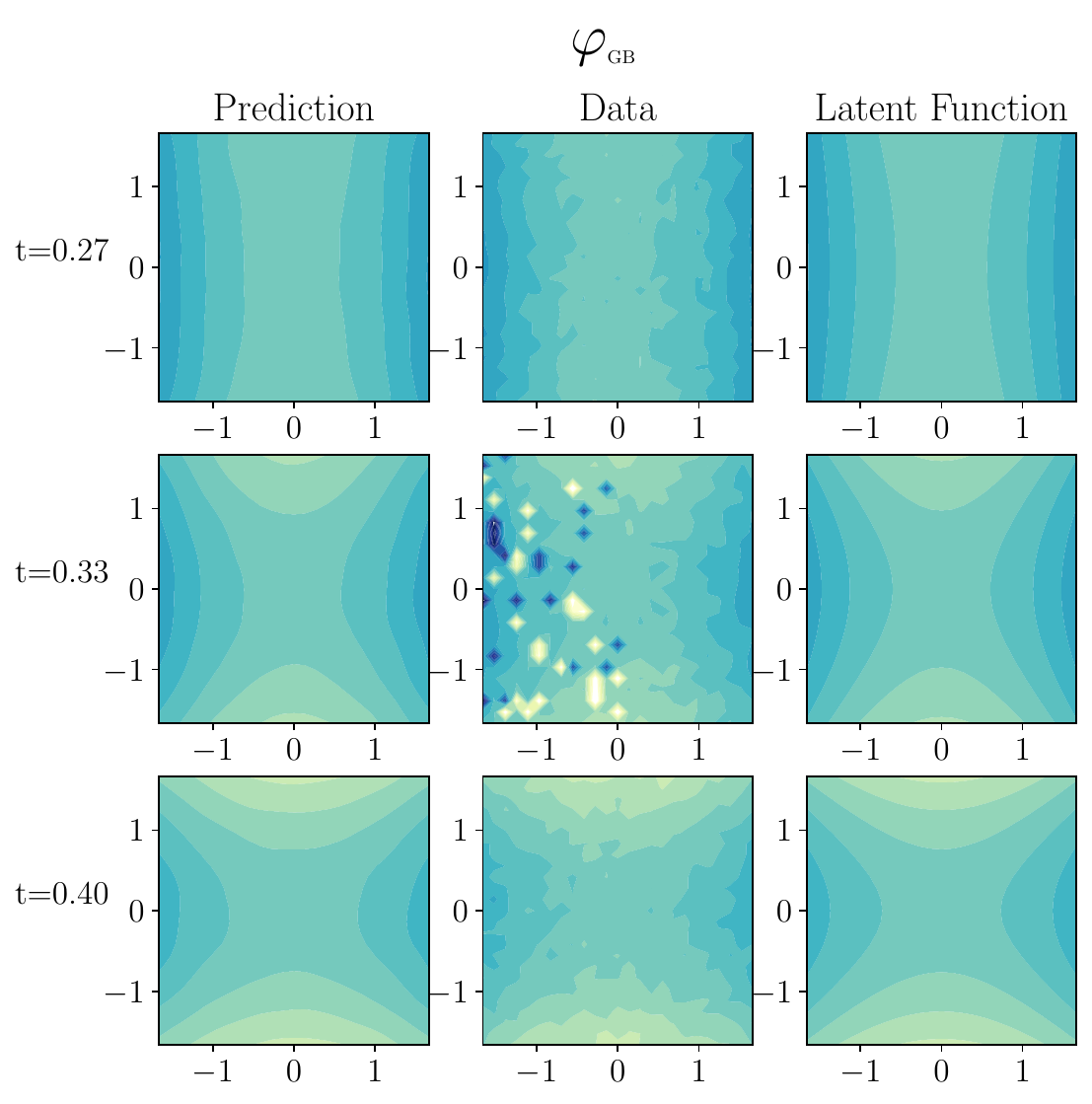}
        \caption{\textit{Fitting ST-RCGP on Outlier Data with the Robust $\varphi_\text{GB}$.}}
    \end{subfigure}\hfill
    \begin{subfigure}{0.45\textwidth}
        \centering
        \includegraphics[width=\textwidth]{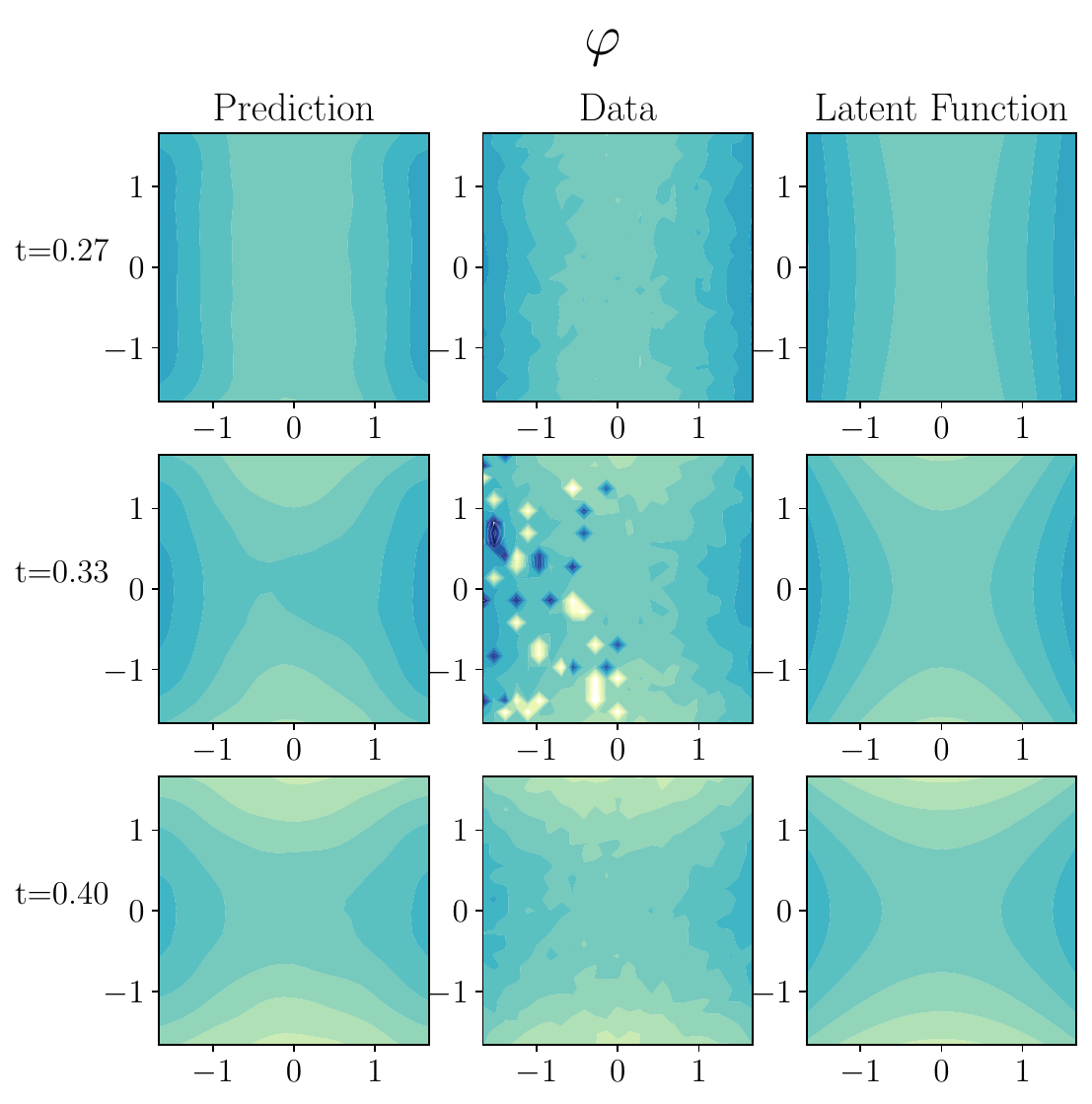}
        \caption{\textit{Fitting ST-RCGP on Outlier Data with the Regular $\varphi$.}.}
    \end{subfigure}
    \caption{\textit{ST-RCGP Fits Using The Objective Functions $\varphi_\text{GB}$ and $\varphi$.}}
    \label{fig:ST-RCGP-fitting-outliers-spatio-temporal}
\end{figure}

\subsection{Synthetic Spatio-temporal Problem in \Cref{fig: simulated-spatio-temporal}}\label{appendix:synthetic_spatio_temporal}

A grid of size $n_s = 25\times 25$ is produced and repeated through $n_t=10$ time steps between $t_0=0.2$ and $t_f=0.8$. 
We use a latent function $f(s_1, s_2, t) = \sin (2 \pi t) s_1^2 + \cos(2\pi t)s_2^2$ and generate additive noise from a $\mathcal{N}(0, \sigma^2)$, where we set $\sigma = 0.2$. 
At any time step, if a data point is located in the region $s_1 < 0$, with $10\%$ probability, we contaminate the data point with an outlier sampled from $U([-8, -6] \cup [6,8])$.

For both the STGP and ST-RCGP, we fit the data with the optimisation objective $\varphi$ and use de-contaminated data (original data without outliers) for the objective. We use the Adam optimiser with 20 training steps and 0.4 learning rate. The two algorithms use a Mat\'{e}rn 3/2 kernel. The ST-RCGP uses the adaptive centering and shrinking function from \cref{Sec: Methods}.

\subsection{RCGP Issues from \cref{Sec: Background}} \label{sec:rcgp-issues}
The data is generated by adding noise sampled from a $\mathcal{N}(0,\sigma=0.5)$ to a latent function $f(x)=3\sin(2 \pi x)$ on a temporal grid $x \in [0, 1.4]$ with $n_t=80$ points. We substitute at 8 locations outliers drawn from a $\mathcal{N}(3, \sigma=0.2)$. 

The RCGP optimisation process we use for kernel hyperparameters and observation noise is the one recommended in \citet{altamirano2024robustconjugategaussianprocess}. Note that we conduct optimisation on the original data without outliers. We have three configurations: First, with constant prior mean $m(x)$ equal to the data average and $c=Q_N(0.9)$; second, with $m(x)=\sin(2\pi x)$ and $c=Q_N(0.9)$; third, with $m(x)=\sin(2\pi x)$ and $c=0.8$. All configurations use a Squared Exponential kernel and are separately optimised (that is, they do not necessarily share hyperparameters). Note that inference is conducted on data with outliers.

\subsection{ST-RCGP Posterior when Varying $c$ and $\beta$}
We generate $N=n_t=100$ data points from a GP prior with prior mean $m=0$ and Mat\'{e}rn 3/2 kernel with $\ell=0.2,\sigma_\kappa=2.0$. We add noise from a $\mathcal{N}(0, \sigma^2)$ where $\sigma^2=0.25$. Outliers are generated by adding noise at 5 temporal locations drawn from a $\mathcal{N}(0,\sigma)$ with $\sigma=20$.

The ST-RCGP uses a Mat\'{e}rn 3/2 kernel with lengthscale $\ell=0.198$ and amplitude $\sigma_\kappa=3.01$. We use an adaptive centering function for all values of $\beta$ and $c$. Then, we perform inference with varying values of $\beta$ and $c$ to conduct a sensitivity analysis of those hyperparameters. The results are shown in \cref{fig: varying-c-beta}.

We notice that increasing $\beta$ will lead to overfitting the data. Conversely, decreasing $\beta$ too much yields underconfident uncertainty estimates. The middle ground is $\beta=\sigma/\sqrt{2}$, which yields mean and uncertainty estimates that are appropriate, thus supporting our selection of hyperparameters. 

For $c$, \cref{fig: varying-c-beta} shows that a lower $c$ will be more robust to outliers, but may also overinflate the uncertainty estimate. On the other hand, a $c$ that is too large will produce an algorithm closer to the STGP, which is not robust to outliers. Visibly, the adaptive choice for $c$ performs the best, confirming our choice of hyperparameter.

\begin{figure}[h!]
    \centering
\includegraphics[width=0.7\linewidth]{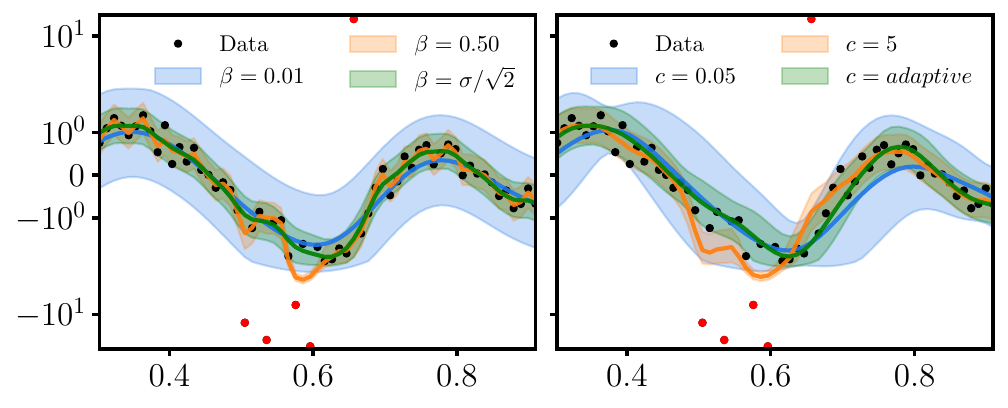}
  \caption{\textit{Impact of $c$ and $\beta$ on the posterior.} We keep all parameters other than $\beta,c$ identical and choose $\mathbf{m}_k = \hat{\mathbf{f}}_k$. Outliers are highlighted in red. When $\beta$ increases (or decreases), the rate at which we learn from data increases (or decreases), and the confidence intervals narrow (or widen). When $c$ decreases, we increase robustness at the cost of larger posterior uncertainty. These tendencies support $\beta \propto \sigma$ and $c = \sqrt{\sigma^2 + \sigma_f^2}$ where $\sigma_f$ is the standard deviation of the predictive filtering posterior $p(y_k|y_{1:k-1})$.}
  \label{fig: varying-c-beta}
\end{figure}

\subsection{Well-Log Dataset}
The well-log dataset, first introduced by \citet{ruanaidh2012numerical}, comprises 4,050 nuclear magnetic resonance measurements collected during the drilling of a well. Change points in the sequence indicate transitions between sediment layers encountered by the drill. In addition to these distinct transitions, the data also includes outliers and noise caused by shorter-term geological events, such as floods, earthquakes, or volcanic activity. In \cref{fig:well-log}, we contrast the results obtained from the ST-RCGP and the STGP. They perform equally well in well-specified regions, but the STGP lacks robustness to outliers.
\begin{figure}[h!]
    \centering
    \includegraphics[width=0.75\linewidth]{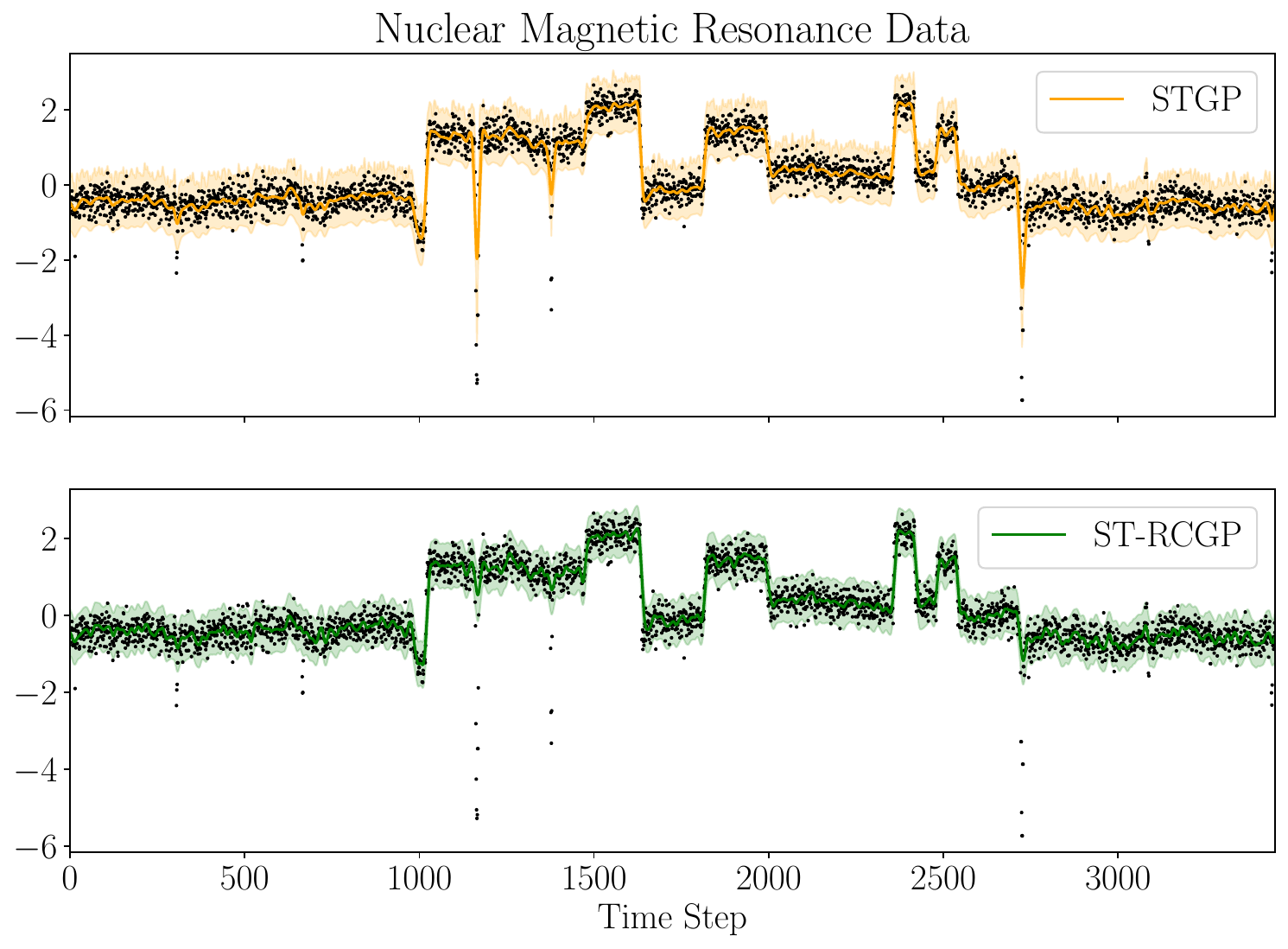}
    \caption{\textit{Well-log data}. The top and bottom panels show the STGP and ST-RCGP fits to the data, respectively. }
    \label{fig:well-log}
\end{figure}

\subsection{Sensitivity Analysis on IMQ Exponent}
To understand how the shape of the weight function affects ST-RCGP's posterior estimates, we conduct a sensitivity analysis. We examine how varying the IMQ exponent---currently $\alpha:=-1/2$--- impacts results. We conduct this analysis because \cref{prop:robustness} shows robustness requires weights to decay faster than $1/\sqrt{|y|}$, i.e., $\alpha < -1/4$; but, overly fast decay can reduce statistical efficiency (overly robust)---highlighting a tradeoff worth exploring. The result of this analysis is shown in \cref{fig:IMQ-sensitivity}. 
\begin{figure}[h!]
    \centering
    \includegraphics[width=0.75\linewidth]{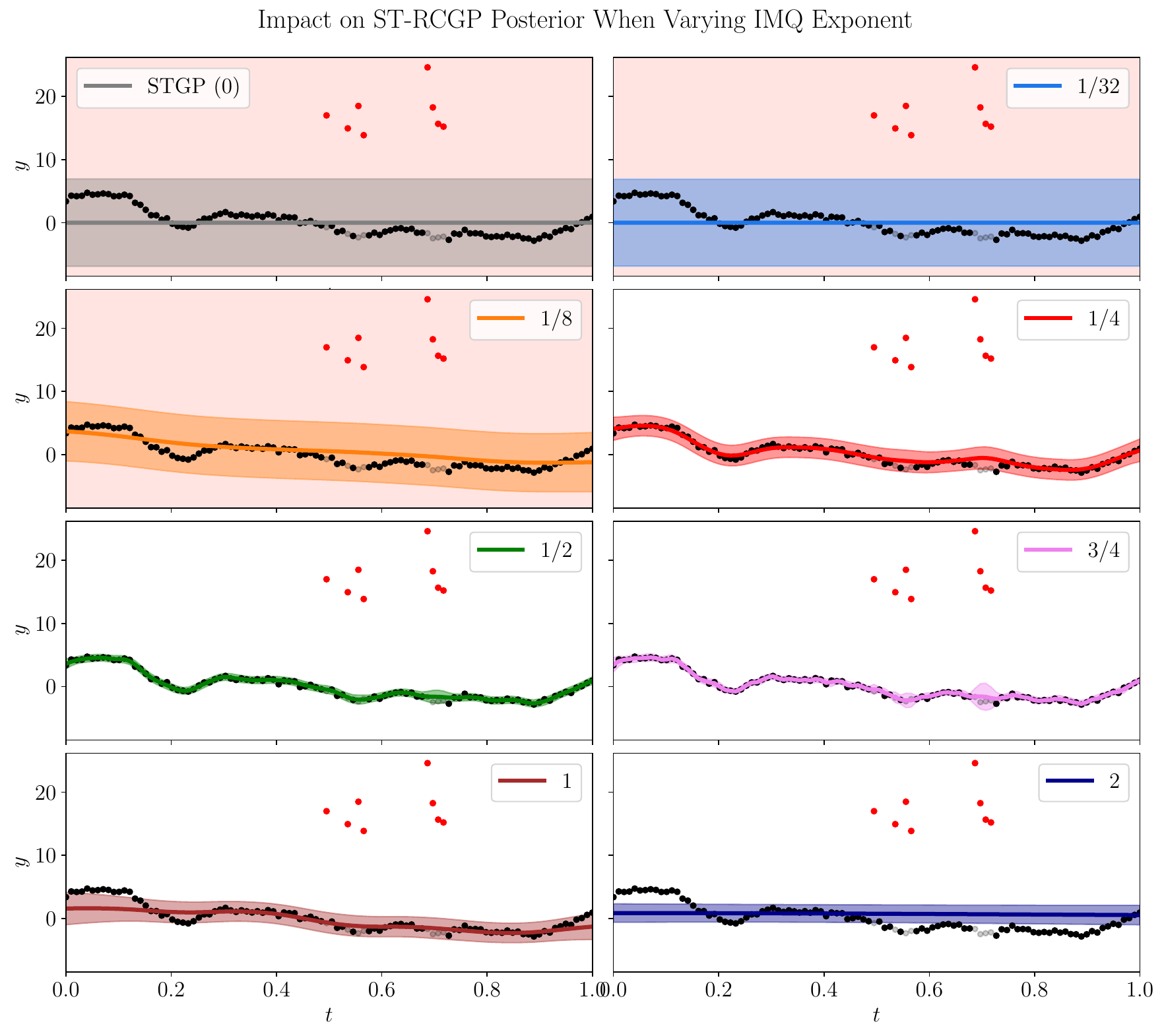}
    \caption{\textit{Impact on ST-RCGP fit when varying the IMQ exponent.} The red-shaded plots indicate values of $\alpha$ that violate the robustness condition, i.e. $\alpha > - 1/4$. The figures are denoted by $|\alpha|$, so that, for example, the top-right panel corresponds to $\alpha = -1/32$. Outliers are highlighted in red. }
    \label{fig:IMQ-sensitivity}
\end{figure}
\vspace{-1em}
\subsection{Sensitivity Analysis on Centering and Shrinking Functions}
To capture the impact of the centering and shrinking functions on the posterior results, we conduct an analysis where we compare ST-RCGP and RCGP when all parameters are kept as in RCGP, apart from the centering function. We further explore the impact of also altering the shrinking function. To do so, we generate data with outliers the same way as in \cref{appendix:fix_vanilla_RCGP}. We then plot the posterior distributions of both algorithms in \cref{fig:rcgp-vs-strcgp-centering} with a confidence interval (CI) of $3 \sigma$. 
\begin{figure}[h!]
    \centering
    \includegraphics[width=0.6\linewidth]{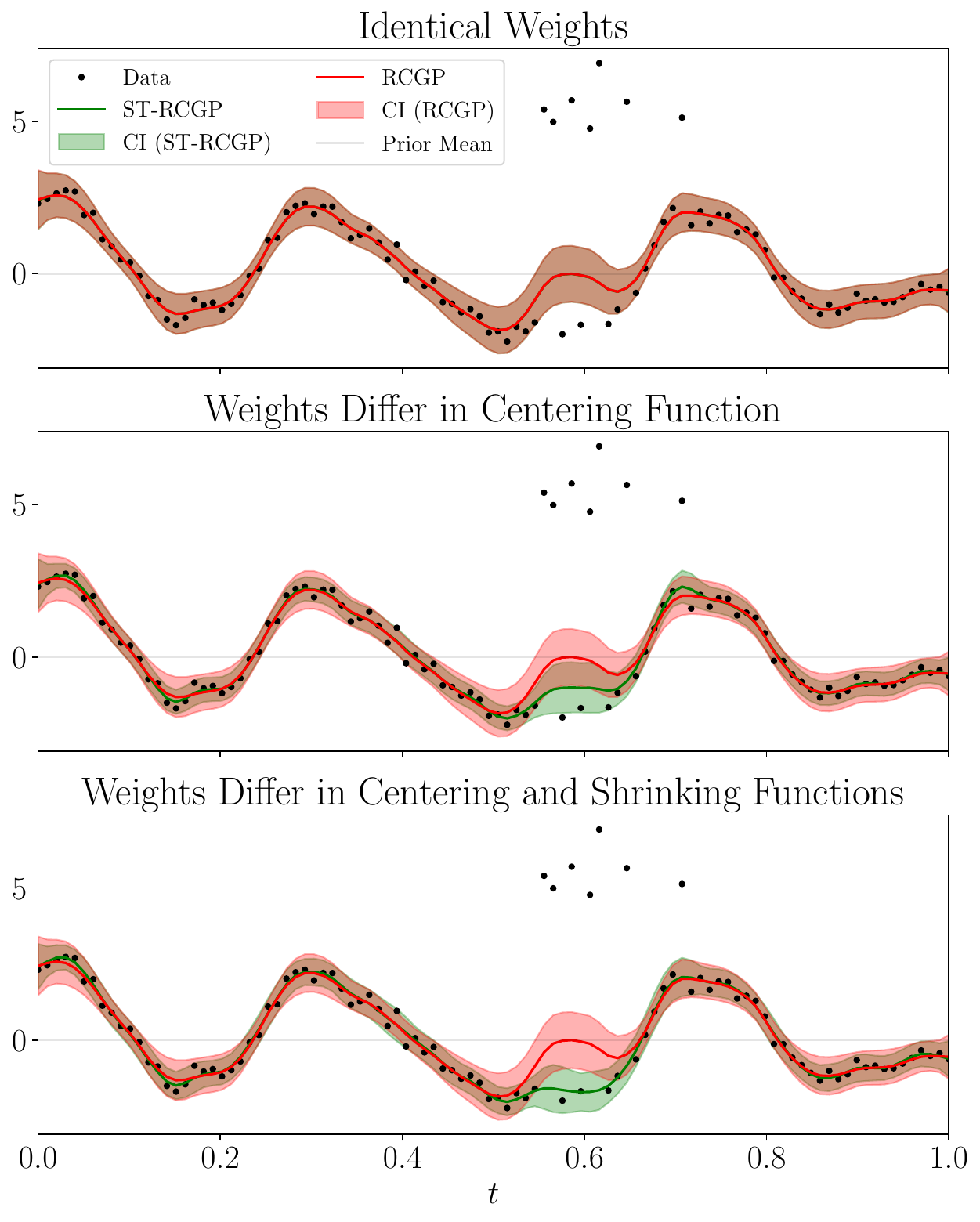}
    \caption{\textit{ST-RCGP and RCGP comparison}. We contrast the two algorithms as we progressively change the specification of the weight function. The CI corresponds to the $3\sigma$ confidence interval. The prior is the constant function used in the weight function of the RCGP and the ST-RCGP in the first plot. In the second plot, the centering function $\gamma$ of the ST-RCGP is the predictive mean. In the third plot, the shrinking function $c$ is the filtering predictive's covariance, as specified in \cref{sec: choice_weight_and_mean}.}
    \label{fig:rcgp-vs-strcgp-centering}
\end{figure}

\subsection{Effect of Outlier Timing on ST-RCGP Posterior}
We investigate the impact of changing the point in time at which outliers are introduced on the posterior of the ST-RCGP. The data is generated as in \cref{appendix:exp_wellspecified}. The results are shown in \cref{fig:impact-strcgp-outlier-timing}.
\begin{figure}[h!]
    \centering
    \includegraphics[width=0.7\linewidth]{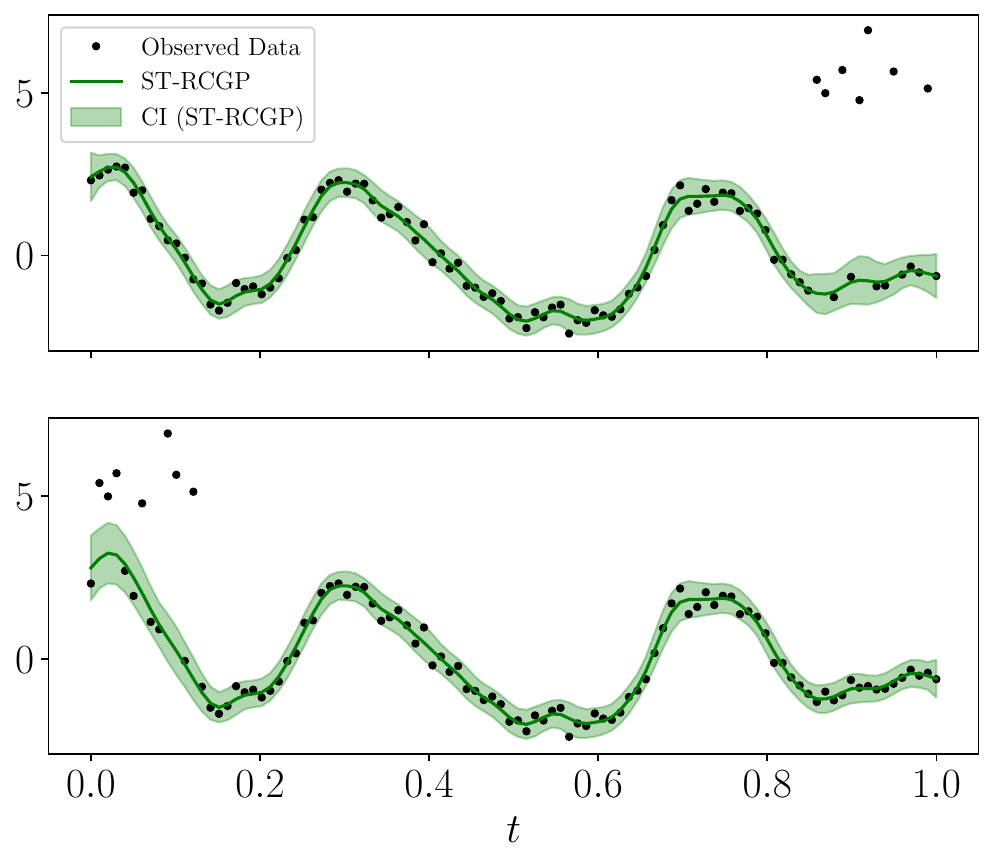}
    \caption{Impact of outlier timing on ST-RCGP Posterior.}
    \label{fig:impact-strcgp-outlier-timing}
\end{figure}

\subsection{Choices of Centering Function in Special Cases}
\label{appendix:centering_function}

In \cref{tab: choice_centering_function}, we highlight a few potential alternative choices other than ST-RCGP's filtering predictive $\hat{\mathbf{f}}_k$ for the centering function depending on whether the data is temporal or spatio-temporal. Also, we demonstrate which choices of centering functions recover the STGP and RCGP. The following points briefly explain the relevant features of each choice:
\begin{itemize}
    \item Data: This choice, which yields constant weights and thus recovers the STGP, implies that our best estimate of the center of the data is the current observation. Interestingly, if the observation is an outlier, this implies we center the algorithm around the outlier. This provides an intuitive understanding for why the STGP fails to be robust.
    \item Prior Mean: Choosing for our centering function the prior mean $m(\mathbf{x}_k)$ recovers the RCGP (assuming a constant $c$ for the ST-RCGP). However, as previously explored in \cref{Sec: Experiments}, this choice can
    yield poor results when the prior mean aligns with the outliers. 
    \item Spatial Smoothing: This choice requires an additional hyperparameter (potentially many) that comes from $\bar{\mathbf{K}}_s$. The matrix $\bar{\mathbf{K}}_s$ dictates how much we want to smooth our data at time step $t_k$. This can be useful when there are unusual spatial structures in the data that $\bar{\mathbf{K}}_s$ can capture. Alternatively, when there are few time steps, the filtering predictive might not be the best estimate of the center of the data, in which case a good way to estimate the center that is more appropriate than a simple average is to use $\bar{\mathbf{K}}_s$.
    \item Temporal Smoothing: This centering function employs the same concept as the ``Spatial Smoothing". It requires a lookback period $n_l$ and weights $\psi_i$ that determine how important the datum at step $i$ is to estimate the center of the data at step $k$.
    \item Filtering Predictive: This is the choice we make in this paper for ST-RCGP and explained in \cref{Sec: Methods}. It involves no additional hyperparameters, and is our model's best (and robust) prediction of $\mathbf{y}_k$ given past observations. These reasons are why we choose this option over the other for the ST-RCGP.
    \item Smoothed Predictive (S): Same concept as ``Spatial Smoothing," but the smoothing is applied on predictions instead of on the data.
    \item Smoothed Predictive (T): Same concept as ``Temporal Smoothing," but the smoothing is applied on predictions instead of on the data.
\end{itemize}

\begin{table}[h!]
    \centering
    \caption{\textit{Choice of Centering Function $\bm{\gamma}_k$ For Weights $\mathbf{w}_k$}. $\psi_i$'s for $i=n_l,...,k$ are normalised weights selected from the $n_l$-th past time step $t_{n_l}$ (lookback period). $\tilde{\mathbf{K}}_s \in \mathbb{R}^{n_s \times n_s}$ is a row-wise normalized kernel matrix.}
    \label{tab: choice_centering_function}
    \begin{tabular}{lcc}
        \toprule
        \textbf{Algorithm} & \textbf{Centering} $\bm{\gamma}(\mathbf{x})$ & \textbf{Description}  \\
        \midrule \midrule
        STGP            & $\mathbf{y}_k$       & Data                  \\
        RCGP            & $m(\mathbf{x}_k)$   & Prior Mean         \\
        \midrule
        \multirow{5}{*}{ST-RCGP} & $\tilde{\mathbf{K}}_s \mathbf{y}_k$ & Spatial Smoothing   \\
                 & $\sum_i \psi_i \mathbf{y}_i$  & Temporal Smoothing \\
                 & $\hat{\mathbf{f}}_{k}\equiv\mathbf{H}\mathbf{m}_{k|k-1}$ & Filtering Predictive \\
                 & $\tilde{\mathbf{K}}_s \hat{\mathbf{f}}_{k}$ & Smoothed Predictive (S) \\
                 & $\sum_i \psi_i \hat{\mathbf{f}}_{i}$ & Smoothed Predictive (T) \\
        \bottomrule
    \end{tabular}
\end{table}

\subsection{Experiments Showing we Fix Vanilla RCGP from \Cref{sec:fix_vanilla_RCGP}} \label{appendix:fix_vanilla_RCGP}
\paragraph{Data} We generate $n_t=200$ data points at evenly spaced inputs in $x=[0,1]$ from a GP with Mat\'{e}rn 3/2 kernel with lengthscale $\ell=0.1$ and amplitude $\sigma_\kappa^2=2$, and mean function $m(x)=2e^{-5x}$. The sample function $\mathbf{f}=(f_1, \dots, f_{n_t})$ drawn is then centered: $\mathbf{f} \rightarrow \mathbf{f} - \bar{\mathbf{f}}$, where $\bar{\mathbf{f}}:=\frac{1}{n_t}\sum_{i=1}^{n_t}f_i$. Noise is added and drawn from a $\mathcal{N}(0, \sigma^2)$ with $\sigma^2=0.25$. We contaminate the data with 10 outliers $|y_i^c|$ for $i=1,...,10$ sampled from a $\mathcal{N}(5, \sigma^2=1)$. These outliers replace the data at specified locations. We keep both the original data and the contaminated data for further tasks. 
\paragraph{Fit} To fit the RCGP, we use the code from and follow \citet{altamirano2024robustconjugategaussianprocess}. We choose a constant weight function equal to the mean of the data, and $c=Q_N(0.95)$ since there are 5\% outliers. The hyperparameters are optimized on the original data (without outliers) since training RCGP on contaminated data would result in overfitting the outliers and unreliable predictions. However, RCGP predictions are made on contaminated data.

To fit the ST-RCGP, we use the Adam optimiser and the robust scoring objective $\varphi_{\text{GB}}$ from \cref{Sec: Methods}. The hyperparameter selection is as in \cref{Sec: Methods}. Our learning rate is 0.3, and the number of optimisation steps is 70 (30 would be enough; we've used more to study convergence). In contrast to RCGP, the ST-RCGP is both training and predicting on contaminated data. 

\paragraph{Coverage} The coverage values are computed given a prediction $\mu$ and standard deviation $\sigma$. For each quantile, we find a corresponding $z$-score, and determine the proportion of data points falling within $\mu \pm z \sigma$. 

\subsection{Experiments in Well-specified Settings from \Cref{sec:experiments_wellspecified}} \label{appendix:exp_wellspecified}
We use the same dataset as in \cref{appendix:synthetic_spatio_temporal}, from which we can select the outlier-free data or the contaminated data.
We use a Mat\'{e}rn 3/2 kernel for the GP prior.
First, we perform hyperparameter optimisation on the dataset with outlier-free data. This involves 25 training steps using the Adam optimiser and a learning rate of 0.3. The criterion we use as our optimisation objective is the standard $\varphi$ (since there are no outliers).
Second, we obtain the performance metrics for each model. This is done by generating new data as previously (from \cref{appendix:synthetic_spatio_temporal}). For each newly generated data, we compute statistical efficiency, RMSE and NLPD for each model. We take the average and standard deviation to report our metrics.

\subsection{Experiments with Financial Crashes from \Cref{sec:experiments_financial_crashes}} \label{appendix:experiment_financial_crashes}
\paragraph{Twitter Flash Crash}
We retrieve the "close'' data from the DJIA index on April 23rd, 2013, and the previous day. This amounts to 810 data points. We build an evenly spaced temporal grid from 0 to 1 with 810 points. The observations are then standardised. 

The GP fit is implemented in Python's \textit{sklearn} package and uses a Mat\'{e}rn 3/2 kernel with amplitude $\sigma_\kappa=0.72$, lengthscale $\ell=0.0955$, observation noise $\sigma=0.02$, and prior mean $m=0$. 
The RCGP fit uses a Mat\'{e}rn 3/2 kernel with  $\sigma_\kappa=1$, lengthscale $\ell=0.09$, a and a constant prior mean equal to the average of the data. Also, RCGP has $c=0.25$, since it offers a more robust posterior than $c=Q_N$. 
The ST-RCGP fit uses a Mat\'{e}rn 5/2 kernel with amplitude $\sigma_\kappa=1.$, lengthscale $\ell=0.1$, $\sigma^2=0.02$ and has an adaptive shrinking and centering function, as specified in \cref{Sec: Methods}.
The RP fit is exactly the one from \citet{ament2024robustgaussianprocessesrelevance} since it is obtained using the same code.

The execution time is computed post-optimisation of each method, since we wish to capture execution time at inference. Also, to avoid caching and establish a fair comparison, each model has a second instance specifically for inference-making that hasn't observed data yet but has the optimised hyperparameters.

\paragraph{Index Futures with Synthetic Crash}
The data is obtained from \url{https://www.kaggle.com/c/caltech-cs155-2020/data}. It captures an Index Futures price over time, measured at 500ms intervals. We select $N=46800$ data points, which amounts to a trading day. We built a temporal grid between 0 and 46800 and rescaled it by 0.5 ($500 \times 0.001$). The observations are then standardised. The crash induced aims to mimic the  Twitter crash incident, but with slightly more outliers. Therefore, we drop 8 data points by $[0.9995, 0.9994, 0.9992, 0.996, 0.994, 0.998, 0.998, 0.9998]$ of their original value (not standardised) to create a V-shaped outlier region and add random noise to the drop sampled from a $\mathcal{N}(0, \sigma)$ for $\sigma=0.0001$. Note that the amplitude of this drop is roughly similar to that of the Twitter flash crash experiment. 

The STGP has a Mat\'{e}rn 3/2 kernel with $\ell=6$, $\sigma_\kappa=6$, and observation noise $\sigma=0.14$. The ST-RCGP has a Mat\'{e}rn 3/2 kernel with $\ell=6.5$, $\sigma_\kappa=1$, and observation noise $\sigma=0.3$. It also uses an adaptive centering and shrinking function as in \cref{Sec: Methods}. The \textit{BayesNewton} methods all use a Student-T likelihood with $\text{df}=6$ (degrees of freedom), except for the Laplace method, which has $\text{df}=4$. They are also optimized following the code from \citet{wilkinson2023bayes}.

The execution times are computed 5 times and evaluated on a new instance of the model's class to avoid caching issues. The one-step cost for each method is obtained by evaluating execution time for an increasing number of data points ($N=5,10,100,500,1000,2000,2500,...,35000,40000,46800$), and then performing linear regression. The slope corresponds to the one-step cost. The RMSE and NLPD standard deviations are obtained by repeating inference on a newly generated crash event a total of 20 times. 

In addition to \cref{tbl: HFT-rmse-and-speed-comp}, \cref{fig:exp-induced-crash} illustrates the fit of each method to the data.
\begin{figure}[h!]
    \centering
    \includegraphics[width=0.7\linewidth]{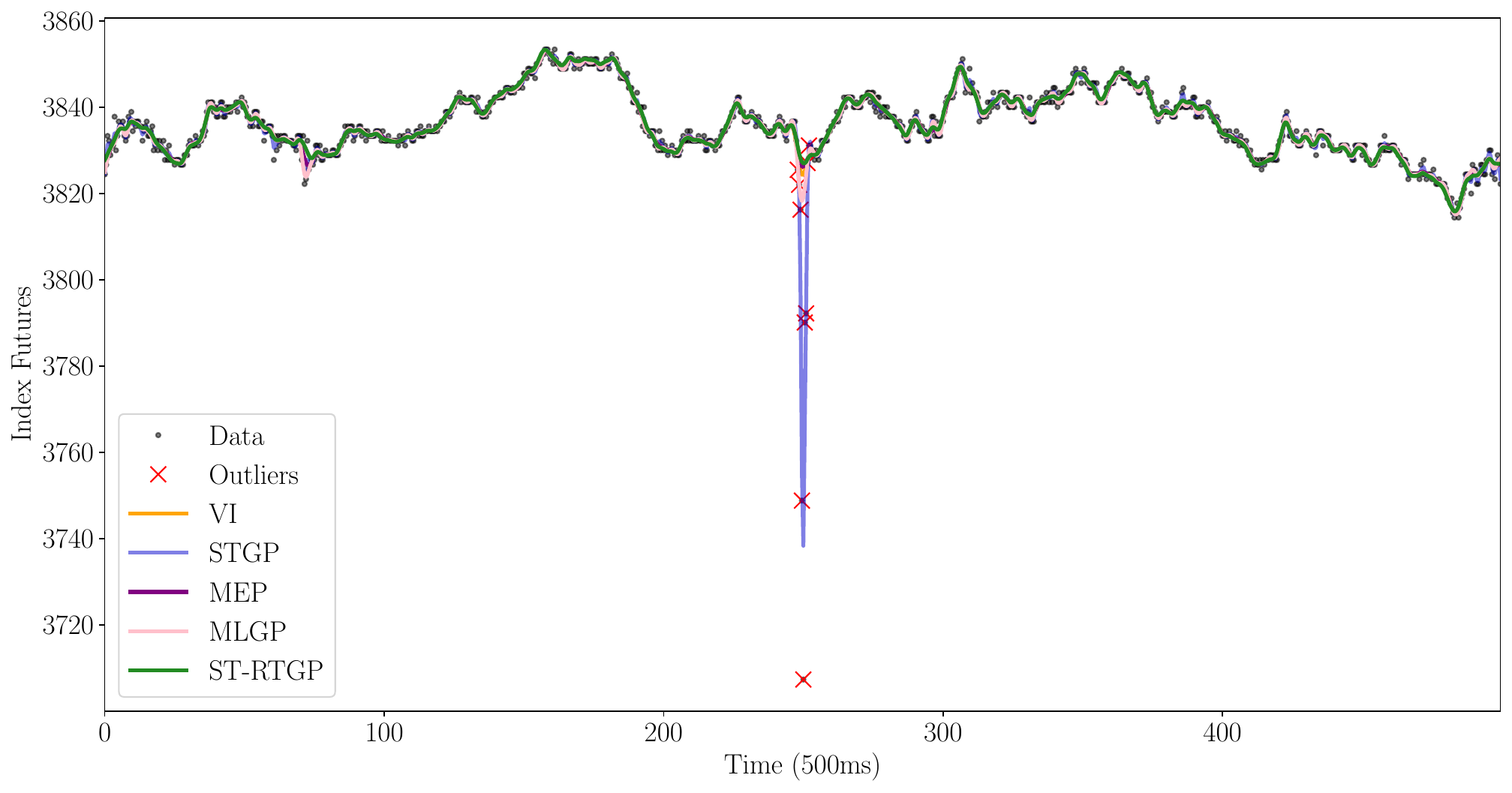}
    \caption{\textit{Fitting the STGP, ST-RCGP and some methods from the \textit{BayesNewton} package to the Index futures data with a synthetically induced crash. } }
    \label{fig:exp-induced-crash}
\end{figure}

\subsection{Experiments with Temperature Forecasting from \Cref{sec:experiments_temperature}} \label{appendix:temp-forecast}
The data is from the Climate Research Unit (CRU) and is available at \url{https://crudata.uea.ac.uk/cru/data/hrg/}. We select latitude and longitude ranges of $[45,60]$ and $[-12,8]$ respectively, which amounts to $n_s=571$ spatial locations per time step. The data is monthly, starting in January 2022 and ending in December 2023, which is a total of $n_t=24$ time steps. This leads to $N=n_t \times n_s =11,991$ data points. We add 6 focussed outliers on a patch with latitudes and longitudes in $[51.25, 53.25], [-3, -1]$ respectively. The outliers are drawn from a $\mathcal{N}(120, \sigma^2)$ with $\sigma=10$. Before fitting the data, we pre-process it by standardising.

To fit the data, we use the standard objective $\varphi$ (since the data has been cleaned by the CRU beforehand, see \citet{harris2020version}) with Adam optimiser, 60 optimisation steps and 0.05 learning rate. Both the STGP and the ST-RCGP use a Mat\'{e}rn 3/2 kernel for the temporal and spatial kernels. For STGP, this yields the following hyperparameters: A Temporal amplitude of $\sigma_{\kappa_t}=0.76$, temporal lengthscale $\ell_t=2.51$, spatial amplitude $\sigma_{\kappa_s}=0.76$, spatial lengthscale $\ell_{s}=2.42$, and variance $\sigma^2=0.15$. For ST-RCGP, the hyperparameters are: A Temporal amplitude of $\sigma_{\kappa_t}=1.12$, temporal lengthscale $\ell_t=2.13$, spatial amplitude $\sigma_{\kappa_s}=0.82$, spatial lengthscale $\ell_{s}=2.34$, and variance $\sigma^2=0.069$. 
In \cref{fig:coverage-temp-exp}, we demonstrate the coverage of ST-RCGP and STGP on the temperature forecasting experiment for the month of October, which occurs before the introduction of outliers. 
\begin{figure}[h]
    \centering
    \includegraphics[width=0.5\linewidth]{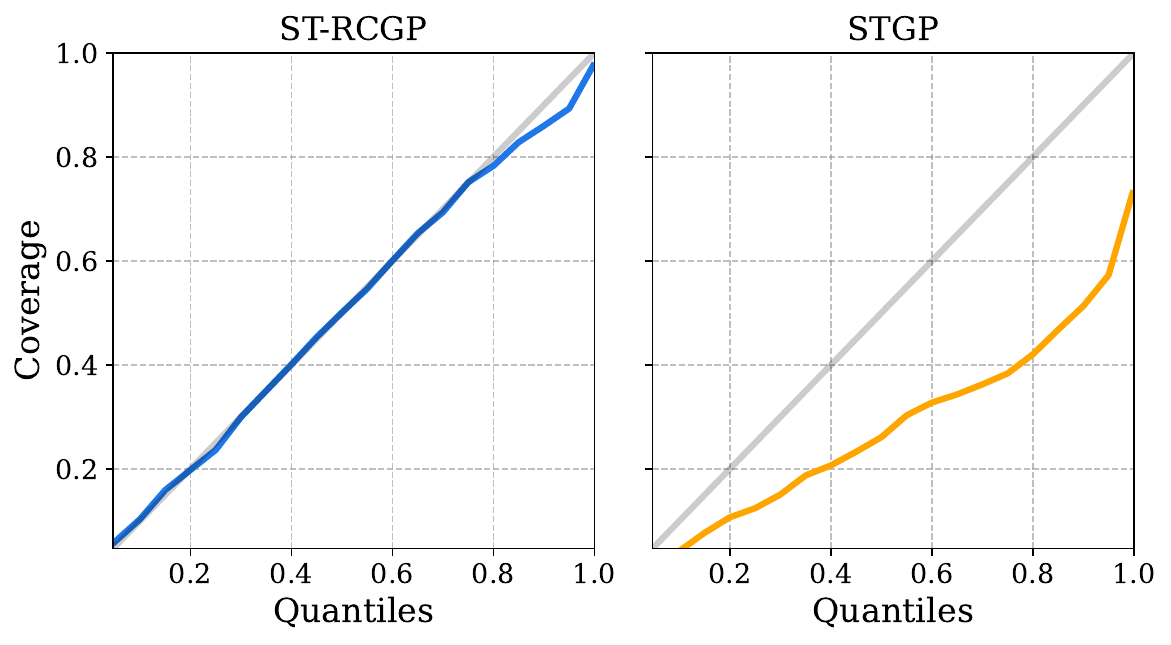}
    \caption{\textit{STGP and ST-RCGP Coverage during the month of October.} }
    \label{fig:coverage-temp-exp}
\end{figure}

\end{document}